\DeclareMathOperator{\blkdiag}{blkdiag}
\DeclareMathOperator{\trace}{trace}
\DeclareMathOperator{\vect}{vec}
\DeclareMathOperator{\range}{range}
\DeclareMathOperator{\spec}{spec}
\DeclareMathOperator{\image}{image}
\definecolor{myblue}{RGB}{231, 245, 254}
\newtheorem{theorem}{Theorem}
\renewcommand{\t}{^{\mbox{\tiny\sf T}}} 
\newcommand{\R}{\mathbb{R}}
\def\send#1#2{\stackrel{#1}{\hbox to #2{\rightarrowfill}}}
\def\-{\!\!\!\!\!-}
\newcommand{\rank}{{\rm rank\;}}
\newtheorem{lemma}{Lemma}
\newtheorem{remark}{Remark}
\newtheorem{proposition}{Proposition}
\newtheorem{corollary}{Corollary}
\newtheorem{property}{Property}
\newtheorem{assumption}{Assumption}
\def\R{{\rm I\!R}} 
\newcounter{seqn}[equation]
\def\theseqn{\arabic{equation}\alph{seqn}}
\def\endseqn{\eqno \@seqnnum
$$\ignorespaces}
\def\@seqnnum{(\theseqn)}
\newskip\mcentering \mcentering=0pt plus 1000pt minus 1000pt
\def\meqalignno#1{
\halign to\displaywidth{
    \hbox to 0pt{\kern\displaywidth\llap{$##$}\hss}\tabskip=\mcentering
    &\hfil$\displaystyle{##}$\tabskip=\mcentering
   &&$\displaystyle{{}##}$\hfil\tabskip=\mcentering
    \crcr
    #1\crcr}}
\def\dspace{\multiply\normalbaselineskip 150
		  \divide\normalbaselineskip 100 \normalbaselines
		  \csname @@normalbaselineskip\endcsname\normalbaselineskip}
\def\sspace{\multiply\normalbaselineskip 200
		 \divide\normalbaselineskip 300 \normalbaselines
		 \csname @@normalbaselineskip\endcsname\normalbaselineskip}
\def\sdspace{\multiply\normalbaselineskip 160
		 \divide\normalbaselineskip 150 \normalbaselines
		 \csname @@normalbaselineskip\endcsname\normalbaselineskip}
\def\@{\tilde}
\def\3dot#1{\buildrel\textstyle...\over#1}
\renewcommand{\R}{\mathbb{R}}
\def\BibTeX{{\rm B\kern-.05em{\sc i\kern-.025em b}\kern-.08em
    T\kern-.1667em\lower.7ex\hbox{E}\kern-.125emX}}
\begin{document}

\title{Optimal Covariance Steering for Discrete-Time Linear Stochastic Systems}

\author{Fengjiao Liu, George Rapakoulias, and Panagiotis Tsiotras
\thanks{F. Liu is with the Department of Electrical and Computer Engineering, FAMU-FSU College of Engineering, Tallahassee, FL 32310 USA (e-mail: fliu@eng.famu.fsu.edu). 
G. Rapakoulias and P. Tsiotras are with the School of Aerospace Engineering, Georgia Institute of Technology, Atlanta, GA 30332 USA (e-mail: \{grap, tsiotras\}@gatech.edu).}}

\maketitle
\thispagestyle{empty} 
\pagestyle{empty}     

\begin{abstract}
In this paper, we study the optimal control problem for steering the state covariance of a discrete-time linear stochastic system over a finite time horizon. 
First, we establish the existence and uniqueness of the optimal control law for a quadratic cost function. 
Then, we show the separation of the optimal mean and the covariance steering problems. 
We also develop efficient computational methods to solve for the optimal control law, which is identified as the solution to a semi-definite program. 
The effectiveness of the proposed approach is demonstrated through numerical examples. 
In the process, we also obtain some novel theoretical results for a matrix Riccati difference equation, which may be of independent interest. 
\end{abstract}

\begin{IEEEkeywords}
Covariance steering, semi-definite program, Riccati difference equation.
\end{IEEEkeywords}

\section{Introduction}

In recent years, the need to quantify and control the uncertainty in physical systems has prompted a burgeoning interest in studying the evolution of the distribution of the trajectories of stochastic systems. 
A special case of this point of view is covariance control, the earliest research of which can be traced back to a series of articles from 1985 onward on the assignability of the state covariance via state feedback over an infinite time horizon~\cite{hotz1985covariance, collins1985covariance, collins1987theory, hsieh1990all, xu1992improved, zhang2016parametric}. 
The optimal control that assigns a prescribed stationary state covariance with minimum control energy was developed in~\cite{grigoriadis1997minimum}. 
More recent studies address the problem of optimally steering the state covariance of a linear stochastic system over a finite time horizon~\cite{bakolas2018finite, okamoto2019input, kotsalis2021convex, pilipovsky2021covariance, balci2023covariance}, including probabilistic (chance) constraints.

For discrete-time optimal covariance steering, the approach taken by most of the current work involves three steps: 
1) reformulate the problem as an optimization problem in an augmented state space that includes the entire history of the state; 
2) relax the non-convex chance constraints to convex constraints for a tractable convex optimization problem; 
3) solve the convex optimization problem numerically to approximate the optimal control. 
One of the main reasons for the above approach is that, when chance constraints on the sample paths of the input and the state are present, the state-mean and state-covariance constraints are coupled, which makes it difficult to find the optimal control analytically~\cite{bakolas2018finite, okamoto2019input, pilipovsky2021covariance}. 
When there are no chance constraints, the desired terminal covariance can be replaced with a soft constraint on the Wasserstein distance between the desired and the actual terminal Gaussian distributions, which can be solved using a randomized state feedback control in terms of a (convex) semi-definite program (SDP)~\cite{balci2022exact}. 
Finite-horizon covariance control has also been applied to a model-predictive-control setting~\cite{okamoto2019stochastic, yin2022trajectory}, in which, at each time step, an optimal covariance steering problem is solved in a receding-horizon fashion.

In addition to controlling the first two moments of the state of a stochastic system, it is also possible to steer the entire state distribution using optimization techniques~\cite{sivaramakrishnan2022distribution}. 
The continuous-time counterpart for the finite-horizon covariance steering problem is investigated in~\cite{chen2016I, chen2016II, chen2018III, liu2024add, liu2024mult}. 
By quantifying the uncertainty directly, covariance control theory can be applied to various practical scenarios, such as spacecraft landing~\cite{ridderhof2018uncertainty}, spacecraft trajectory optimization~\cite{ridderhof2020fuel}, vehicle path planning~\cite{okamoto2019optimal, yin2022trajectory}, and aircraft motion planning~\cite{zheng2022belief}.

Despite the success in finding an approximate solution to the optimal covariance steering problem using convex optimization, little is known regarding fundamental questions, such as the existence and uniqueness of the optimal control. 
Additionally, very little is known regarding the conservativeness of the approximate solution compared to the optimal control, if one exists. 
In this paper, we provide answers to these questions for the discrete-time covariance steering case.

The contributions and technical challenges of this work can be summarized as follows:
We first establish an analytical result (Theorem~\ref{thm:exist-unique-Q}) on the existence and uniqueness of the optimal control law for steering the state covariance of a discrete-time linear stochastic system with respect to a quadratic cost function. 
In the process of proving Theorem~\ref{thm:exist-unique-Q}, the main technical challenge is to derive useful necessary and sufficient conditions (Theorem~\ref{thm:pi-pos-exist-cond}) for a critical property (Property~\ref{pty:pi-pos}) of a matrix Riccati difference equation. 
Then, in the absence of any chance constraints, we demonstrate that the optimal state mean and covariance can be controlled independently (Theorem~\ref{thm:mean-covar}). 
Lastly, we show that the optimal control law can be computed by reformulating
the problem as an SDP (Theorem~\ref{thm:prob_equiv}), and that the control
law takes a state-feedback form (Theorem~\ref{thm:lossless}). 
%
To the best of our knowledge, this is the first work to analyze the existence and uniqueness of the optimal control for steering the state covariance to any given terminal value in discrete time and to show that the optimal control can be solved exactly.

The rest of the paper is organized as follows. 
The discrete-time covariance steering problem considered in this paper is formulated in Section~\ref{sec:prob}. 
The existence and uniqueness of the optimal control as well as the separation of the optimal mean and covariance steering problems are shown in Section~\ref{sec:exist-unique}. 
The methods for computing the optimal control are provided in Section~\ref{sec:compute}. 
A numerical example is presented in Section~\ref{sec:example}. 
For conciseness and ease of exposition, most of the proofs and auxiliary results are given in the Appendices.

\section{Problem Formulation} \label{sec:prob}

Consider the linear time-varying stochastic system corrupted by noise,
\begin{equation} \label{sys:disc-time}
x_{k+1} = A_{k} x_{k} + B_{k} u_{k} + D_{k} w_{k}, \quad k = 0, 1, \ldots, N-1,
\end{equation}
where $x_{k} \in \mathbb{R}^{n}$ is the state, $u_{k} \in \mathbb{R}^{p}$ is the control input at time step $k$, 
$w_{k} \in \mathbb{R}^{q}$ is the square-integrable noise independent of $x_{k}$ and $u_{k}$, such that $\mathbb{E}\left[w_{k}\right] = 0$ and $\mathbb{E}\left[w_{k} w_{k}\t\right] = I_{q}$, 
and $A_{k} \in \mathbb{R}^{n \times n}$, $B_{k} \in \mathbb{R}^{n \times p}$, and $D_{k} \in \mathbb{R}^{n \times q}$ are the system coefficient matrices.

The initial state $x_{0}$ and the desired terminal state $x_{N}$ are characterized by their mean and their covariance matrices given by 
\begin{subequations} \label{cst:mean-cov-Q}
\begin{align}
&\mathbb{E}\left[x_{0}\right] = \bar{\mu}_{0}, 
\quad 
\mathbb{E}\big[(x_{0} - \bar{\mu}_{0}) (x_{0} - \bar{\mu}_{0})\t \big] = \bar{\Sigma}_{0} \succ 0, \label{bgn:mean-cov-Q} 
\\
&\mathbb{E}\left[x_{N}\right] = \bar{\mu}_{N}, 
\nonumber \\
&\mathbb{E}\big[(x_{N} - \bar{\mu}_{N}) (x_{N} - \bar{\mu}_{N})\t \big] = \bar{\Sigma}_{N} \succ D_{N-1} D_{N-1}\t. \label{tgt:mean-cov-Q}
\end{align}
\end{subequations}

\begin{assumption} \label{asm:A-inv}
For all $k = 0, 1, \ldots, N-1$, the matrix $A_{k}$ is invertible.\footnote{This assumption requires that the system $x_{k+1} = A_{k} x_{k} + B_{k} u_{k}$ is time reversible. 
This is a reasonable assumption, as it follows from discretizing the system $\dot{x}(t) = A(t) x(t) + B(t) u(t)$ that $A_{k} = I_{n} + A(t) \delta t$, where $\delta t$ is the discrete time step, and
for a sufficiently small $\delta t$, $A_{k}$ is invertible.}
\end{assumption}

With Assumption~\ref{asm:A-inv}, we can define the state transition matrix $\Phi_{A}(k, \ell)$ from time $\ell$ to $k$ as
\begin{align*}
\Phi_{A}(k, \ell) \triangleq 
\begin{cases}
A_{k-1} A_{k-2} \cdots A_{\ell}, & 0 \leq \ell < k \leq N, 
\\* 
I_{n}, & 0 \leq \ell = k \leq N, 
\\* 
A_{k}^{-1} A_{k+1}^{-1} \cdots A_{\ell-1}^{-1}, & 0 \leq k < \ell \leq N.
\end{cases}
\end{align*}
The reachability Gramian $G(k, \ell)$ of \eqref{sys:disc-time} from time $\ell$ to $k$ is defined as
\begin{align*}
&G(k,\ell) \triangleq 
\\*
&\begin{cases}
\sum_{i=\ell}^{k-1} \Phi_{A}(k, i+1) B_{i} B_{i}\t \Phi_{A}(k, i+1)\t, &\hspace{-1.5mm} 0 \leq \ell < k \leq N, 
\\* 
0_{n \times n}, &\hspace{-1.5mm} 0 \leq \ell = k \leq N, 
\\* 
- \sum_{i=k}^{\ell-1} \Phi_{A}(k, i+1) B_{i} B_{i}\t \Phi_{A}(k, i+1)\t, &\hspace{-1.5mm} 0 \leq k < \ell \leq N.
\end{cases}
\end{align*}

\begin{assumption} \label{asm:G-inv}
System~\eqref{sys:disc-time} is controllable from time $0$ to $N$. That is, the reachability Gramian $G(N, 0) \succ 0$.
\end{assumption}

\begin{assumption} \label{asm:G-D}
For all $k = 1, 2, \ldots, N-1$, $\range \; \Phi_{A}(N, k) D_{k-1} \subseteq \range \; G(N, k)$.
\end{assumption}



We point out that, for system~\eqref{sys:disc-time}, Assumption~\ref{asm:G-D} together with Assumption~\ref{asm:G-inv} is a necessary and sufficient condition for all terminal state covariances $\bar{\Sigma}_{N} \succ D_{N-1} D_{N-1}\t$ to be reachable from a given initial state covariance $\bar{\Sigma}_{0} \succ 0$~\cite[Theorem~2]{liu2024reach}.

The control input $\{u_{k}\}_{k=0}^{N-1}$, is said to be \emph{admissible} if each $u_{k}$ depends only on $k$ and, perhaps, on the past history of the states $x_{0}, x_{1}, \ldots, x_{k}$, 
such that the desired boundary constraints \eqref{cst:mean-cov-Q} are satisfied. 
Without loss of generality, we may assume that the desired mean terminal state is $\bar{\mu}_{N} = 0$. 
In this case, we introduce a cost functional defined by
\begin{equation} \label{cost-Q}
J \triangleq \mathbb{E}\left[\sum_{k=0}^{N-1} x_{k}\t Q_{k} x_{k} + u_{k}\t R_{k} u_{k} \right],
\end{equation}
where $Q_{k} \succeq 0$ and $R_{k} \succ 0$ are matrices of dimensions $n \times n$ and $p \times p$, respectively.

\textbf{Optimal Covariance Steering Problem:}
Determine the optimal admissible control $u_{k}^{*}$ that minimizes the quadratic cost \eqref{cost-Q} subject to the initial and terminal state constraints \eqref{cst:mean-cov-Q}.

\section{Existence \& Uniqueness of the Optimal Control} \label{sec:exist-unique}

The main theorem on the existence and uniqueness of the optimal control is summarized below.

\begin{theorem} \label{thm:exist-unique-Q}
Let $\bar{\Sigma}_{0} \succ 0$ and $\bar{\Sigma}_{N} \succ D_{N-1} D_{N-1}\t$. 
Under Assumptions~\ref{asm:A-inv}-\ref{asm:G-D}, the unique optimal control law that solves the covariance steering problem for system~\eqref{sys:disc-time} is given, for all $k = 0, 1, \ldots, N-1$, by 
\begin{align} \label{ctrl:opt-Q}
u^{*}_{k} &= 
- \big(R_{k} + B_{k}\t \Pi_{k+1} B_{k}\big)^{-1} B_{k}\t \Pi_{k+1} A_{k} x_{k} 
\nonumber \\*
&\hspace{5mm} - \big(R_{k} + B_{k}\t \Pi_{k+1} B_{k}\big)^{-1} B_{k}\t \Phi_{\bar{A}}(N, k+1)\t 
\nonumber \\*
&\hspace{5mm} \times \Bigg( \sum_{i=0}^{N-1} \Phi_{\bar{A}}(N, i+1) B_{i} \big(R_{i} + B_{i}\t \Pi_{i+1} B_{i} \big)^{-1} 
\nonumber \\*
&\hspace{5mm} \times B_{i}\t \Phi_{\bar{A}}(N, i+1)\t \Bigg)^{-1} 
\Big( \Phi_{\bar{A}}(N, 0) \bar{\mu}_{0} - \bar{\mu}_{N} \Big), 
\end{align}
where $\Pi_{k}$ is the unique solution of the coupled matrix difference equations 
\begin{subequations} \label{eqn:coupled}
\begin{align}
&\Pi_{k} = A_{k}\t \Pi_{k+1} \big(I_{n} + B_{k} R_{k}^{-1} B_{k}\t \Pi_{k+1}\big)^{-1} A_{k} + Q_{k}, \label{eqn:pi-Q} 
\\
&\Sigma_{k+1} = \big(I_{n} + B_{k} R_{k}^{-1} B_{k}\t \Pi_{k+1}\big)^{-1} A_{k} \Sigma_{k} A_{k}\t 
\nonumber \\*
&\hspace{15mm} \times \big(I_{n} + \Pi_{k+1} B_{k} R_{k}^{-1} B_{k}\t \big)^{-1} + D_{k} D_{k}\t, \label{eqn:sigma-Q} 
\\
&\Sigma_{0} = \bar{\Sigma}_{0}, \quad \Sigma_{N} = \bar{\Sigma}_{N}. \label{bdr:sigma-Q}
\end{align}
\end{subequations}
Furthermore,
for all $k = 0, 1, \ldots, N-1$, it holds that $R_{k} + B_{k}\t \Pi_{k+1} B_{k} \succ 0$,  
where $\Phi_{\bar{A}}(\ell, s)$ is the state transition matrix of 
\begin{equation} \label{def:A-bar}
\bar{A}_{k} \triangleq \big(I_{n} + B_{k} R_{k}^{-1} B_{k}\t \Pi_{k+1}\big)^{-1} A_{k}, \quad k = 0, 1, \ldots, N-1,
\end{equation}
from time $s$ to $\ell$.
\end{theorem}


\begin{remark}
It will be shown in Section~\ref{subsec:separation} that the optimal control \eqref{ctrl:opt-Q} can be written in the form $u_{k}^{*} = K_{k} \big(x_{k} - \mu_{k}\big) + v_{k}$, where $K_{k}$ is the feedback gain matrix, $\mu_{k}$ is the state mean, and $v_{k}$ is the feed-forward term. 
\end{remark}

\begin{remark}
Notice that, unlike the case of the Kalman filter or the linear-quadratic regulator problems, the Riccati equation \eqref{eqn:pi-Q} does not have an initial or terminal condition, while the Lyapunov equation \eqref{eqn:sigma-Q} has split boundary conditions \eqref{bdr:sigma-Q}. 
This distinction poses major challenges in solving the coupled equations \eqref{eqn:coupled}. 
\end{remark}

For ease of reference, an important property of the Riccati equation \eqref{eqn:pi-Q} in Theorem~\ref{thm:exist-unique-Q} is summarized below.

\begin{property} \label{pty:pi-pos}
For all $k = 0, 1, \ldots, N-1$,
the matrix $R_{k} + B_{k}\t \Pi_{k+1} B_{k} \succ 0$. 
\end{property}

At this point, it suffices to note that Property~\ref{pty:pi-pos} does not follow immediately from $R_{k} \succ 0$, since $\Pi_{k+1}$ is not necessarily positive semi-definite. 
Necessary and sufficient conditions for Property~\ref{pty:pi-pos} to hold are provided by Theorem~\ref{thm:pi-pos-exist-cond} in Section~\ref{subsec:riccati}.

\begin{remark}
Since $R_{k} \succ 0$, Property~\ref{pty:pi-pos} implies that,
for all $k = 0, 1, \ldots, N-1$, 
the matrix $I_{n} + B_{k} R_{k}^{-1} B_{k}\t \Pi_{k+1}$ is invertible according to the Sherman–Morrison–Woodbury formula~\cite{horn2012matrix}. 
\end{remark}


\begin{remark}
In the continuous-time case, when $D(t) D(t)\t \equiv \kappa B(t) R^{-1}(t) B(t)\t$ for some $\kappa > 0$, 
there exists a closed-form solution of $\Pi(0)$ for the continuous-time counterpart of the coupled matrix equations \eqref{eqn:coupled}~\cite{chen2016I, chen2018III}. 
In the discrete-time case, however, there may not be a closed-form solution for $\Pi_{0}$ even when $D_{k} D_{k}\t \equiv B_{k} R_{k}^{-1} B_{k}\t$. 
Interestingly, a closed-form solution exists for the discrete-time maximum entropy optimal covariance control problem~\cite{ito2023maximum}. 
\end{remark}


The rest of this section is dedicated to proving Theorem~\ref{thm:exist-unique-Q} and the separation of the mean and covariance steering problems, for which several intermediate results are needed. 
First, we show that \eqref{ctrl:opt-Q} is a candidate optimal control law, provided the coupled matrix difference equations \eqref{eqn:coupled} admit a solution that satisfies Property~\ref{pty:pi-pos}. 
Then, we give necessary and sufficient conditions for the Riccati difference equation \eqref{eqn:pi-Q} to have a unique solution, which satisfies Property~\ref{pty:pi-pos}. 
Next, we show the existence and uniqueness of the solution to the coupled equations \eqref{eqn:coupled}, which completes the proof of Theorem~\ref{thm:exist-unique-Q}. 
Finally, we show the separation of mean and covariance steering problems. 

Most of the proofs in this section are provided in the Appendices. 
Specifically, Appendix~A gives two auxiliary results which are eventually used in the proofs of all the main results of this section. 
Appendix~B shows some properties of the Riccati difference equation. 
Appendix~C establishes some useful equalities and inequalities of the state transition matrix of the coefficient matrices of the Riccati difference equation. 
Appendix~D proves several equivalent conditions for the Riccati difference equation to satisfy Property~\ref{pty:pi-pos}. 
Appendix~E presents two auxiliary results for showing the main results of this section.

\subsection{Candidate Optimal Control} \label{subsec:opt-ctrl}

A candidate optimal control law is derived using a ``completion of squares'' argument. 
To this end,
let $\Pi_{0}$, $\Pi_{1}$, $\ldots$, $\Pi_{N} \in \mathbb{R}^{n \times n}$ be symmetric matrices, and let $\psi_{0}$, $\psi_{1}$, $\ldots$, $\psi_{N} \in \mathbb{R}^{n}$ be adjoint state vectors to be defined below. 
In view of \eqref{cst:mean-cov-Q}, the expected values $\mathbb{E}\big[x_{0}\t \Pi_{0} x_{0}\big]$, $\mathbb{E}\big[x_{N}\t \Pi_{N} x_{N}\big]$, $\mathbb{E}\big[\psi_{0}\t x_{0}\big]$, and $\mathbb{E}\big[\psi_{N}\t x_{N}\big]$ are independent of the control. 
Since adding to the original cost function \eqref{cost-Q} extra terms that are independent of the control does not change the optimal control, we obtain an equivalent optimization problem with cost function
\begin{align*} 
&\tilde{J} = \mathbb{E}\left[\sum_{k=0}^{N-1} x_{k}\t Q_{k} x_{k} + u_{k}\t R_{k} u_{k}\right] 
\\*
&\hspace{5mm} + \mathbb{E}\big[x_{N}\t \Pi_{N} x_{N} - x_{0}\t \Pi_{0} x_{0} + 2\psi_{N}\t x_{N} - 2\psi_{0}\t x_{0}\big] 
\\
&= \mathbb{E}\Bigg[\sum_{k=0}^{N-1} \big(u_{k}\t + x_{k}\t A_{k}\t \Pi_{k+1} B_{k} S_{k}^{-1} + \psi_{k+1}\t B_{k} S_{k}^{-1}\big) S_{k} 
\\*
&\hspace{18mm} \times \big(u_{k} + S_{k}^{-1} B_{k}\t \Pi_{k+1} A_{k} x_{k} + S_{k}^{-1} B_{k}\t \psi_{k+1}\big) \Bigg] 
\\*
&\hspace{5mm} 
+ \sum_{k=0}^{N-1} \Big[\trace \big(\Pi_{k+1} D_{k} D_{k}\t \big) - \psi_{k+1}\t B_{k} S_{k}^{-1} B_{k}\t \psi_{k+1}\Big], 
\end{align*}
where 
$S_{k} \triangleq R_{k} + B_{k}\t \Pi_{k+1} B_{k}$, 
$\Pi_{k}$ satisfies the difference equation \eqref{eqn:pi-Q}, 
and $\psi_{k}$ satisfies the difference equation 
\begin{equation} \label{eqn:psi-Q}
\psi_{k} = A_{k}\t \big(I_{n} + \Pi_{k+1} B_{k} R_{k}^{-1} B_{k}\t \big)^{-1} \psi_{k+1}. 
\end{equation}
Since $\sum_{k=0}^{N-1} \big[\trace \big(\Pi_{k+1} D_{k} D_{k}\t \big) - \psi_{k+1}\t B_{k} S_{k}^{-1} B_{k}\t \psi_{k+1}\big]$ is independent of the control, a candidate optimal control is given, for all $k = 0, 1, \ldots, N-1$, as follows
\begin{multline} \label{ctrl:opt-Q-psi}
u^{*}_{k} = 
- \big(R_{k} + B_{k}\t \Pi_{k+1} B_{k}\big)^{-1} B_{k}\t \Pi_{k+1} A_{k} x_{k} 
\\* 
- \big(R_{k} + B_{k}\t \Pi_{k+1} B_{k}\big)^{-1} B_{k}\t \psi_{k+1}, 
\end{multline}
provided that $S_{k} \succ 0$, that is, if Property~\ref{pty:pi-pos} holds. 
Hence, for all $k = 0, 1, \ldots, N-1$, the corresponding optimal process is given by 
\begin{align} \label{sys:disc-time-opt-psi}
x_{k+1}^{*} &= \big(I_{n} + B_{k} R_{k}^{-1} B_{k}\t \Pi_{k+1}\big)^{-1} A_{k} x_{k}^{*} 
\nonumber \\* 
&\hspace{4mm} - B_{k} \big(R_{k} + B_{k}\t \Pi_{k+1} B_{k} \big)^{-1} B_{k}\t \psi_{k+1} + D_{k} w_{k}.
\end{align}
Let $\mu_{k} \triangleq \mathbb{E}[x_{k}^{*}]$ be the state mean at time $k = 0, 1, \ldots, N$, which satisfies the difference equation
\begin{multline} \label{eqn:mu-Q}
\mu_{k+1} = \big(I_{n} + B_{k} R_{k}^{-1} B_{k}\t \Pi_{k+1}\big)^{-1} A_{k} \mu_{k} 
\\* 
- B_{k} \big(R_{k} + B_{k}\t \Pi_{k+1} B_{k} \big)^{-1} B_{k}\t \psi_{k+1},
\end{multline}
with boundary conditions $\mu_{0} = \bar{\mu}_{0}$ and $\mu_{N} = \bar{\mu}_{N}$. 
In light of \eqref{eqn:psi-Q} and \eqref{def:A-bar}, we can rewrite equation \eqref{eqn:mu-Q} as 
\begin{equation*} 
\mu_{k+1} = \bar{A}_{k} \mu_{k} - B_{k} \big(R_{k} + B_{k}\t \Pi_{k+1} B_{k} \big)^{-1} B_{k}\t \Phi_{\bar{A}}(N, k+1)\t \psi_{N}. 
\end{equation*}
From the above equation and the boundary conditions $\mu_{0} = \bar{\mu}_{0}$ and $\mu_{N} = \bar{\mu}_{N}$, we obtain
\begin{align} \label{eqn:mu-N-0}
\bar{\mu}_{N} \! &= \! \Phi_{\bar{A}}(N, 0) \bar{\mu}_{0} \! - \hspace{-1.2mm} \sum_{i=0}^{N-1} \Phi_{\bar{A}}(N, i+1) B_{i} \big(R_{i} + B_{i}\t \Pi_{i+1} B_{i} \big)^{-1} 
\nonumber \\* 
&\hspace{35mm} \times B_{i}\t \Phi_{\bar{A}}(N, i+1)\t \psi_{N}. 
\end{align}

To proceed, we make use of the fact that 
\begin{equation} \label{eq:Lemma1}
\sum_{i=0}^{N-1} \Phi_{\bar{A}}(N, i+1) B_{i} \big(R_{i} + B_{i}\t \Pi_{i+1} B_{i} \big)^{-1} B_{i}\t \Phi_{\bar{A}}(N, i+1)\t \succ 0, 
\end{equation}
which is proved in Lemma~\ref{lem:A-bar-B-contr} in Appendix~A.
From equations \eqref{eqn:mu-N-0} and \eqref{eq:Lemma1} we obtain an expression of $\psi_{N}$. 
It then follows from \eqref{eqn:psi-Q} that 
\begin{align} \label{sol:psi-Q}
&\psi_{k+1} = 
\nonumber \\* 
&\Phi_{\bar{A}}(N, k+1)\t 
\! \Bigg( \sum_{i=0}^{N-1} \Phi_{\bar{A}}(N, i+1) B_{i} \big(R_{i} + B_{i}\t \Pi_{i+1} B_{i} \big)^{-1} B_{i}\t 
\nonumber \\* 
&\hspace{15mm} 
\times \Phi_{\bar{A}}(N, i+1)\t \! \Bigg)^{-1} 
\hspace{-1mm} \Big( \Phi_{\bar{A}}(N, 0) \bar{\mu}_{0} - \bar{\mu}_{N} \Big). 
\end{align}
Plugging \eqref{sol:psi-Q} into \eqref{ctrl:opt-Q-psi} and \eqref{sys:disc-time-opt-psi} yields \eqref{ctrl:opt-Q}.

It is clear from \eqref{sys:disc-time-opt-psi} and \eqref{eqn:mu-Q} that 
\begin{equation*}
x_{k+1}^{*} - \mu_{k+1} = \big(I_{n} + B_{k} R_{k}^{-1} B_{k}\t \Pi_{k+1}\big)^{-1} A_{k} (x_{k}^{*} - \mu_{k}) + D_{k} w_{k}. 
\end{equation*}
Therefore, the state covariance $\Sigma_{k} \triangleq \mathbb{E}\big[(x_{k}^{*} - \mu_{k}) (x_{k}^{*} - \mu_{k})\t \big]$ at time $k = 0, 1, \ldots, N$ satisfies the difference equation \eqref{eqn:sigma-Q} and the boundary conditions \eqref{bdr:sigma-Q} in light of \eqref{cst:mean-cov-Q}.

We have thus shown the following result.

\begin{proposition} \label{prp:suff-Q}
Under Assumptions~\ref{asm:A-inv} and \ref{asm:G-inv}, 
let $\Pi_{k}, \Sigma_{k} \in \mathbb{R}^{n \times n}$ be such that, for $k = 0, 1, \ldots, N-1$, Property~\ref{pty:pi-pos} and the coupled equations \eqref{eqn:coupled} hold. 
Then, the state feedback control $u^{*}$ given by \eqref{ctrl:opt-Q} is optimal for the covariance steering problem. 
Furthermore, if there exists a unique solution to the coupled equations \eqref{eqn:coupled}, then, the optimal control $u^{*}$ is unique. 
\end{proposition}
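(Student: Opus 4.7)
The plan is a completion-of-squares argument applied directly to the cost \eqref{cost-Q}. Since the endpoint distributions of $x_0$ and $x_N$ are fixed by \eqref{cst:mean-cov-Q}, the telescoping quantity $\mathbb{E}\big[x_N\t \Pi_N x_N - x_0\t \Pi_0 x_0 + 2\psi_N\t x_N - 2\psi_0\t x_0\big]$ is independent of the control for any symmetric $\Pi_k$ and any vectors $\psi_k$, and may be freely added to $J$. I would substitute the dynamics \eqref{sys:disc-time} into the resulting telescoping sum and collect terms by type: quadratic in $x_k$, quadratic in $u_k$, bilinear in $(x_k,u_k)$, and linear in $x_k$ or $u_k$. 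Forcing the pure-$x_k$ quadratic coefficient to vanish identifies $\Pi_k$ as a solution of \eqref{eqn:pi-Q}, and forcing the linear-in-$x_k$ coefficient to vanish identifies $\psi_k$ as a solution of \eqref{eqn:psi-Q}.

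With these choices the cost collapses to $\mathbb{E}\sum_{k=0}^{N-1}(u_k - u_k^*)\t S_k (u_k - u_k^*)$ plus a control-independent remainder, where $S_k = R_k + B_k\t \Pi_{k+1} B_k$ and $u_k^*$ is the expression \eqref{ctrl:opt-Q-psi}. Under Property \ref{pty:pi-pos} we have $S_k \succ 0$, so this residual is strictly positive unless $u_k = u_k^*$ almost surely; this simultaneously yields optimality of $u^*$ and its uniqueness among admissible controls that realize the prescribed boundary statistics.

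The next step is to verify that the candidate is itself admissible. Taking expectations in \eqref{sys:disc-time-opt-psi} produces the mean recursion \eqref{eqn:mu-Q}; unrolling this recursion under $\mu_0 = \bar{\mu}_0$ and $\mu_N = \bar{\mu}_N$ gives the linear equation \eqref{eqn:mu-N-0} for $\psi_N$. Invertibility of the Gramian-like sum in \eqref{eq:Lemma1}, supplied by Lemma \ref{lem:A-bar-B-contr}, then determines $\psi_N$, and \eqref{eqn:psi-Q} together with \eqref{def:A-bar}--\eqref{def:Phi-A-bar} yields the closed form \eqref{sol:psi-Q}. Back-substitution into \eqref{ctrl:opt-Q-psi} and \eqref{sys:disc-time-opt-psi} recovers the stated expressions \eqref{ctrl:opt-Q} and \eqref{sys:disc-time-opt}. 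Subtracting \eqref{eqn:mu-Q} from \eqref{sys:disc-time-opt-psi} shows that the centered process obeys $\bar{A}_k$ dynamics driven by $D_k w_k$, so its covariance satisfies the Lyapunov recursion \eqref{eqn:sigma-Q}, while the endpoint values in \eqref{bdr:sigma-Q} hold by hypothesis.

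For the second assertion, if \eqref{eqn:coupled} admits a unique solution $(\Pi_k, \Sigma_k)$, then $\bar{A}_k$ and every ingredient of \eqref{ctrl:opt-Q} is unambiguously determined, so $u^*$ is itself unique. The main obstacle I anticipate is not the quadratic algebra---which is routine once Property \ref{pty:pi-pos} is in hand---but the admissibility check: the mean boundary condition can be met only if the auxiliary Gramian-like sum in \eqref{eq:Lemma1} is invertible, which depends on the controllability Assumption \ref{asm:G-inv} and must be handled separately (as Lemma \ref{lem:A-bar-B-contr} in the appendix).
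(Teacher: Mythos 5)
Your proposal is correct and follows essentially the same route as the paper: the same completion-of-squares with telescoping terms $\Pi_k$ and $\psi_k$ chosen to satisfy \eqref{eqn:pi-Q} and \eqref{eqn:psi-Q}, the same reduction of the cost to $\mathbb{E}\sum_k (u_k-u_k^*)\t S_k (u_k-u_k^*)$ plus control-independent terms, and the same determination of $\psi_N$ from the mean boundary conditions via the invertibility of the Gramian-like sum in Lemma~\ref{lem:A-bar-B-contr}. You also correctly isolate the one nontrivial admissibility ingredient (Assumption~\ref{asm:G-inv} feeding into that invertibility) and the role of the hypothesized boundary conditions \eqref{bdr:sigma-Q} in closing the covariance endpoint, so nothing is missing.
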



Since $R_{k} \succ 0$ can always be absorbed into $B_{k}$ by re-defining $\hat{B}_{k} \triangleq B_{k} R_{k}^{-\frac{1}{2}}$, without loss of generality and for the sake of notational simplicity for the rest of the paper, unless stated otherwise, we assume $R_{k} \equiv I_{p}$.

\subsection{Riccati Difference Equation} \label{subsec:riccati}

In this section, we give a necessary and sufficient condition for the unique solution $\Pi_{k}$ of the Riccati difference equation \eqref{eqn:pi-Q} to exist from time $0$ to $N$. 
A stronger condition for the solution of \eqref{eqn:pi-Q} to also satisfy Property~\ref{pty:pi-pos} is provided as well. 
To this end, let, 
\begin{equation} \label{def:M}
M_{k} \hspace{-0.6mm} \triangleq \hspace{-0.6mm}
\begin{bmatrix}
A_{k} + B_{k} B_{k}\t A_{k}^{- \mbox{\tiny\sf T}} Q_{k} & -B_{k} B_{k}\t A_{k}^{- \mbox{\tiny\sf T}} \\
-A_{k}^{- \mbox{\tiny\sf T}} Q_{k} & A_{k}^{- \mbox{\tiny\sf T}}
\end{bmatrix} 
\hspace{-0.8mm} \triangleq \hspace{-0.8mm}
\begin{bmatrix}
M_{k, 1} & M_{k, 2} \\
M_{k, 3} & M_{k, 4}
\end{bmatrix}
\end{equation}
It can be checked that $M_{k}$ is invertible, since $A_{k}^{- \mbox{\tiny\sf T}}$ and its Schur complement $A_{k} + B_{k} B_{k}\t A_{k}^{- \mbox{\tiny\sf T}} Q_{k} - B_{k} B_{k}\t A_{k}^{- \mbox{\tiny\sf T}} A_{k}\t A_{k}^{- \mbox{\tiny\sf T}} Q_{k} = A_{k}$ in $M_{k}$ are invertible. 
Let $\Phi_{M}(k, s)$ denote the state transition matrix of $M_{k}$ from time $s$ to $k$ decomposed as follows
\begin{align} \label{eqn:phi-M-blk}
\Phi_{M}(k, s) &\triangleq 
\begin{bmatrix}
\Phi_{11}(k, s) & \Phi_{12}(k, s) \\
\Phi_{21}(k, s) & \Phi_{22}(k, s)
\end{bmatrix}. 
\end{align}

First, we give the condition for the existence of a solution to the matrix equation \eqref{eqn:pi-Q}.

\begin{proposition} \label{prp:pi-Q-exist}
Suppose Assumption~\ref{asm:A-inv} holds. 
Given an initial condition $\Pi_{0}$, \eqref{eqn:pi-Q} has a solution from time $0$ to $N$ if and only if, for all $k = 0, 1, \ldots, N$, the matrix $\Phi_{11}(k, 0) + \Phi_{12}(k, 0) \Pi_{0}$ is invertible. 
In this case,
the solution of \eqref{eqn:pi-Q} is unique and is given by
\begin{equation} \label{sol:pi-Q-0}
\Pi_{k} = \! \Big(\Phi_{21}(k, 0) + \Phi_{22}(k, 0) \Pi_{0}\Big) \! \Big(\Phi_{11}(k, 0) + \Phi_{12}(k, 0) \Pi_{0}\Big)^{-1} \! .
\end{equation}
Moreover, for all $\ell = 0, 1, \ldots, N$, 
\begin{equation} \label{sol:pi-Q-s}
\Pi_{k} = \! \Big(\Phi_{21}(k, \ell) + \Phi_{22}(k, \ell) \Pi_{\ell}\Big) \! \Big(\Phi_{11}(k, \ell) + \Phi_{12}(k, \ell) \Pi_{\ell}\Big)^{-1}.
\end{equation}
\end{proposition}

\begin{proof}
The proof of Proposition~\ref{prp:pi-Q-exist} is given in Appendix~B. \hfill 
\end{proof}




A number of necessary and sufficient conditions for the solution of the matrix Riccati difference equation \eqref{eqn:pi-Q} to exist and to satisfy Property~\ref{pty:pi-pos} are summarized below. 
Before proceeding, and
for the sake of notational simplicity, define, for all $k = 0, 1, \ldots, N-1$, the matrices 
$L_{k} \triangleq \Phi_{11}(k, 0)^{-1} A_{k}^{-1} B_{k}$, 
$T_{k} \triangleq I_{p} + B_{k}\t A_{k}^{- \mbox{\tiny\sf T}} \big(Q_{k} - \Phi_{21}(k, 0) \Phi_{11}(k, 0)^{-1}\big) A_{k}^{-1} B_{k} \succ 0$, 
and 
\begin{align*} 
&U_{k} \triangleq 
\\* 
&\hspace{-1mm}
\begin{bmatrix}
T_{k} &    0    &    0   &   0   \\
    0   & T_{k-1} &    0   &   0   \\
    0   &    0    & \ddots &   0   \\
    0   &    0    &    0   & T_{0} 
\end{bmatrix}
\hspace{-0.6mm} - \hspace{-0.6mm}
\begin{bmatrix}
L_{k}\t \\
L_{k-1}\t \\
\vdots \\ 
L_{0}\t
\end{bmatrix}
\Pi_{0} 
\begin{bmatrix}
L_{k} & L_{k-1} & \cdots & L_{0}
\end{bmatrix} \hspace{-1mm} .
\end{align*}


\begin{theorem} \label{thm:pi-pos-exist-cond}
Under Assumption~\ref{asm:A-inv}, the following statements are equivalent regarding the solution $\Pi_{k}$ of \eqref{eqn:pi-Q}.
\begin{enumerate}[label=\roman*)]
\item For all $k = 0, 1, \ldots, N-1$, $I_{p} + B_{k}\t \Pi_{k+1} B_{k} \succ 0$ 
(respectively, for all $k = 0, 1, \ldots, N$, $\Pi_{k}$ exists). 
\item For all $k = 0, 1, \ldots, N-1$, $I_{p} - B_{k}\t A_{k}^{- \mbox{\tiny\sf T}} \big(\Pi_{k} - Q_{k}\big) A_{k}^{-1} B_{k} \succ 0$ 
(respectively, is invertible). 
\item For all $k = 1, 2, \ldots, N-1$, $U_{0}, U_{k}/U_{k-1} \succ 0$ 
(respectively, are invertible), where $U_{k}/U_{k-1}$ denotes the Schur complement of the block $U_{k-1}$ in the matrix $U_{k}$. 
\item For all $k = 0, 1, \ldots, N-1$, $U_{k} \succ 0$ 
(respectively, is invertible). 
\item $U_{N-1} \succ 0$ 
(respectively, is invertible).
\item For all $k = 0, 1, \ldots, N$, the eigenvalues of $I_{n} + \Phi_{11}(k, 0)^{-1} \Phi_{12}(k, 0) \Pi_{0}$ are positive 
(respectively, are nonzero). 
\item All eigenvalues of $I_{n} + \Phi_{11}(N, 0)^{-1} \Phi_{12}(N, 0) \Pi_{0}$ are positive 
(respectively, are nonzero). 
\end{enumerate}
Under Assumptions~\ref{asm:A-inv} and \ref{asm:G-inv}, any of the above statements is also equivalent to the statement that  
\begin{equation*}
\Pi_{0} + \Phi_{12}(N, 0)^{-1} \Phi_{11}(N, 0) \prec 0 
~~\text{(respectively, is invertible)}. 
\end{equation*}
\end{theorem}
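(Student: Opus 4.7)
My plan is to establish a cycle of equivalences through the eight statements, grouping them by the algebraic tool each one calls for, and to handle the controllability-dependent equivalence at the end as an addendum; the ``respectively, invertible'' variant can be carried along in parallel throughout. The chain (i) $\Leftrightarrow$ (ii) $\Leftrightarrow$ (iii) is analytic: with $R_k \equiv I_p$, Proposition \ref{prp:suff-Q} and the definition of $S_k$ make (i) $\Leftrightarrow$ (ii) a direct restatement of Property \ref{pty:pi-pos}, while (ii) $\Leftrightarrow$ (iii) follows by substituting the Riccati recursion \eqref{eqn:pi-Q} into $B_k\t A_k^{-\mbox{\tiny\sf T}}(\Pi_k - Q_k) A_k^{-1} B_k$ and applying the push-through identity $(I_n + B_k B_k\t \Pi_{k+1})^{-1} B_k = B_k (I_p + B_k\t \Pi_{k+1} B_k)^{-1}$ to obtain the clean relation $I_p - B_k\t A_k^{-\mbox{\tiny\sf T}}(\Pi_k - Q_k) A_k^{-1} B_k = (I_p + B_k\t \Pi_{k+1} B_k)^{-1}$.

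Second, the equivalences (ii) $\Leftrightarrow$ (vii) $\Leftrightarrow$ (viii) I would derive from the closed form \eqref{sol:pi-Q-0}. Factoring $\Phi_{11}(k,0) + \Phi_{12}(k,0) \Pi_0 = \Phi_{11}(k,0)\bigl(I_n + \Phi_{11}(k,0)^{-1} \Phi_{12}(k,0) \Pi_0\bigr)$, Proposition \ref{prp:pi-Q-exist} identifies existence of $\Pi_k$ with nonvanishing eigenvalues of the right factor. To upgrade this to \emph{positive} eigenvalues, I would use a continuity argument along the one-step Riccati iteration: at $k=0$ all eigenvalues equal $1$, and positivity in (ii) is precisely what forbids a zero-eigenvalue crossing at the next step. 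Equivalence (vii) $\Leftrightarrow$ (viii) is then a propagation statement once existence has been secured throughout the horizon.

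Third, the block-matrix equivalences (iii) $\Leftrightarrow$ (iv) $\Leftrightarrow$ (v) $\Leftrightarrow$ (vi) I would handle via Schur complements on $U_k$. Using Corollary \ref{cor:pi-Q-s} to re-express $\Pi_k$ starting from an intermediate time $\ell$, I expect to show by induction that the Schur complement $U_k / U_{k-1}$ coincides with $I_p - B_k\t A_k^{-\mbox{\tiny\sf T}}(\Pi_k - Q_k) A_k^{-1} B_k$, after which the standard Sylvester-type criterion for block positive-definiteness closes the chain. The main technical obstacle I foresee lies here: verifying that the bookkeeping of $L_k$ and $T_k$ through the Riccati recursion really telescopes into the claimed Schur-complement reduction of $U_k$, and that the ``divide by the previous $U_{k-1}$'' step remains well posed under only the invertibility hypothesis.

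Finally, under Assumption \ref{asm:G-inv}, unpacking the $(1,2)$-block of $\Phi_M(N,0)$ using the definition \eqref{def:M} of $M_k$ should identify $\Phi_{12}(N,0)$ with a signed variant of the controllability Gramian $G(N,0)$, rendering it invertible with a definite sign. Observing that $-\Phi_{12}(N,0)^{-1} \Phi_{11}(N,0)$ is the critical value of $\Pi_0$ at which $\Phi_{11}(N,0) + \Phi_{12}(N,0) \Pi_0$ becomes singular, and writing $I_n + \Phi_{11}(N,0)^{-1} \Phi_{12}(N,0) \Pi_0 = \Phi_{11}(N,0)^{-1} \Phi_{12}(N,0)\bigl(\Pi_0 + \Phi_{12}(N,0)^{-1} \Phi_{11}(N,0)\bigr)$, the sign of $\Phi_{12}(N,0)$ inherited from controllability converts the negative-definiteness condition $\Pi_0 + \Phi_{12}(N,0)^{-1} \Phi_{11}(N,0) \prec 0$ into the all-positive-eigenvalues statement (viii), closing the theorem.
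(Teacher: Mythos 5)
Your chains (i)$\Leftrightarrow$(ii)$\Leftrightarrow$(iii) and (iii)$\Leftrightarrow$(iv)$\Leftrightarrow$(v)$\Leftrightarrow$(vi) match the paper's route (Lemma \ref{lem:pi-pos}, Lemma \ref{lem:L-pos}, Proposition \ref{prp:U-pos}): the push-through identity gives exactly $I_{p} - B_{k}\t A_{k}^{- \mbox{\tiny\sf T}}\big(\Pi_{k}-Q_{k}\big)A_{k}^{-1}B_{k} = \big(I_{p}+B_{k}\t\Pi_{k+1}B_{k}\big)^{-1}$, and the Schur-complement bookkeeping does telescope as you hope. The genuine gaps are elsewhere. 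First, your argument for (ii)$\Rightarrow$(vii) --- ``positivity in (ii) is precisely what forbids a zero-eigenvalue crossing at the next step'' --- is a continuous-time argument transplanted into a discrete recursion: between step $k$ and step $k+1$ there is no continuous path of matrices, so an eigenvalue of $I_{n}+\Phi_{11}(k,0)^{-1}\Phi_{12}(k,0)\Pi_{0}$ can change sign without ever vanishing, and invertibility at every step (which is all that existence buys you) does not exclude this. The paper closes the loop differently: it proves the identity $-\Phi_{11}(k+1,0)^{-1}\Phi_{12}(k+1,0)=\sum_{i=0}^{k}L_{i}T_{i}^{-1}L_{i}\t$ (Lemma \ref{lem:phi-mono}) and then matches the nonzero spectrum of $I_{n}+\Phi_{11}(k+1,0)^{-1}\Phi_{12}(k+1,0)\Pi_{0}$ with that of a symmetric matrix congruent to $U_{k}$ (Proposition \ref{prp:phi-pi-0}); that is, (vii) is reached through (v), not directly from (ii). This detour also supplies the only clean proof of (viii)$\Rightarrow$(vii), since the $U_{k}$ are nested leading principal submatrices of $U_{N-1}$ and Sylvester's criterion propagates positivity backwards; your ``propagation statement'' offers no independent mechanism for that direction.

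Second, your identification of $\Phi_{12}(N,0)$ with ``a signed variant of the controllability Gramian $G(N,0)$'' fails once $Q_{k}\neq 0$: already for $N=2$ the product $M_{1}M_{0}$ contributes a cross term $-B_{1}B_{1}\t A_{1}^{- \mbox{\tiny\sf T}}Q_{1}B_{0}B_{0}\t A_{0}^{- \mbox{\tiny\sf T}}$ to the $(1,2)$ block that is not part of any Gramian. The fact you actually need, $-\Phi_{11}(N,0)^{-1}\Phi_{12}(N,0)\succ 0$, is true but requires the sum-of-positive-semidefinite-terms identity of Lemma \ref{lem:phi-mono} together with the range identity of Lemma \ref{lem:phi-11-img} to import controllability (Corollary \ref{cor:phi-12-inv}). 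Once that is granted, your inertia argument converting $\Pi_{0}+\Phi_{12}(N,0)^{-1}\Phi_{11}(N,0)\prec 0$ into statement (viii) is correct and is the same as Proposition \ref{prp:pi-0}.
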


\begin{proof}
The proof of Theorem~\ref{thm:pi-pos-exist-cond} is given in Appendix~D. \hfill 
\end{proof}

\subsection{Existence and Uniqueness of the Solution} \label{subsec:exist-unique}

In this section, we show that the solution of the coupled matrix difference equations \eqref{eqn:coupled} exists and it is unique. 
First, we need an alternative expression for the state transition matrix of $\bar{A}_{k}$ in Theorem~\ref{thm:exist-unique-Q}.

\begin{proposition} \label{prp:phi-AB-pi-Q} 
Under Assumption~\ref{asm:A-inv}, suppose the condition in Proposition~\ref{prp:pi-Q-exist} holds so that \eqref{eqn:pi-Q} admits a solution $\Pi_{k}$ from time $0$ to $N$. 
Then, for all $s, k = 0, 1, \ldots, N$, the state transition matrix of $\bar{A}_{k}$ is given by
\begin{equation} \label{eqn:phi-AB-pi-Q}
\Phi_{\bar{A}}(k, s) = \Phi_{11}(k, s) + \Phi_{12}(k, s) \Pi_{s}. 
\end{equation}
\end{proposition}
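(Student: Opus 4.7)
Let me define $F(k,s) \triangleq \Phi_{11}(k,s) + \Phi_{12}(k,s) \Pi_{s}$ and try to show that, as a function of $k$, it obeys the same recursion and initial condition as $\Phi_{\bar{A}}(\cdot, s)$. The plan is to first handle the forward case $s=0$, $k \geq 0$ by induction on $k$, and then lift the result to arbitrary $s$ and either orientation using the semigroup property of $\Phi_{M}$ and Corollary~\ref{cor:pi-Q-s}.

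For the forward induction, the base case is immediate since $\Phi_{11}(0,0)=I_n$ and $\Phi_{12}(0,0)=0$. For the step, I would exploit the block recursion $\Phi_{M}(k+1, 0) = M_{k} \Phi_{M}(k, 0)$, which, by the block definitions in \eqref{def:M}, gives
\begin{equation*}
F(k+1,0) = M_{k,1} F(k,0) + M_{k,2}\bigl(\Phi_{21}(k,0)+\Phi_{22}(k,0)\Pi_{0}\bigr).
\end{equation*}
Invoking Proposition~\ref{prp:pi-Q-exist} to rewrite the second parenthesis as $\Pi_{k} F(k,0)$ yields $F(k+1,0) = (M_{k,1} + M_{k,2}\Pi_{k})\,F(k,0)$. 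The crux of the proof is then the algebraic identity $M_{k,1} + M_{k,2}\Pi_{k} = \bar{A}_{k}$. This reduces, after substituting the definitions of $M_{k,1}$ and $M_{k,2}$, to showing
\begin{equation*}
B_{k} B_{k}\t A_{k}^{-\mbox{\tiny\sf T}}(\Pi_{k} - Q_{k}) = A_{k} - \bar{A}_{k},
\end{equation*}
which I would verify by plugging in the Riccati equation \eqref{eqn:pi-Q} (with $R_{k}=I_{p}$) and then applying the standard identity $X(I+X)^{-1} = I - (I+X)^{-1}$ with $X = B_{k} B_{k}\t \Pi_{k+1}$. That collapses everything to the definition of $\bar{A}_{k}$ in \eqref{def:A-bar} and closes the induction, giving $F(k,0) = \Phi_{\bar{A}}(k,0)$ for all $k$.

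To reach general $s$, I would use the semigroup identity $\Phi_{M}(k,0) = \Phi_{M}(k,s)\,\Phi_{M}(s,0)$, expand it block-wise, and apply Corollary~\ref{cor:pi-Q-s} once more to factor
\begin{equation*}
F(k,0) = \bigl(\Phi_{11}(k,s) + \Phi_{12}(k,s)\Pi_{s}\bigr) F(s,0) = F(k,s)\, F(s,0).
\end{equation*}
Since $F(k,0) = \Phi_{\bar{A}}(k,0)$ and $F(s,0) = \Phi_{\bar{A}}(s,0)$ are invertible (these are products of the invertible matrices $\bar{A}_{j}$, whose invertibility follows from the remark after Property~\ref{pty:pi-pos} together with Assumption~\ref{asm:A-inv}), this gives $F(k,s) = \Phi_{\bar{A}}(k,0)\,\Phi_{\bar{A}}(s,0)^{-1}$, which equals $\Phi_{\bar{A}}(k,s)$ uniformly in both the forward ($k \geq s$) and backward ($k < s$) cases by definition \eqref{def:Phi-A-bar}.

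The only genuinely nontrivial part is the identity $M_{k,1} + M_{k,2}\Pi_{k} = \bar{A}_{k}$ in the inductive step; the rest is essentially bookkeeping with the block structure of $M_{k}$ and the semigroup property of state-transition matrices. Strictly speaking the argument as written assumes $R_{k} = I_{p}$, which is the normalization in force for this section; the general case follows from the substitution $\hat{B}_{k} = B_{k} R_{k}^{-1/2}$ noted at the end of Section~\ref{subsec:opt-ctrl}.
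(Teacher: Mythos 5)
Your proposal is correct and follows essentially the same route as the paper: the heart of both arguments is the one-step recursion $F(k+1,s)=\bigl(M_{k,1}+M_{k,2}\Pi_{k}\bigr)F(k,s)$ combined with the identity $M_{k,1}+M_{k,2}\Pi_{k}=\bar{A}_{k}$, verified by substituting the Riccati equation \eqref{eqn:pi-Q}, which is exactly the paper's computation. The only difference is organizational --- you run the induction for $s=0$ and then transfer to general $s$ via the semigroup factorization $F(k,0)=F(k,s)F(s,0)$, whereas the paper runs the same induction directly for arbitrary $s$ by invoking Corollary~\ref{cor:pi-Q-s}; both are valid.
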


\begin{proof}
When $s = k$, we can check that 
$\Phi_{\bar{A}}(s, s) = I_{n} = \Phi_{11}(s, s) + \Phi_{12}(s, s) \Pi(s)$. 
It is easy to verify that $\Phi_{\bar{A}}(k, s) = \bar{A}_{k-1} \Phi_{\bar{A}}(k-1, s)$, regardless of $k > s$ or $k < s$. 
From \eqref{eqn:phi-M-blk} and \eqref{sol:pi-Q-s}, we have
$
\Phi_{11}(k, s) + \Phi_{12}(k, s) \Pi_{s} = 
\bar{A}_{k-1} \big(\Phi_{11}(k-1, s) + \Phi_{12}(k-1, s) \Pi_{s}\big)
$. \hfill
\end{proof}


The solution of the covariance equation \eqref{eqn:sigma-Q} is given as follows.

\begin{proposition} \label{prp:sigma1-Q}
Let Assumptions~\ref{asm:A-inv} and \ref{asm:G-inv} hold, and 
suppose $\Pi_{0} \prec - \Phi_{12}(N, 0)^{-1} \Phi_{11}(N, 0)$. 
Let $\Sigma_{0} = \bar{\Sigma}_{0} \succ 0$.
Then, the solution $\Sigma_{k}$ of \eqref{eqn:sigma-Q} from time $0$ to $N$ is given by, 
\begin{align} \label{sol:sigma-Q}
\Sigma_{k+1} &= \Phi_{\bar{A}}(k+1, 0) \bar{\Sigma}_{0} \Phi_{\bar{A}}(k+1, 0)\t 
\nonumber \\* 
&\hspace{2mm} + \sum_{i=0}^{k} \Phi_{\bar{A}}(k+1, i+1) D_{i} D_{i}\t \Phi_{\bar{A}}(k+1, i+1)\t,
\end{align}
where $\Phi_{\bar{A}}(k, s)$ is given by \eqref{eqn:phi-AB-pi-Q} and $\Pi_{k}$ is given by \eqref{sol:pi-Q-0}. 
For all $\ell = 1, 2, \ldots, n$, if $\ell$ eigenvalues of $\Pi_{0}$ go to $-\infty$, then, $\ell$ eigenvalues of $\Sigma_{N}$ approach $+\infty$. 
Furthermore, when Assumption~\ref{asm:G-D} also holds, if $\ell$ eigenvalues of $- \Phi_{12}(N, 0)^{-1} \Phi_{11}(N, 0) - \Pi_{0}$ go to $0$, then, $\ell$ eigenvalues of $\Sigma_{N} - D_{N-1} D_{N-1}\t$ approach $0$. 
\end{proposition}

\begin{proof}
Clearly, \eqref{sol:sigma-Q} is the solution of \eqref{eqn:sigma-Q}. 
It follows from Corollary~\ref{cor:phi-12-inv} and \cite[Lemma~4]{liu2024reach} that if $\ell$ eigenvalues of $\Pi_{0}$ go to $-\infty$, then, $\ell$ eigenvalues of $I_{n} + \Phi_{11}(N, 0)^{-1} \Phi_{12}(N, 0) \Pi_{0}$ approach $+\infty$. 
Since $\Phi_{\bar{A}}(N, 0) = \Phi_{11}(N, 0) \big[I_{n} + \Phi_{11}(N, 0)^{-1} \Phi_{12}(N, 0) \Pi_{0}\big]$, from the invertibility of $\Phi_{11}(N, 0)$ and the fact that $\bar{\Sigma}_{0} \succ 0$ it follows that $\ell$ eigenvalues of $\Phi_{\bar{A}}(N, 0) \bar{\Sigma}_{0} \Phi_{\bar{A}}(N, 0)\t$ approach $+\infty$. 
In light of \eqref{sol:sigma-Q}, $\ell$ eigenvalues of $\Sigma_{N}$ approach $+\infty$.

Let Assumption~\ref{asm:G-D} hold. 
Let $\bar{P} \preceq - \Phi_{12}(N, 0)^{-1} \Phi_{11}(N, 0)$ such that $\ell$ eigenvalues of $- \Phi_{12}(N, 0)^{-1} \Phi_{11}(N, 0) - \bar{P}$ are $0$. 
Since $\Phi_{11}(N, 0)$ is invertible, it suffices to show that, as $\Pi_{0} \to \bar{P}$, then, $\ell$ eigenvalues of $\Phi_{11}(N, 0)^{-1} \big(\Sigma_{N} - D_{N-1} D_{N-1}\t\big) \Phi_{11}(N, 0)^{- \mbox{\tiny\sf T}}$ approach $0$. 
In view of \eqref{sol:sigma-Q} and the fact that $\bar{\Sigma}_{0} \succ 0$, if $\ell$ eigenvalues of $\Phi_{11}(N, 0)^{-1} \big(\Sigma_{N} - D_{N-1} D_{N-1}\t\big) \Phi_{11}(N, 0)^{- \mbox{\tiny\sf T}}$ approach $0$, then, $\ell$ eigenvalues of $I_{n} + \Phi_{11}(N, 0)^{-1} \Phi_{12}(N, 0) \Pi_{0}$ go to $0$. 
In light of \cite[Lemma~4]{liu2024reach}, only the positive eigenvalues of $\Pi_{0}$ can possibly make $I_{n} + \Phi_{11}(N, 0)^{-1} \Phi_{12}(N, 0) \Pi_{0}$ singular. 
Hence, it suffices to consider the case when $\bar{P} \succ 0$. 
The case when $\bar{P}$ is not positive definite can be reduced to the case when $\bar{P} \succ 0$. 
Since $- \Phi_{12}(N, 0)^{-1} \Phi_{11}(N, 0) \succeq \bar{P} \succ 0$ and $\ell$ eigenvalues of $- \Phi_{12}(N, 0)^{-1} \Phi_{11}(N, 0) - \bar{P} \succeq 0$ are $0$, it follows from Woodbury matrix identity~\cite{horn2012matrix} that $\ell$ eigenvalues of $\bar{P}^{-1} - \big(- \Phi_{11}(N, 0)^{-1} \Phi_{12}(N, 0)\big) \succeq 0$ are $0$. 
Hence, we can write $\bar{P} = \big(- \Phi_{11}(N, 0)^{-1} \Phi_{12}(N, 0) + \Gamma \Gamma\t\big)^{-1}$, where $\Gamma \in \R^{n \times (n-\ell)}$ and $\rank \Gamma = n - \ell$. 
In view of the Woodbury identity, as $\Pi_{0} \to \bar{P}$, then, $I_{n} + \Phi_{11}(N, 0)^{-1} \Phi_{12}(N, 0) \Pi_{0} \to - \Gamma \big(I_{n-\ell} - \Gamma\t \Phi_{12}(N, 0)^{-1} \Phi_{11}(N, 0) \Gamma\big)^{-1} \Gamma\t \Phi_{12}(N, 0)^{-1} \Phi_{11}(N, 0)$. 
Let $z \in \ker (\Gamma \Gamma\t)$ be a nonzero vector, then, $z\t \Gamma = 0$. 
It follows that, as $\Pi_{0} \to \bar{P}$, then, $I_{n} + \Phi_{11}(N, 0)^{-1} \Phi_{12}(N, 0) \Pi_{0}$ approaches some matrix for which $z$ is in its left null space.

Next, we show that as $\Pi_{0} \to \bar{P}$, then, for each $i = 0, 1, \ldots, N-2$, $\Phi_{11}(N, 0)^{-1} \Phi_{\bar{A}}(N, i+1) D_{i}$ approaches some matrix for which $z$ is in its left null space as well. 
Let $i = 0, 1, \ldots, N-2$ be fixed. 
From Assumption~\ref{asm:G-D}, \eqref{eqn:phi-mono}, and \eqref{eqn:phi-11-img}, it follows that $\range D_{i} \subseteq \range \Phi_{11}(N, i+1)^{-1} \Phi_{12}(N, i+1)$. 
Since $\Phi_{11}$ is invertible, it suffices to consider the left null space of $\Phi_{11}(N, 0)^{-1} \Phi_{\bar{A}}(N, i+1) \Phi_{11}(N, i+1)^{-1} \Phi_{12}(N, i+1) \Phi_{11}(N, i+1)\t \Phi_{11}(N, 0)^{- \mbox{\tiny\sf T}} = \Phi_{11}(N, 0)^{-1} \big[\Phi_{11}(N, i+1) \Phi_{12}(N, i+1)\t + \Phi_{12}(N, i+1) \Pi_{i+1} \Phi_{12}(N, i+1)\t\big] \Phi_{11}(N, 0)^{- \mbox{\tiny\sf T}}$. 
From \eqref{eqn:phi-M-blk} and Lemma~\ref{lem:phi-eqn} it follows that $- \Phi_{11}(N, 0)^{-1} \Phi_{12}(N, 0) + \Phi_{11}(i+1, 0)^{-1} \Phi_{12}(i+1, 0) = - \Phi_{11}(i+1, 0)^{-1} \Phi_{12}(N, i+1)\t \Phi_{11}(N, 0)^{- \mbox{\tiny\sf T}} \triangleq W \succeq 0$. 
In view of \eqref{eqn:pi-k-alt}, it follows that as $\Pi_{0} \to \bar{P}$, then, $\Phi_{11}(N, 0)^{-1} \big[\Phi_{11}(N, i+1) \Phi_{12}(N, i+1)\t + \Phi_{12}(N, i+1) \Pi_{i+1} \Phi_{12}(N, i+1)\t\big] \Phi_{11}(N, 0)^{- \mbox{\tiny\sf T}} \to W (\Gamma \Gamma\t + W)^{+} W - W$, where $M^{+}$ denotes the pseudo-inverse of the matrix $M$. 
Thus, it suffices to show that $z\t \big[W (\Gamma \Gamma\t + W)^{+} W - W\big] = 0$. 
Let $\rank W = r$ and $\rank [W ~~ \Gamma] = m$, where $r \leq m \leq n$. 
With an appropriate coordinate transformation, we can partition $W$ and $\Gamma$ as follows: 
\begin{equation*}
W_{n \times n} = 
\begin{bmatrix}
W_{1} & 0_{r \times (n-r)} \\
0_{(m-r) \times r} & 0 \\
0_{(n-m) \times r} & 0 
\end{bmatrix}
, \quad 
\Gamma_{n \times (n-\ell)} = 
\begin{bmatrix}
\Gamma_{1} \\
\Gamma_{2} \\
0
\end{bmatrix}, 
\end{equation*}
where $W_{1} \in \R^{r \times r}$, $W_{1} \succ 0$, $\Gamma_{1} \in \R^{r \times (n-\ell)}$, and $\Gamma_{2} \in \R^{(m-r) \times (n-\ell)}$. 
Let $z = [z_{1}\t ~~ z_{2}\t ~~ z_{3}\t]\t \in \R^{n}$, where $z_{1} \in \R^{r}$, $z_{2} \in \R^{m-r}$, and $z_{3} \in \R^{n-m}$. 
Since $z\t \Gamma = 0$, we have $z_{1}\t \Gamma_{1} + z_{2}\t \Gamma_{2} = 0$. 
Since $\Gamma_{2} P_{2} = 0$, it follows that $z_{1}\t \Gamma_{1} P_{2} = z_{2}\t \Gamma_{2} P_{2} = 0$. 
Since 
$W - W (W + \Gamma \Gamma\t)^{+} W = 
\begin{bmatrix}
\Gamma_{1} P_{2} (I_{n-\ell} + P_{2} \Gamma_{1}\t W_{1}^{-1} \Gamma_{1} P_{2})^{-1} P_{2} \Gamma_{1}\t & 0_{r \times (n-r)} \\
0_{(n-r) \times r} & 0_{(n-r) \times (n-r)}
\end{bmatrix}$, 
where $P_{2} \triangleq I_{n-\ell} - \Gamma_{2}\t (\Gamma_{2} \Gamma_{2}\t)^{-1} \Gamma_{2}$~\cite[Lemma~5]{liu2024reach}, we can verify that $z\t \big(W - W (W + \Gamma \Gamma\t)^{+} W\big) = 0$. 
Since $z$ can take $\ell$ linearly independent values, in light of \eqref{sol:sigma-Q}, $\ell$ eigenvalues of $\Sigma_{N} - D_{N-1} D_{N-1}\t$ approach $0$. \hfill
\end{proof}


Proposition~\ref{prp:sigma1-Q} can be used to provide an explicit map from $\Pi_{0}$ to $\Sigma_{N}$ for the coupled matrix difference equations \eqref{eqn:pi-Q} and \eqref{eqn:sigma-Q} with $\Sigma_{0} = \bar{\Sigma}_{0} \succ 0$. 
Specifically, from \eqref{sol:sigma-Q} we can write 
\begin{align} \label{sol:sigma-N}
\Sigma_{N} 
&= \Phi_{\bar{A}}(N, 0) \bigg[ \bar{\Sigma}_{0} + \sum_{i=0}^{N-1} \Phi_{\bar{A}}(0, i+1) D_{i} D_{i}\t \Phi_{\bar{A}}(0, i+1)\t \bigg] 
\nonumber \\* 
&\hspace{50mm} 
\times \Phi_{\bar{A}}(N, 0)\t. 
\end{align}
Let 
$\mathcal{P} \triangleq \big\{\Pi \in \mathbb{R}^{n \times n} \,|\, \Pi = \Pi\t \prec - \Phi_{12}(N, 0)^{-1} \Phi_{11}(N, 0)\big\}$ 
and 
$\mathcal{R} \triangleq \big\{\Sigma \in \mathbb{R}^{n \times n} \,|\, \Sigma = \Sigma\t \succ D_{N-1} D_{N-1}\t\big\}$. 
From \eqref{sol:sigma-N} and \eqref{eqn:phi-AB-pi-Q}, let $f: \mathcal{P} \to \mathcal{R}$ be defined by $\Sigma_{N} = f (\Pi_{0})$, where 
\begin{align} \label{map:pi0-sigma1-Q}
&f(\Pi_{0}) 
= \Big(\Phi_{11}(N, 0) + \Phi_{12}(N, 0) \Pi_{0}\Big) 
\bigg[ \bar{\Sigma}_{0} 
\nonumber \\* 
& \! + \hspace{-1mm} \sum_{i=0}^{N-1} \hspace{-1mm} \Big( \hspace{-0.5mm} \Phi_{11}(i \hspace{-0.5mm} + \hspace{-0.5mm} 1, 0) \hspace{-0.5mm} + \hspace{-0.5mm} \Phi_{12}(i \hspace{-0.5mm} + \hspace{-0.5mm} 1, 0) \Pi_{0} \hspace{-0.5mm} \Big)^{-1} 
D_{i} D_{i}\t 
\Big( \hspace{-0.5mm} \Phi_{11}(i \hspace{-0.5mm} + \hspace{-0.5mm} 1, 0)\t 
\nonumber \\* 
& \! + \Pi_{0} \Phi_{12}(i \hspace{-0.5mm} + \hspace{-0.5mm} 1, 0)\t \Big)^{-1} \bigg] \! 
\Big(\Phi_{11}(N, 0)\t \hspace{-0.5mm} + \hspace{-0.5mm} \Pi_{0} \Phi_{12}(N, 0)\t \Big). 
\end{align}

\begin{remark}
In light of Proposition~\ref{prp:suff-Q}, under Assumptions~\ref{asm:A-inv} and \ref{asm:G-inv}, if $\bar{\Sigma}_{N} \in \image(f) \subseteq \mathcal{R}$, then, the control $u^{*}$ given by \eqref{ctrl:opt-Q} is optimal for the covariance steering problem. 
\end{remark}

To proceed, we compute the Jacobian of the map $f$ defined by \eqref{map:pi0-sigma1-Q}. 
For notational simplicity, we will use the notation $\Phi_{ij}^{k,0} \triangleq \Phi_{ij}(k, 0)$, where $i, j \in \{1, 2\}$ and let $\Delta \Pi_{0}$ denote a small increment in $\Pi_{0}$. 
Then, from~\cite{henderson1981deriving} we can write 
\begin{align*}
&\Big( \hspace{-0.5mm} \Phi_{11}^{i+1,0} \hspace{-0.5mm} + \hspace{-0.5mm} \Phi_{12}^{i+1,0} \Pi_{0} \hspace{-0.5mm} + \hspace{-0.5mm} \Phi_{12}^{i+1,0} \Delta \Pi_{0} \hspace{-0.5mm} \Big)^{-1} 
\hspace{-2.1mm} = \hspace{-0.8mm} \Big( \hspace{-0.5mm} \Phi_{11}^{i+1,0} \hspace{-0.5mm} + \hspace{-0.5mm} \Phi_{12}^{i+1,0} \Pi_{0} \hspace{-0.5mm} \Big)^{-1} 
\\*
&\hspace{1.5mm} 
- \Big( \hspace{-0.5mm} \Phi_{11}^{i+1,0} \hspace{-0.5mm} + \hspace{-0.5mm} \Phi_{12}^{i+1,0} \Pi_{0} \hspace{-0.5mm} \Big)^{-1} \hspace{-1mm}
\Phi_{12}^{i+1,0} \Delta \Pi_{0} \Big( \hspace{-0.5mm} \Phi_{11}^{i+1,0} \hspace{-0.5mm} + \hspace{-0.5mm} \Phi_{12}^{i+1,0} \Pi_{0} \hspace{-0.5mm} \Big)^{-1} 
\\*
&\hspace{60mm} 
+ O\Big(\big\| \Delta \Pi_{0} \big\|^{2}\Big).
\end{align*}
After collecting all the first order terms of $\Delta \Pi_{0}$, we obtain
\begin{align} \label{eqn:map-diff-Q}
&f\big(\Pi_{0} + \Delta \Pi_{0}\big) \hspace{-0.5mm} - \hspace{-0.5mm} f\big(\Pi_{0}\big) \hspace{-0.5mm} = \hspace{-0.5mm} O\Big( \hspace{-0.5mm} \big\| \Delta \Pi_{0} \big\|^{2} \Big) 
\hspace{-0.5mm} + \hspace{-0.5mm} \Phi_{\bar{A}}^{N, 0} \bigg[ W_{N,0} \Delta \Pi_{0} \bar{\Sigma}_{0} 
\nonumber \\* 
&\hspace{10mm} 
+ \bar{\Sigma}_{0} \Delta \Pi_{0} W_{N,0} 
+ \sum_{i=0}^{N-1} \Big[ \big(W_{N,0} - W_{i+1,0}\big) \Delta \Pi_{0} P_{i} 
\nonumber \\* 
&\hspace{10mm} 
+ P_{i} \Delta \Pi_{0} \big(W_{N,0} - W_{i+1,0}\big) \Big] \bigg]  \big(\Phi_{\bar{A}}^{N, 0}\big)\t, 
\end{align}
where $\Phi_{\bar{A}}^{N, 0} \triangleq \Phi_{\bar{A}}(N, 0)$, 
and, for $k = 0, 1, \ldots, N$, 
$P_{i} \triangleq 
\Big(\Phi_{11}^{i+1,0} + \Phi_{12}^{i+1,0} \Pi_{0}\Big)^{-1} 
D_{i} D_{i}\t 
\Big(\big(\Phi_{11}^{i+1,0}\big)\t + \Pi_{0} \big(\Phi_{12}^{i+1,0}\big)\t \Big)^{-1} \! \succeq 0$, 
and
\begin{align} \label{def:W}
&W_{k,0} \triangleq 
\\* 
&\Big( \hspace{-0.5mm} I_{n} + \Phi_{11}(k, 0)^{-1} \Phi_{12}(k, 0) \Pi_{0} \hspace{-0.5mm} \Big)^{-1} \Phi_{11}(k, 0)^{-1} \Phi_{12}(k, 0) \preceq 0. \nonumber
\end{align}


Given an $n \times n$ matrix $H = [h_{ij}]$, its vectorized version is 
$\vect(H) \triangleq [h_{11} \enspace \ldots \enspace h_{n1} \enspace h_{12} \enspace \ldots \enspace h_{n2} \enspace \ldots \enspace h_{1n} \enspace \ldots \enspace h_{nn}]\t$. 
Define the map $\bar{f}: \big\{\vect(\Pi_{0}) \in \mathbb{R}^{n^{2}} \,\big|\, \Pi_{0} \in \mathcal{P}\big\} \to \big\{\vect(\Sigma_{N}) \in \mathbb{R}^{n^{2}} \,\big|\, \Sigma_{N} \in \mathcal{R}\big\}$ such that 
$
\bar{f}\big(\vect(\Pi_{0})\big) = \vect\big(f(\Pi_{0})\big)
$, 
where $f$ is defined in \eqref{map:pi0-sigma1-Q}. 
It follows from vectorizing both sides of \eqref{eqn:map-diff-Q} that 
\begin{multline*}
\bar{f}\big(\vect(\Pi_{0}) + \vect(\Delta \Pi_{0})\big) - \bar{f}\big(\vect(\Pi_{0})\big) = 
\\* 
\partial \bar{f}\big(\vect(\Pi_{0})\big) \vect\big(\Delta \Pi_{0}\big) 
+ O\Big(\big\| \Delta \Pi_{0} \big\|^{2}\Big),
\end{multline*}
where, 
\begin{multline} \label{eqn:map-jacob-Q}
\partial \bar{f}\big(\vect(\Pi_{0})\big) = 
\Phi_{\bar{A}}^{N, 0} \otimes \Phi_{\bar{A}}^{N, 0} \bigg[ \bar{\Sigma}_{0} \otimes W_{N,0} + W_{N,0} \otimes \bar{\Sigma}_{0} 
\\* 
+ \sum_{i=0}^{N-1} \Big[ P_{i} \otimes \big(W_{N,0} - W_{i+1,0}\big) + \big(W_{N,0} - W_{i+1,0}\big) \otimes P_{i} \Big] \bigg],
\end{multline}
where $\otimes$ denotes the Kronecker product. 
Thus, $\partial \bar{f}\big(\vect(\Pi_{0})\big)$ is the Jacobian of the map $\bar{f}$ at $\vect(\Pi_{0})$.


Finally, we are ready to show the existence and uniqueness of the solution to the coupled matrix difference equations \eqref{eqn:coupled}.

\begin{proposition} \label{prp:map-inv-Q}
Let Assumptions~\ref{asm:A-inv}-\ref{asm:G-D} hold. 
Then, for any given $\bar{\Sigma}_{0} \succ 0$, the map $f$ defined by \eqref{map:pi0-sigma1-Q} is a homeomorphism. 
Thus, for any $\Sigma_{N} \in \mathcal{R}$, there exists a unique $\Pi_{0} \in \mathcal{P}$ such that $\Sigma_{N} = f(\Pi_{0})$.
\end{proposition}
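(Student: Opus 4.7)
The plan is to apply the Hadamard-Caccioppoli global inverse function theorem: a smooth, proper, local diffeomorphism between a connected manifold and a simply connected manifold is a diffeomorphism. Since $\mathcal{P}$ and $\mathcal{R}$ are open convex subsets of the space of symmetric $n\times n$ matrices, they are smooth manifolds of dimension $n(n+1)/2$ that are both connected and simply connected. Smoothness of $f$ is immediate from \eqref{map:pi0-sigma1-Q}, and the inclusion $f(\mathcal{P})\subseteq\mathcal{R}$ follows from \eqref{sol:sigma-N} by isolating the $i=N-1$ summand: $\Sigma_N-D_{N-1}D_{N-1}\t\succeq\Phi_{\bar{A}}(N,0)\bar{\Sigma}_0\Phi_{\bar{A}}(N,0)\t\succ 0$, where $\Phi_{\bar{A}}(N,0)$ is invertible because each $\bar{A}_k$ is, via \eqref{def:A-bar} and Theorem~\ref{thm:pi-pos-exist-cond}.

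Properness will be verified by extending Proposition~\ref{prp:sigma1-Q} to arbitrary sequences $\{\Pi_0^{(\ell)}\}\subset\mathcal{P}$ leaving every compact subset. Such sequences either diverge, that is $\|\Pi_0^{(\ell)}\|\to\infty$, in which case $\|\Phi_{\bar{A}}(N,0)^{(\ell)}\|\to\infty$ (using invertibility of $\Phi_{12}(N,0)$ granted by Assumption~\ref{asm:G-inv}), whence $\|f(\Pi_0^{(\ell)})\|\to\infty$ via \eqref{sol:sigma-N} together with $\bar{\Sigma}_0\succ 0$; or they accumulate at $\partial\mathcal{P}$, in which case by Theorem~\ref{thm:pi-pos-exist-cond} some eigenvalue of $I_n+\Phi_{11}(N,0)^{-1}\Phi_{12}(N,0)\Pi_0^{(\ell)}$ tends to zero, so $\Phi_{\bar{A}}(N,0)^{(\ell)}$ becomes singular along the corresponding eigenvector and $f(\Pi_0^{(\ell)})\to\partial\mathcal{R}$ by \eqref{sol:sigma-N}. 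Either way, $\{f(\Pi_0^{(\ell)})\}$ leaves every compact subset of $\mathcal{R}$.

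The principal obstacle is to show that $f$ is a local diffeomorphism, equivalently, that the Jacobian \eqref{eqn:map-jacob-Q} is injective on the subspace of symmetric matrices. Since $\Phi_{\bar{A}}^{N,0}\otimes\Phi_{\bar{A}}^{N,0}$ is invertible, this reduces to showing that any symmetric $X$ that zeros out the bracketed operator in \eqref{eqn:map-jacob-Q} must itself be zero. Taking the Frobenius inner product with $X$ and using $\bar{\Sigma}_0+\sum_i P_i=\Phi_{\bar{A}}(N,0)^{-1}\Sigma_N\Phi_{\bar{A}}(N,0)^{-\mbox{\tiny\sf T}}$ reduces the condition to
\begin{equation*}
\mathrm{tr}\bigl(W_{N,0}\,X\bar{\Sigma}_0 X\bigr)=\sum_{i=0}^{N-1}\mathrm{tr}\bigl((W_{i+1,0}-W_{N,0})\,X P_i X\bigr).
\end{equation*}
The two sign facts I aim to prove are: (i) $W_{N,0}\prec 0$ strictly, leveraging Assumption~\ref{asm:G-inv} and the symplectic identities (e.g.\ $\Phi_{11}\Phi_{12}\t=\Phi_{12}\Phi_{11}\t$) that the blocks of $\Phi_M$ inherit from the symplectic structure of $M_k$; and (ii) the Loewner monotonicity $W_{k+1,0}\preceq W_{k,0}$, giving $W_{i+1,0}-W_{N,0}\succeq 0$ for $i\leq N-1$. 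Property (ii) will be established inductively from the block recursion $\Phi_M(k+1,0)=M_k\Phi_M(k,0)$ combined with the identity $W_{k,0}=\Phi_{\bar{A}}(k,0)^{-1}\Phi_{12}(k,0)$. Since $\bar{\Sigma}_0\succ 0$ and $P_i\succeq 0$, these properties make the left-hand side strictly negative whenever $X\neq 0$ and the right-hand side nonnegative, so the displayed identity forces $X=0$.

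With smoothness, the local diffeomorphism property, and properness in hand, Hadamard-Caccioppoli yields that $f:\mathcal{P}\to\mathcal{R}$ is a diffeomorphism, and therefore a homeomorphism, proving the proposition.
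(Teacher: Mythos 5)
Your proposal follows essentially the same route as the paper: Hadamard's global inverse function theorem, with local invertibility of the Jacobian established by exactly the same trace inequality built on the two sign facts $W_{N,0}\prec 0$ and $W_{i+1,0}-W_{N,0}\succeq 0$ (the latter is the paper's Lemma~\ref{lem:phi-11-12-inv-mono}), plus properness of $f$. The only substantive divergence is the properness step: the paper bounds $f^{-1}(\mathcal{K})$ directly from \eqref{map:pi0-sigma1-Q} using the invertibility of $\bar{\Sigma}_{0}$ and $\Phi_{12}(N,0)$, whereas your sequence-based version asserts in the boundary case that $f(\Pi_0^{(\ell)})\to\partial\mathcal{R}$, which is more delicate than you indicate, since the intermediate factors $\Phi_{\bar{A}}(i+1,0)^{-1}$ entering the $P_i$ may themselves degenerate as $\Pi_0$ approaches $\partial\mathcal{P}$, so that limit needs an extra argument; the paper's direct boundedness argument is the cleaner way to close that step.
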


\begin{proof}
Clearly, $\partial \bar{f}\big(\vect(\Pi_{0})\big)$ is continuous in $\vect(\Pi_{0})$. 
First, we show that, for each $\Pi_{0} \in \mathcal{P}$, $\partial \bar{f}\big(\vect(\Pi_{0})\big)$ is nonsingular. 
Since $\Phi_{\bar{A}}^{N, 0} \otimes \Phi_{\bar{A}}^{N, 0}$ is nonsingular, it suffices to show that the term in the large square brackets of \eqref{eqn:map-jacob-Q}, that is, 
$
S \triangleq 
\bar{\Sigma}_{0} \otimes W_{N,0} + W_{N,0} \otimes \bar{\Sigma}_{0} 
\hspace{-0.5mm} + \hspace{-1mm} \sum_{i=0}^{N-1} \hspace{-0.5mm} \Big[ P_{i} \otimes \big(W_{N,0} - W_{i+1,0}\big)  
+ \big(W_{N,0} - W_{i+1,0}\big) \otimes P_{i} \Big]
$ 
is nonsingular. 
From Lemma~\ref{lem:phi-eqn} and Corollary~\ref{cor:phi-12-inv} in Appendix~C, we have that $\Phi_{11}(N, 0)$ and $\Phi_{12}(N, 0)$ are invertible.
It then follows from \eqref{def:W} that $W_{N,0} = \big(\Phi_{12}(N, 0)^{-1} \Phi_{11}(N, 0) + \Pi_{0}\big)^{-1}$. 
Since $\Pi_{0} \in \mathcal{P}$, we have $W_{N,0} \prec 0$. 
From Lemma~\ref{lem:phi-11-12-inv-mono} in Appendix~E, it follows that, for all $i = 0, 1, \ldots, N-1$, $W_{N,0} - W_{i+1,0} \preceq 0$. 
One can readily check that $S$ is symmetric, because $\bar{\Sigma}_{0} \succ 0$, $W_{N,0} \prec 0$, $P_{i} \succeq 0$, and $W_{N,0} - W_{i+1,0} \preceq 0$ are all symmetric. 
Let $X \neq 0$ be an $n \times n$ matrix. 
Then,
\begin{align*}
&\vect(X)\t S \vect(X) 
\\
&= \trace \bigg( X\t W_{N,0} X \bar{\Sigma}_{0} + X\t \bar{\Sigma}_{0} X W_{N,0} + \sum_{i=0}^{N-1} \Big[ X\t 
\\*
&\hspace{5mm} \times 
\big(W_{N,0} - W_{i+1,0}\big) X P_{i} + X\t P_{i} X \big(W_{N,0} - W_{i+1,0}\big) \Big] \bigg) 
\\
&\leq \trace \Big( \bar{\Sigma}_{0}^{\frac{1}{2}} X\t W_{N,0} X \bar{\Sigma}_{0}^{\frac{1}{2}} + \bar{\Sigma}_{0}^{\frac{1}{2}} X W_{N,0} X\t \bar{\Sigma}_{0}^{\frac{1}{2}} \Big) < 0.
\end{align*}
Thus, $S \prec 0$. 
Therefore, $\partial \bar{f}\big(\vect(\Pi_{0})\big)$ is nonsingular at each $\vect(\Pi_{0})$ in the domain of $\bar{f}$.

Next, we show that $f$ is proper, that is, for any compact subset $\mathcal{K} \subseteq \mathcal{R}$, 
the inverse image $f^{-1}(\mathcal{K}) \subseteq \mathcal{P}$ is compact. 
Since $\mathcal{K}$ is bounded in $\mathbb{R}^{n \times n}$, in view of \eqref{map:pi0-sigma1-Q}, the set
\begin{equation*}
\hspace{-0.5mm} \Big\{ \hspace{-0.6mm} \Big( \hspace{-0.5mm} \Phi_{11}^{N,0} + \Phi_{12}^{N,0} \Pi_{0} \hspace{-0.5mm} \Big) \bar{\Sigma}_{0} \Big( \hspace{-0.5mm} \big(\Phi_{11}^{N,0}\big)\t + \Pi_{0} \big(\Phi_{12}^{N,0}\big)\t \Big) \Big| \Pi_{0} \in f^{-1}(\mathcal{K}) \hspace{-0.5mm} \Big\}
\end{equation*}
is also bounded in $\mathbb{R}^{n \times n}$. 
Since $\bar{\Sigma}_{0}$ and $\Phi_{12}^{N,0}$ are invertible, $f^{-1}(\mathcal{K})$ is bounded in $\mathbb{R}^{n \times n}$. 
In light of Proposition~\ref{prp:sigma1-Q}, as $\Pi_{0}$ approaches the boundary of $\mathcal{P}$, then, $\Sigma_{N}$ approaches the boundary of $\mathcal{R}$. 
Since $f$ is continuous and $\mathcal{K}$ is closed in $\mathbb{R}^{n \times n}$, the inverse image $f^{-1}(\mathcal{K})$ is also closed in $\mathbb{R}^{n \times n}$. 
Therefore, $f^{-1}(\mathcal{K})$ is compact, and thus $f$ is proper. 
Since the set $\mathcal{R}$ is convex, it is simply connected~\cite{krantz2002implicit}. 
From Hadamard's global inverse function theorem~\cite{krantz2002implicit}, $f$ is a homeomorphism. \hfill
\end{proof}

\subsection{Separation of Mean and Covariance Steering} \label{subsec:separation}

In this section, we establish the independence of the optimal mean steering and covariance steering problems.

\begin{theorem} \label{thm:mean-covar}
Under Assumptions~\ref{asm:A-inv} and \ref{asm:G-inv}, 
the optimal control \eqref{ctrl:opt-Q} can be written in the form 
\begin{equation} \label{eq:state-FB}
u_{k}^{*} = K_{k} \big(x_{k}^{*} - \mu_{k}\big) + v_{k},
\end{equation}
where $K_{k} = - \big(R_{k} + B_{k}\t \Pi_{k+1} B_{k}\big)^{-1} B_{k}\t \Pi_{k+1} A_{k}$ is the state feedback matrix, 
$\mu_{k} \triangleq \mathbb{E}[x_{k}^{*}]$ is the optimal state mean given by 
\begin{multline} \label{eqn:mu-k}
\mu_{k} = \Big(\Phi_{11}(k, 0) - \Phi_{12}(k, 0) \Phi_{12}(N, 0)^{-1} \Phi_{11}(N, 0)\Big) \bar{\mu}_{0} 
\\* 
+ \Phi_{12}(k, 0) \Phi_{12}(N, 0)^{-1} \bar{\mu}_{N}, 
\end{multline}
and $v_{k}$ is the optimal feed-forward control term given by 
\begin{align} \label{eqn:v-k}
v_{k} &= - R_{k}^{-1} B_{k}\t \Big(\Phi_{21}(k+1, 0) - \Phi_{22}(k+1, 0) \Phi_{12}(N, 0)^{-1} 
\nonumber \\* 
& \times \Phi_{11}(N, 0)\Big) \bar{\mu}_{0} 
- R_{k}^{-1} B_{k}\t \Phi_{22}(k+1, 0) \Phi_{12}(N, 0)^{-1} \bar{\mu}_{N}. 
\end{align}
\end{theorem}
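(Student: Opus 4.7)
The plan is to exploit the Hamiltonian structure encoded by $M_k$ in \eqref{def:M}: by showing that the mean $\mu_k$ together with an auxiliary mean-costate $\Lambda_k \triangleq \Pi_k \mu_k + \psi_k$ jointly propagate under $M_k$, the formulas \eqref{eqn:mu-k} and \eqref{eqn:v-k} will follow from a standard two-point boundary value problem solve based on the block partition \eqref{eqn:phi-M-blk}. As in Section~\ref{subsec:riccati}, I take $R_k = I_p$ without loss of generality. The decomposition \eqref{eq:state-FB} itself is immediate from \eqref{ctrl:opt-Q-psi} upon writing $K_k x_k^* = K_k(x_k^* - \mu_k) + K_k \mu_k$, which forces
\begin{equation*}
v_k = K_k \mu_k - B_k\t\big(I_n + \Pi_{k+1} B_k B_k\t\big)^{-1} \psi_{k+1},
\end{equation*}
after invoking the Sherman-Morrison-Woodbury identity $\big(I_p + B_k\t \Pi_{k+1} B_k\big)^{-1} B_k\t = B_k\t \big(I_n + \Pi_{k+1} B_k B_k\t\big)^{-1}$; taking expectations in \eqref{eqn:mu-Q} then delivers the mean dynamics $\mu_{k+1} = A_k \mu_k + B_k v_k$ with boundary data $\mu_0 = \bar{\mu}_0$, $\mu_N = \bar{\mu}_N$.

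The main technical step is to verify the Hamiltonian recursion
\begin{equation*}
\begin{bmatrix} \mu_{k+1} \\ \Lambda_{k+1} \end{bmatrix} = M_k \begin{bmatrix} \mu_k \\ \Lambda_k \end{bmatrix}, \qquad k = 0, 1, \dots, N-1.
\end{equation*}
Its bottom block is equivalent to $A_k\t \Lambda_{k+1} + Q_k \mu_k = \Lambda_k$; substituting $\Lambda_k = \Pi_k \mu_k + \psi_k$, using \eqref{eqn:psi-Q} to express $\psi_k$ in terms of $\psi_{k+1}$, eliminating $\mu_{k+1}$ through \eqref{eqn:mu-Q}, and invoking the commutation identity $\Pi_{k+1}\big(I_n + B_k B_k\t \Pi_{k+1}\big)^{-1} = \big(I_n + \Pi_{k+1} B_k B_k\t\big)^{-1} \Pi_{k+1}$, the $\psi_{k+1}$ terms cancel and the remaining coefficient of $\mu_k$ collapses to precisely the Riccati \eqref{eqn:pi-Q}. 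The top block then reduces to $\mu_{k+1} = A_k \mu_k - B_k B_k\t \Lambda_{k+1}$, which agrees with the mean dynamics above provided $v_k = - B_k\t \Lambda_{k+1}$; the latter identification follows by a parallel SMW rearrangement of the algebraic identity $\Lambda_{k+1} = \Pi_{k+1} \mu_{k+1} + \psi_{k+1} = \big(I_n + \Pi_{k+1} B_k B_k\t\big)^{-1}\big(\Pi_{k+1} A_k \mu_k + \psi_{k+1}\big)$.

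With the Hamiltonian recursion established, $(\mu_k, \Lambda_k)\t = \Phi_M(k, 0)(\bar{\mu}_0, \Lambda_0)\t$ yields $\mu_k = \Phi_{11}(k, 0) \bar{\mu}_0 + \Phi_{12}(k, 0) \Lambda_0$ and $\Lambda_k = \Phi_{21}(k, 0) \bar{\mu}_0 + \Phi_{22}(k, 0) \Lambda_0$. The terminal condition $\mu_N = \bar{\mu}_N$, together with Corollary~\ref{cor:phi-12-inv} of Appendix~C (which guarantees that $\Phi_{12}(N, 0)$ is invertible under Assumptions~\ref{asm:A-inv}-\ref{asm:G-inv}), determines
\begin{equation*}
\Lambda_0 = \Phi_{12}(N, 0)^{-1}\big(\bar{\mu}_N - \Phi_{11}(N, 0) \bar{\mu}_0\big).
\end{equation*}
Back-substitution into the block expression for $\mu_k$ gives \eqref{eqn:mu-k}, and substitution into $v_k = - R_k^{-1} B_k\t \Lambda_{k+1}$ with $\Lambda_{k+1}$ expanded through the same block partition gives \eqref{eqn:v-k} after reinstating the general $R_k \succ 0$. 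The main obstacle is the Hamiltonian closure identity of the second paragraph: several SMW manipulations must fit together so that the $\psi_{k+1}$ terms cancel exactly and the residual collapses to the Riccati; once closure is in place, the remaining derivation is a routine two-point boundary value problem solve relying only on the invertibility of $\Phi_{12}(N, 0)$ already guaranteed in Appendix~C.
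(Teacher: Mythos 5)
Your proof is correct, but it takes a genuinely different route from the paper's. The paper derives \eqref{eqn:mu-k} by unrolling the closed-loop mean recursion in terms of $\Phi_{\bar{A}}$, substituting the explicit $\psi_{k+1}$ from \eqref{sol:psi-Q}, and converting the resulting controllability Gramian via the identity $\bar{G}(k,s) = -\Phi_{12}(k,s)\Phi_{\bar{A}}(k,s)\t$ (Lemma~\ref{lem:gramian-bar}) together with $\Phi_{\bar{A}}(k,s) = \Phi_{11}(k,s)+\Phi_{12}(k,s)\Pi_{s}$ (Proposition~\ref{prp:phi-AB-pi-Q}); it then obtains \eqref{eqn:v-k} by a lengthy direct simplification of the coefficient matrices of $\bar{\mu}_{0}$ and $\bar{\mu}_{N}$ using the symplectic block identities of Appendix~C. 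You instead introduce the mean-costate $\Lambda_{k} = \Pi_{k}\mu_{k} + \psi_{k}$ and verify that $(\mu_{k},\Lambda_{k})$ propagates under the Hamiltonian matrix $M_{k}$, after which everything reduces to a linear two-point boundary value problem solved by inverting $\Phi_{12}(N,0)$ (Corollary~\ref{cor:phi-12-inv}). I checked the two key algebraic claims — the closure of the bottom block, where the $\psi_{k+1}$ terms are absorbed by \eqref{eqn:psi-Q} and the $\mu_{k}$ coefficient collapses to \eqref{eqn:pi-Q}, and the identification $v_{k} = -R_{k}^{-1}B_{k}\t\Lambda_{k+1}$ via the Sherman--Morrison--Woodbury rearrangements — and both go through. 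Your route buys a substantially shorter argument that bypasses Lemma~\ref{lem:gramian-bar}, Proposition~\ref{prp:phi-AB-pi-Q}, and the page of coefficient-matrix manipulation, and it makes transparent why the blocks of $\Phi_{M}$ appear in \eqref{eqn:mu-k} and \eqref{eqn:v-k}; what the paper's computation buys in exchange is that it exercises and reuses the Gramian and transition-matrix identities that are needed elsewhere (e.g., in the proof of Theorem~\ref{thm:exist-unique-Q}), so those lemmas are not overhead specific to this theorem.
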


\begin{proof}
First, let 
\begin{equation} \label{def:B-bar}
\bar{B}_{k} \triangleq B_{k} \big(R_{k} + B_{k}\t \Pi_{k+1} B_{k} \big)^{-\frac{1}{2}}. 
\end{equation}
Let $\bar{G}(k, s)$ denote the reachability Gramian of the pair $\big\{\big(\bar{A}_{i}, \bar{B}_{i}\big) \, \big| \, i = s, s \pm 1, \ldots, k \big\}$ from time $s$ to $k$. 
From Lemma~\ref{lem:gramian-bar} in Appendix~E, it follows that $\bar{G}(k, s) = - \Phi_{12}(k, s) \Phi_{\bar{A}}(k, s)\t$. 
In light of \eqref{ctrl:opt-Q} and \eqref{eq:state-FB}, we have 
\begin{equation} \label{eqn:K-k}
K_{k} = - \big(R_{k} + B_{k}\t \Pi_{k+1} B_{k}\big)^{-1} B_{k}\t \Pi_{k+1} A_{k}. 
\end{equation}
To show \eqref{eqn:mu-k}, notice that
from \eqref{eqn:mu-Q}, \eqref{sol:psi-Q}, \eqref{eqn:gramian-bar}, and \eqref{eqn:phi-AB-pi-Q}, it follows that 
$
\mu_{k} 
= 
\big(\Phi_{11}(k, 0) - \Phi_{12}(k, 0) \Phi_{12}(N, 0)^{-1} \Phi_{11}(N, 0)\big) \bar{\mu}_{0} 
+ \Phi_{12}(k, 0) \Phi_{12}(N, 0)^{-1} \bar{\mu}_{N}
$. 
Next, we show \eqref{eqn:v-k}. 
In view of \eqref{ctrl:opt-Q}, \eqref{eq:state-FB}, \eqref{eqn:mu-k}, and \eqref{eqn:gramian-bar}, we obtain 
\begin{align*}
v_{k} \hspace{-0.5mm} 
&= 
\Big[K_{k} \Big(\Phi_{11}(k, 0) - \Phi_{12}(k, 0) \Phi_{12}(N, 0)^{-1} \Phi_{11}(N, 0)\Big) 
\\*
&\hspace{4mm} 
+ \big(R_{k} + B_{k}\t \Pi_{k+1} B_{k}\big)^{-1} B_{k}\t \Phi_{\bar{A}}(0, k+1)\t \Phi_{12}(N, 0)^{-1} 
\\*
&\hspace{4mm} 
\times \Phi_{\bar{A}}(N, 0) \Big] \bar{\mu}_{0} 
+ \Big[K_{k} \Phi_{12}(k, 0) \Phi_{12}(N, 0)^{-1} 
\\*
&\hspace{4mm} 
- \hspace{-0.5mm} \big( \hspace{-0.5mm} R_{k} \hspace{-0.5mm} + \hspace{-0.5mm} B_{k}\t \Pi_{k+1} B_{k} \hspace{-0.7mm} \big)^{-1} \hspace{-0.5mm} B_{k}\t \Phi_{\bar{A}}(0, k \hspace{-0.5mm} + \hspace{-0.5mm} 1 \hspace{-0.5mm} )\t \Phi_{12}(N, 0)^{-1} \hspace{-0.5mm} \Big] \bar{\mu}_{N}. 
\end{align*}
Since 
$\Phi_{M}(k, 0) = M_{k}^{-1} \Phi_{M}(k+1, 0)$, it follows from \eqref{eqn:M-inv}, after replacing $B_{k} B_{k}\t$ with $B_{k} R_{k}^{-1} B_{k}\t$, that 
\begin{subequations} \label{eqn:phi-k-k+1-0}
\begin{align}
\Phi_{11}(k, 0) \hspace{-0.5mm} &= \hspace{-0.5mm} A_{k}^{-1} \Phi_{11}(k \hspace{-0.5mm} + \hspace{-0.5mm} 1, 0) + A_{k}^{-1} B_{k} R_{k}^{-1} B_{k}\t \Phi_{21}(k \hspace{-0.5mm} + \hspace{-0.5mm} 1, 0), \label{eqn:phi-11-k-k+1-0}
\\ 
\Phi_{12}(k, 0) \hspace{-0.5mm} &= \hspace{-0.5mm} A_{k}^{-1} \Phi_{12}(k \hspace{-0.5mm} + \hspace{-0.5mm} 1, 0) + A_{k}^{-1} B_{k} R_{k}^{-1} B_{k}\t \Phi_{22}(k \hspace{-0.5mm} + \hspace{-0.5mm} 1, 0). \label{eqn:phi-12-k-k+1-0}
\end{align}
\end{subequations}
From \eqref{eqn:K-k}, \eqref{eqn:phi-12-k-k+1-0}, \eqref{eqn:phi-AB-pi-Q}, \eqref{eqn:phi-12-equiv}, and \eqref{eqn:phi-11-22}, it follows that the coefficient for $\bar{\mu}_{N}$ in $v_{k}$ is $- R_{k}^{-1} B_{k}\t \Phi_{22}(k+1, 0) \Phi_{12}(N, 0)^{-1}$. 
Similarly, from \eqref{eqn:K-k}, \eqref{eqn:phi-k-k+1-0}, \eqref{eqn:phi-AB-pi-Q}, \eqref{eqn:phi-12-equiv}, \eqref{eqn:phi-11-22}, and \eqref{sol:pi-Q-0}, it follows that the coefficient for $\bar{\mu}_{0}$ in $v_{k}$ is $- R_{k}^{-1} B_{k}\t \big(\Phi_{21}(k + 1, 0) - \Phi_{22}(k + 1, 0) \Phi_{12}(N, 0)^{-1} \Phi_{11}(N, 0)\big)$. 
Thus, \eqref{eqn:v-k} holds. \hfill
\end{proof}

\begin{remark}
Note that in the optimal control \eqref{eq:state-FB} the state feedback matrix $K_{k}$ depends on $\Pi_{k+1}$, which is determined solely by the initial and terminal state covariances. 
On the other hand, the state mean dynamics $\mu_{k}$ and the feed-forward control term $v_{k}$ are determined solely by the initial and terminal state means. 
\end{remark}

\section{Computation of the Optimal Control} \label{sec:compute}

In this section, we propose two numerical algorithms to solve for the optimal control law of the covariance steering problem formulated in Section~\ref{sec:prob}. 
The first algorithm adopts Newton's root finding method to compute $\Pi_{0}$ by exploiting the map $f$ given by \eqref{map:pi0-sigma1-Q} and its Jacobian, given by \eqref{eqn:map-jacob-Q}. 
The second algorithm recasts the optimal covariance steering problem as a convex optimization problem, specifically, a semi-definite program (SDP). 
It is shown that the optimal solution to the SDP problem also solves the original non-convex covariance steering problem.

\subsection{Newton's Method} \label{subsec:Newton}

We describe an approach based on Newton's root-finding method for solving the optimal covariance control problem.

The first step is to absorb $R_{k} \succ 0$ into $B_{k}$ by defining, for $k = 0, 1, \ldots, N-1$, $\hat{B}_{k} \triangleq B_{k} R_{k}^{-\frac{1}{2}}$ and $\hat{R}_{k} \equiv I_{p}$. 
For ease of notation, we will substitute $B_{k}$ and $R_{k}$ below with $\hat{B}_{k}$ and $\hat{R}_{k}$, respectively. 
The second step is, for $i, j = 1, 2$ and $k = 0, 1, \ldots, N-1$, to determine the coefficient matrices $\Phi_{ij}(k+1, 0)$ and $\Phi_{ij}(N, k)$ from \eqref{eqn:phi-M-blk}. 
The third step is to compute $\Pi_{0}$ using Newton's method. 
That is, starting from an initial guess of $\Pi_{0, 0} \in \mathcal{P}$, iterate the equation 
\begin{align*}
&\vect \big(\Pi_{0, i+1}\big) = \vect \big(\Pi_{0, i}\big) - \Big[\partial \bar{f} \big(\vect (\Pi_{0, i})\big)\Big]^{-1} 
\\* 
&\hspace{22mm} \times \Big[\bar{f} \big(\vect (\Pi_{0, i})\big) - \vect \big(\bar{\Sigma}_{N}\big)\Big], ~ i = 0, 1, \ldots, 
\end{align*}
where $\Pi_{0, i}$ denotes the value of $\Pi_{0}$ at the $i$th iteration of the Newton's method. 
This iteration should be terminated when $\big\Vert \bar{f} \big(\vect (\Pi_{0, i})\big) - \vect \big(\bar{\Sigma}_{N}\big) \big\Vert$ is sufficiently small. 
The fourth step is, for $k = 0, 1, \ldots, N-1$, to calculate $\Pi_{k+1}$ via \eqref{sol:pi-Q-0} and $\Phi_{\bar{A}}(N, k)$ via 
\eqref{eqn:phi-AB-pi-Q}. 
Finally, the optimal control $u^{*}_{k}$ can be computed using \eqref{ctrl:opt-Q}. 


\subsection{Semi-Definite Program} \label{subsec:SDP}

In this section, we develop an SDP-based formulation for solving the optimal covariance control problem. 
Our approach extends the results of~\cite{balci2022exact} by showing that the calculated control law is globally optimal among the class of all admissible, linear or nonlinear feedback control laws. 
First, it follows from Theorem \ref{thm:exist-unique-Q} that, with Assumptions \ref{asm:A-inv}-\ref{asm:G-D}, the optimal control takes the state feedback form. 

The next proposition states that without Assumptions \ref{asm:A-inv}-\ref{asm:G-D} the optimal control that solves the covariance steering problem takes the form of a randomized state feedback control.
This result, of independent interest, illustrates the fact that the solution of the optimal covariance steering problem belongs to the class of randomized state-feedback controls.
In case the additional Assumptions \ref{asm:A-inv}-\ref{asm:G-D} hold, this randomization can be injected implicitly by the state $x_{k}$.

\begin{proposition} \label{prp:covar-ctrl}
For any admissible control $\hat{u}$ of system \eqref{sys:disc-time}, there exists a control $u$ of the form 
\begin{equation} \label{eqn:ctrl-equiv}
u_{k} = K_{k} (x_{k} - \mu_{k}) + v_{k} + \nu_{k}, 
\end{equation}
where $K_{k} \in \R^{p \times n}$ is the state feedback matrix, $v_{k} \in \R^{p}$ is the feed-forward term, $\mu_{k} = \mathbb{E}[x_{k}]$, and $\nu_{k} \in \R^{p}$ is an independent, square-integrable random vector with $\mathbb{E}\left[\nu_{k}\right] = 0$ and $\mathbb{E}\left[\nu_{k} \nu_{k}\t\right] \succeq 0$, 
such that, for all $k = 1, 2, \ldots, N$, the covariances of the state $x_{k}$ under $\hat{u}$ and $u$ are the same and have the same cost \eqref{cost-Q}. 
\end{proposition}

\begin{proof}
Let the state covariance of system \eqref{sys:disc-time} under the control $\hat{u}$ be denoted by $\hat{\Sigma}_{k}$. 
Let $\hat{v}_{k} \triangleq \mathbb{E}[\hat{u}_{k}]$. 
Notice from \eqref{sys:disc-time} that, for all $k = 0, 1, \ldots, N-1$, 
\begin{align} \label{eqn:sigma-ctrl}
&\hat{\Sigma}_{k+1} 
= 
\mathbb{E}\Big[\big(A_{k} (x_{k} - \mu_{k}) + B_{k} (\hat{u}_{k} - \hat{v}_{k})\big) 
\nonumber \\*
&\hspace{15mm} \times 
\big(A_{k} (x_{k} - \mu_{k}) + B_{k} (\hat{u}_{k} - \hat{v}_{k})\big)\t\Big] + D_{k} D_{k}\t 
\nonumber \\
&= 
A_{k} \hat{\Sigma}_{k} A_{k}\t \hspace{-0.5mm} + \hspace{-0.5mm} B_{k} \hat{\Sigma}^{ux}_{k} A_{k}\t \hspace{-0.5mm} + \hspace{-0.5mm} A_{k} \hat{\Sigma}^{ux \mbox{\tiny\sf T}}_{k} B_{k}\t \hspace{-0.5mm} + \hspace{-0.5mm} B_{k} \hat{\Sigma}^{uu}_{k} B_{k}\t \hspace{-0.5mm} + \hspace{-0.5mm} D_{k} D_{k}\t, 
\end{align}
where $\hat{\Sigma}^{ux}_{k} \triangleq \mathbb{E}\big[(\hat{u}_{k} - \hat{v}_{k}) (x_{k} - \mu_{k})\t\big] \in \R^{p \times n}$ and $\hat{\Sigma}^{uu}_{k} \triangleq \mathbb{E}\big[(\hat{u}_{k} - \hat{v}_{k}) (\hat{u}_{k} - \hat{v}_{k})\t\big] \in \R^{p \times p}$ are treated as the control in \eqref{eqn:sigma-ctrl}. 
Let $z \in \ker \hat{\Sigma}_{k}$. 
Then, it follows that $(x_{k}-\mu_{k})\t z = 0$ almost surely. 
It follows immediately that $\hat{\Sigma}^{ux}_{k} z = 0$. 
Thus, $\ker \hat{\Sigma}_{k} \subseteq \ker \hat{\Sigma}^{ux}_{k}$. 
Therefore, there exists $K_{k} \in \R^{p \times n}$ such that $\hat{\Sigma}^{ux}_{k} = K_{k} \hat{\Sigma}_{k}$.
Since 
$\begin{bmatrix}
\hat{\Sigma}_{k} & \hat{\Sigma}^{ux \mbox{\tiny\sf T}}_{k} \\
\hat{\Sigma}^{ux}_{k} & \hat{\Sigma}^{uu}_{k}
\end{bmatrix}
\succeq 0$, 
$\hat{\Sigma}_{k} \succeq 0$, and $\hat{\Sigma}^{ux}_{k} = K_{k} \hat{\Sigma}_{k}$, 
it follows that $\hat{\Sigma}^{uu}_{k} \succeq K_{k} \hat{\Sigma}_{k} K_{k}\t$~\cite[Theorem~1.20]{zhang2005schur}. 
Hence, there exists $V_{k} \succeq 0$ such that $\hat{\Sigma}^{uu}_{k} = K_{k} \hat{\Sigma}_{k} K_{k}\t + V_{k}$. 
Then, \eqref{eqn:sigma-ctrl} can be written, equivalently, as 
\begin{equation} \label{eqn:sigma-ctrl-equiv}
\hat{\Sigma}_{k+1} 
= 
\big(A_{k} + B_{k} K_{k}\big) \hat{\Sigma}_{k} \big(A_{k} + B_{k} K_{k}\big)\t + B_{k} V_{k} B_{k}\t + D_{k} D_{k}\t,
\end{equation}
where $K_{k}$ and $V_{k}$ are the control variables of \eqref{eqn:sigma-ctrl-equiv}. 
Clearly, the covariance equation \eqref{eqn:sigma-ctrl-equiv} can be achieved by a control of the form \eqref{eqn:ctrl-equiv}, where $v_{k} = \hat{v}_{k} = \mathbb{E}[u_{k}]$ and $\mathbb{E}\left[\nu_{k} \nu_{k}\t\right] = V_{k} \succeq 0$. 
Notice that $\mathbb{E}\big[(u_{k} - v_{k}) (x_{k} - \mu_{k})\t\big] = K_{k} \hat{\Sigma}_{k} = \hat{\Sigma}^{ux}_{k}$ and $\mathbb{E}\big[(u_{k} - v_{k}) (u_{k} - v_{k})\t\big] = K_{k} \hat{\Sigma}_{k} K_{k}\t + V_{k} = \hat{\Sigma}^{uu}_{k}$.

Furthermore, the cost function \eqref{cost-Q} for $\hat{u}$ is 
\begin{align*}
&\hat{J} = \sum_{k=0}^{N-1} \trace\Big(Q_{k} \mathbb{E}\big[x_{k} x_{k}\t\big]\Big) + \trace\Big(R_{k} \mathbb{E}\big[\hat{u}_{k} \hat{u}_{k}\t\big]\Big)
\\
&= \sum_{k=0}^{N-1} \trace\Big( \hspace{-0.5mm} Q_{k} \big(\hat{\Sigma}_{k} + \mu_{k} \mu_{k}\t\big) \hspace{-0.5mm} \Big) 
+ \trace\Big( \hspace{-0.5mm} R_{k} \big(\hat{\Sigma}^{uu}_{k} + \hat{v}_{k} \hat{v}_{k}\t\big) \hspace{-0.5mm} \Big). 
\end{align*}
From \eqref{eqn:sigma-ctrl}, $\hat{\Sigma}$ is determined by $\hat{\Sigma}^{ux}$ and $\hat{\Sigma}^{uu}$. 
Since $\mu_{k+1} = A_{k} \mu_{k} + B_{k} \hat{v}_{k}$, $\mu$ is determined by $\hat{v}$. 
As $u$ can achieve the same $\hat{\Sigma}^{ux}$, $\hat{\Sigma}^{uu}$, and $\hat{v}$ as $\hat{u}$, their costs are the same. \hfill
\end{proof}

Let $\mu_{k} = \mathbb{E}[x_{k}]$, $\Sigma_{k} = \mathbb{E}\big[(x_{k} - \mu_{k}) (x_{k} - \mu_{k})\t\big]$, $v_{k} = \mathbb{E}[u_{k}]$, $Y_{k} = \mathbb{E}\big[(u_{k} - v_{k}) (u_{k} - v_{k})\t\big]$, and $U_{k} = \mathbb{E}\big[(u_{k} - v_{k}) (x_{k} - \mu_{k})\t\big]$. 
Using the properties of the trace and the expectation operator~\cite{bakolas2018finite}, along with the covariance propagation equation \eqref{eqn:sigma-ctrl}, the optimal covariance steering problem can be written in the equivalent form 
\begin{subequations} \label{opt:mean-covar}
\begin{align}
&\min_{\Sigma_k, U_k, Y_k, \mu_k, v_k} ~ J = \hspace{-1mm} \sum_{k = 0}^{N-1} \trace\big(Q_k \Sigma_k \big) + \trace \big( R_k Y_{k} \big) 
\nonumber \\* 
&\hspace{37mm} 
+ \mu_k\t Q_k \mu_k + v_k\t R_k v_k, \label{opt:mean-covar_cost} 
\\
&\textrm{such that, for all}\; k = 0, 1, \ldots, N-1, 
\nonumber \\
&\hspace{15mm} 
\mu_{k+1} = A_k \mu_k + B_k v_k, \quad \label{opt:mean-covar_mu} 
\\
&\hspace{15mm} 
A_k \Sigma_k A_k\t + B_k U_k A_k\t + A_k U_k\t B_k\t + D_k D_k\t 
\nonumber \\* 
&\hspace{25mm}
+ B_k Y_k B_k\t - \Sigma_{k+1} = 0, \label{opt:mean-covar_sigma}
\\
&\hspace{15mm}
\begin{bmatrix} 
\Sigma_k & U_k\t \\
U_k & Y_k
\end{bmatrix} \succeq 0, 
\label{opt:mean-covar_V}
\\
&\hspace{15mm}
\mu_0 = \bar{\mu}_0, \quad \mu_N = \bar{\mu}_N, 
\label{opt:mean-covar_meanbdr}
\\
&\hspace{15mm}
\Sigma_0 = \bar{\Sigma}_0, \quad \Sigma_N = \bar{\Sigma}_N. 
\label{opt:mean-covar_covrbdr}
\end{align}
\end{subequations}
The optimization variables of \eqref{opt:mean-covar} are $\Sigma_1$, $\Sigma_2$, $\ldots$, $\Sigma_{N-1}$, $U_0$, $U_1$, $\ldots$, $U_{N-1}$, $v_0$, $v_1$, $\ldots$, $v_{N-1}$, $\mu_1$, $\mu_2$, $\ldots$, $\mu_{N-1}$, $Y_0$, $Y_1$, $\ldots$, $Y_{N-1}$, 
whereas $\mu_0$, $\mu_N$, $\Sigma_0$, and $\Sigma_N$ are given.
To this end, we have shown the following result. 

\begin{theorem} \label{thm:prob_equiv}
The optimal solution to the convex problem \eqref{opt:mean-covar}, if it exists, is the same as the optimal solution to the covariance steering problem formulated at the end of Section \ref{sec:prob}. 
If either optimal solution does not exist, neither does the other.
\end{theorem}

The cost function \eqref{opt:mean-covar_cost} can be further decomposed into $J = J_{\Sigma}(\Sigma_{k}, U_k, Y_k) + J_{\mu}(\mu_k, v_k)$, where 
\begin{equation*}
\begin{aligned}
& J_{\Sigma} \triangleq \sum_{k = 0}^{N-1} {\trace\big(Q_k \Sigma_k \big) + \trace \big(R_k Y_k\big)}, \\
& J_{\mu}    \triangleq \sum_{k = 0}^{N-1} {\mu_k\t Q_k \mu_k + v_k\t R_k v_k}.
\end{aligned}
\end{equation*}
Since there is no coupling, the two optimization problems of $J_{\mu}$ and $J_{\Sigma}$ can be treated separately. 
The optimization of $J_{\mu}$ subject to the constraint \eqref{opt:mean-covar_mu} and the boundary condition \eqref{opt:mean-covar_meanbdr} is the mean steering subproblem of \eqref{opt:mean-covar}, which is straighforward and can even be solved analytically~\cite{okamoto2018optimal}.
We therefore focus solely on the optimization of $J_{\Sigma}$ subject to \eqref{opt:mean-covar_sigma}, \eqref{opt:mean-covar_V}, and \eqref{opt:mean-covar_covrbdr}, which is an SDP and will be referred to as the covariance steering subproblem of \eqref{opt:mean-covar}.

If, for all $k = 0, 1, \ldots, N-1$, $\Sigma_{k} \succ 0$, then, the covariance steering subproblem of~\eqref{opt:mean-covar} becomes
\begin{subequations} \label{opt:convex}
\begin{align}
&\min_{\Sigma_k, U_k, Y_k} ~ J_{\Sigma} = \sum_{k = 0}^{N-1} {\trace \big(Q_k \Sigma_k \big) + \trace \big(R_k Y_k \big)}, 
\\
&\textrm{such that, for all}\; k = 0, 1, \ldots, N-1, 
\nonumber \\
&\hspace{15mm} 
C_k \triangleq 
U_k \Sigma_k^{-1} U_k\t - Y_k \preceq 0, \label{opt:convex_relaxed} 
\\
&\hspace{15mm} 
G_k \triangleq 
A_k \Sigma_k A_k\t + B_k U_k A_k\t + A_k U_k\t B_k\t 
\nonumber \\* 
&\hspace{25mm} 
+ B_k Y_k B_k\t + D_k D_k\t - \Sigma_{k+1} = 0, \label{opt:convex_cov}
\\
&\hspace{15mm} 
\Sigma_0 = \bar{\Sigma}_0, \quad \Sigma_N = \bar{\Sigma}_N. \label{opt:convex_covrbdr}
\end{align}
\end{subequations}


\begin{remark}
We can find the solution $\Pi_{k}$ of the coupled matrix equations \eqref{eqn:coupled} as follows. 
First, from the SDP solution of \eqref{opt:convex} we obtain $U_{k}$ and $\Sigma_{k}$. 
Then, from $K_{k} = U_{k} \Sigma_{k}^{-1}$, we obtain $\bar{A}_{k} = A_{k} + B_{k} K_{k}$. 
Next, we compute the state transition matrix $\Phi_{\bar{A}}(N, 0)$. 
Lastly, since $\Phi_{11}(N, 0)$ and $\Phi_{12}(N, 0)$ are known, we can recover $\Pi_{0}$ from \eqref{eqn:phi-AB-pi-Q}. 
\end{remark}

In the rest of this section, we show that the optimal solution to the SDP problem \eqref{opt:convex} satisfies, for all $k = 0, 1, \ldots, N-1$, $C_k = 0$. 
From the definition of $U_{k}$ in~\eqref{opt:mean-covar}, there exists $K_{k} \in \R^{p \times n}$ such that $U_{k} = K_{k} \Sigma_{k}$. 
Then, it follows from $C_k = 0$ that $Y_{k} = K_{k} \Sigma_{k} K_{k}\t$. 
Hence, the optimal solution to~\eqref{opt:convex} takes the state feedback form, as expected. 
To show $C_k = 0$, we first need the following result.

\begin{lemma} \label{lem:singMat}
Let $A$ and $B$ be $n \times n$ symmetric matrices with $A \succeq 0$, $B \preceq 0$, and $\trace(A B) = 0$. 
If $B$ has at least one nonzero eigenvalue, then, $A$ is singular.
\end{lemma}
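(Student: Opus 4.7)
The plan is to exploit the fact that when one of two symmetric matrices is PSD and the other is NSD, their trace product vanishing is a very strong condition that forces the product itself to vanish. This in turn constrains the range of $A$ to sit inside the nullspace of $B$, which must be proper when $B$ has a nonzero eigenvalue.

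Concretely, I would first rewrite $\trace(AB)$ in a symmetric form using the PSD square root $A^{1/2}$ of $A$. By the cyclic property of the trace,
\begin{equation*}
0 = \trace(AB) = \trace\bigl(A^{1/2} A^{1/2} B\bigr) = \trace\bigl(A^{1/2} B A^{1/2}\bigr).
\end{equation*}
The matrix $A^{1/2} B A^{1/2}$ is symmetric and, since $B \preceq 0$, it is NSD. An NSD matrix whose trace (the sum of its eigenvalues) vanishes must itself be zero, so $A^{1/2} B A^{1/2} = 0$.

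Next, I would promote this identity to $A B = 0$. Write $-B = C^{2}$ with $C \succeq 0$ symmetric (the PSD square root of $-B$). Then $A^{1/2} B A^{1/2} = 0$ becomes $(C A^{1/2})\t (C A^{1/2}) = 0$, which forces $C A^{1/2} = 0$ and consequently $C A = 0$. Since $C$ and $B$ have the same nullspace, this gives $BA = 0$, i.e., $\range(A) \subseteq \ker(B)$.

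The final step is the dimension count. By assumption, $B$ has at least one nonzero eigenvalue, hence $\ker(B)$ is a proper subspace of $\mathbb{R}^{n}$, and therefore $\range(A)$ is a proper subspace of $\mathbb{R}^{n}$ as well. Thus $A$ is not surjective and must be singular. I do not anticipate a serious obstacle here; the only step requiring any care is justifying the passage from $\trace(A^{1/2} B A^{1/2}) = 0$ to $A^{1/2} B A^{1/2} = 0$, for which the standard fact that a NSD (equivalently, PSD after negation) matrix with zero trace is the zero matrix suffices.
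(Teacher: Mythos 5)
Your proof is correct, but it takes a genuinely different route from the paper's. The paper argues by contradiction: it first invokes the fact that a product of a positive semi-definite and a negative semi-definite matrix has only non-positive eigenvalues, so $\trace(AB)=0$ forces $\spec(AB)=\{0\}$; it then assumes $A \succ 0$, uses similarity of $AB$ to $A^{1/2}BA^{1/2}$ and Sylvester's law of inertia (congruence of $B$ with $A^{1/2}BA^{1/2}$) to conclude that $B$ would have to have all eigenvalues zero, contradicting the hypothesis. You instead argue directly: the trace identity $\trace(AB)=\trace(A^{1/2}BA^{1/2})$ together with negative semi-definiteness forces $A^{1/2}BA^{1/2}=0$, and factoring $-B=C^2$ upgrades this to $BA=0$, so $\range(A)\subseteq\ker(B)$, which is a proper subspace because $B\neq 0$. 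Your version is more elementary (it needs neither the spectral fact about products of semidefinite matrices nor the inertia theorem), it is not a proof by contradiction, and it establishes the strictly stronger intermediate conclusion $AB=0$; the paper's version is slightly shorter on the page because it outsources those two facts to a reference. Every step of yours checks out, including the one you flag: a symmetric negative semi-definite matrix with zero trace has all eigenvalues zero and is diagonalizable, hence is the zero matrix.
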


\begin{proof}
Since $A \succeq 0$ and $B \preceq 0$, the product $A B$ has only non-positive eigenvalues $\lambda_{i} \leq 0$, where, 
for $i = 1, 2, \ldots, n$, $\lambda_i \in \spec(A B)$~\cite{horn2012matrix}. 
It follows from $\trace(A B) = 0 = \sum_{i=1}^{n} \lambda_i$ that $\spec(A B) = \{0\}$. 
Now we assume $A \succ 0$. 
Since the matrix $AB$ is similar to $A^{\frac{1}{2}} B A^{\frac{1}{2}}$, we have 
$\spec(A B) = \spec \big( A^{\frac{1}{2}} B A^{\frac{1}{2}} \big) = \{0\}$. 
Furthermore, $B$ is congruent to $A^{\frac{1}{2}} B A^{\frac{1}{2}}$, and thus both matrices have the same number of zero and nonzero eigenvalues. 
It follows that all eigenvalues of $B$ are zero, which is a contradiction. 
Therefore, $A$ has to be singular. \hfill
\end{proof}

\begin{theorem} \label{thm:lossless}
Let Assumption~\ref{asm:A-inv} hold, and let the Slater's condition~\cite{boyd2004convex} be satisfied. 
Then, the optimal solution to the SDP problem \eqref{opt:convex} satisfies, for all $k = 0, 1, \ldots, N-1$, $C_k = 0$. 
\end{theorem}

\begin{proof}
Using the Lagrange multipliers $M_k$ and $\Lambda_k$ for the constraints in $C_k$ and $G_k$, respectively, we define a Lagrangian function as 
\begin{multline*}
\mathcal{L}(\Sigma_{k}, U_k, Y_k, M_k, \Lambda_k) = \bar{J}_{\Sigma} + \sum_{k = 0}^{N-1} \trace \big( M_k\t C_k \big) 
\\* 
+ \trace \big( \Lambda_k\t G_k \big). 
\end{multline*}
Since Slater's condition holds, it follows that strong duality and the KKT conditions also hold~\cite{boyd2004convex}.
The first-order optimality conditions are 
\begin{subequations}\label{opt:opt_cond}
\begin{align}
& \frac{\partial \mathcal{L}}{\partial \Sigma_{k}} = Q_k - \Sigma_k^{-1} U_k\t M_k U_k \Sigma_k^{-1} + A_k\t \Lambda_k A_k - \Lambda_{k-1} = 0, \label{opt:opt_cond_Sigma} \\
& \frac{\partial \mathcal{L}}{\partial U_k} = 2 M_k U_k \Sigma_k^{-1} + 2 B_k\t \Lambda_k A_k = 0, \label{opt:opt_cond_U} \\ 
&\frac{\partial \mathcal{L}}{\partial Y_k} = R_k - M_k + B_k\t \Lambda_k B_k = 0, \label{opt:opt_cond_Y} \\ 
& G_k = 0, \quad C_k \preceq 0, \quad M_k \succeq 0, \label{opt:opt_cond_G_C_Mu} \\
& \trace \big(M_k\t C_k \big) = 0. \label{opt:opt_cond_comp_slack}
\end{align}
\end{subequations}
%
Next, we prove that the optimal solution to \eqref{opt:convex} satisfies, for all $k$, $C_k = 0$. 
To this end, assume that, for some $k$, $C_k$ has at least one nonzero eigenvalue. 
In light of Lemma~\ref{lem:singMat} and the complementary slackness condition \eqref{opt:opt_cond_comp_slack}, it follows that $M_k$ has to be singular. 
The optimality condition \eqref{opt:opt_cond_U} can be rewritten as 
$ 
B_k\t \Lambda_k = - M_k U_k \Sigma_k^{-1} A_k^{-1}
$. 
Substituting the above equation into \eqref{opt:opt_cond_Y} yields 
\begin{equation} \label{opt:contradicion}
R_k = M_k \big( I_{p} + U_k \Sigma_k^{-1} A_k^{-1} B_k \big). 
\end{equation}
Calculating the determinants of both sides of \eqref{opt:contradicion}, yields 
\[ 
\det(R_k) = \det(M_k) \det\big( I_{p} + U_k \Sigma_k^{-1} A_k^{-1} B_k \big) = 0. 
\]
This clearly contradicts the fact that $R_k \succ 0$. 
Therefore, at the optimal solution to problem \eqref{opt:convex}, the matrix $C_k$ has all its eigenvalues equal to zero. 
Since $C_k \preceq 0$, it follows that, for all $k = 0,1,\ldots, N-1$, $C_k = 0$. \hfill
\end{proof}

\begin{corollary} \label{cor:lossless}
Let Assumptions~\ref{asm:A-inv}-\ref{asm:G-D} hold.
Then, the optimal solution to the SDP problem \eqref{opt:convex} satisfies, for all $k = 0, 1, \ldots, N-1$, $C_k = 0$. 
\end{corollary}

\begin{proof}
For Slater's condition, it suffices to find some strictly feasible values of $\Sigma_{k}$, $U_k$, and $Y_k$, such that, for all $k = 0, 1, \ldots, N-1$, $C_k \prec 0$ and $G_k = 0$. 
%
For some $\varepsilon > 0$, let $C_k = - \varepsilon I_p$. 
This choice corresponds to a new system, where the noise coefficient matrices $D_k$ are replaced with $\bar{D}_k$ such that, for some $\varepsilon > 0$, 
$\bar{D}_k \bar{D}_k\t = D_k D_k\t + \varepsilon B_k B_k\t$. 
Then, it follows from \cite[Theorem~3]{liu2024reach} that, for a sufficiently small $\varepsilon$, all terminal state covariances $\bar{\Sigma}_{N} \succ D_{N-1} D_{N-1}\t$ are reachable when the new system starts from a given initial state covariance $\bar{\Sigma}_{0} \succ 0$. 
Thus, the Slater's condition holds. 
\hfill
\end{proof}


\section{Numerical Example} \label{sec:example}

\begin{figure*}[!h]
\centering
\begin{subfigure}[h]{0.48\textwidth}
\centering
\includegraphics[width=\textwidth]{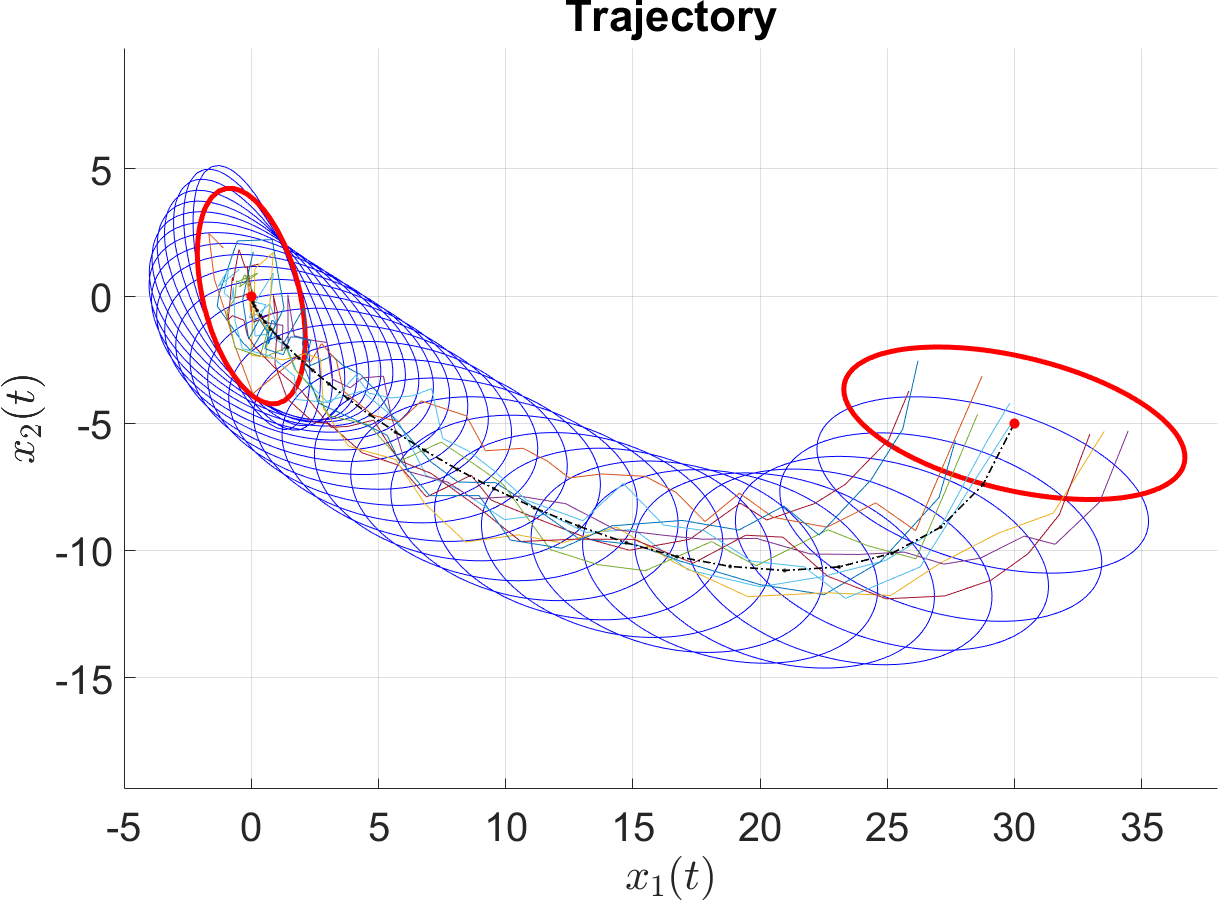}
\caption{Newton's method.}
\label{fig:Newton}
\end{subfigure}
\hfill
\begin{subfigure}[h]{0.48\textwidth}
\centering
\includegraphics[width=\textwidth]{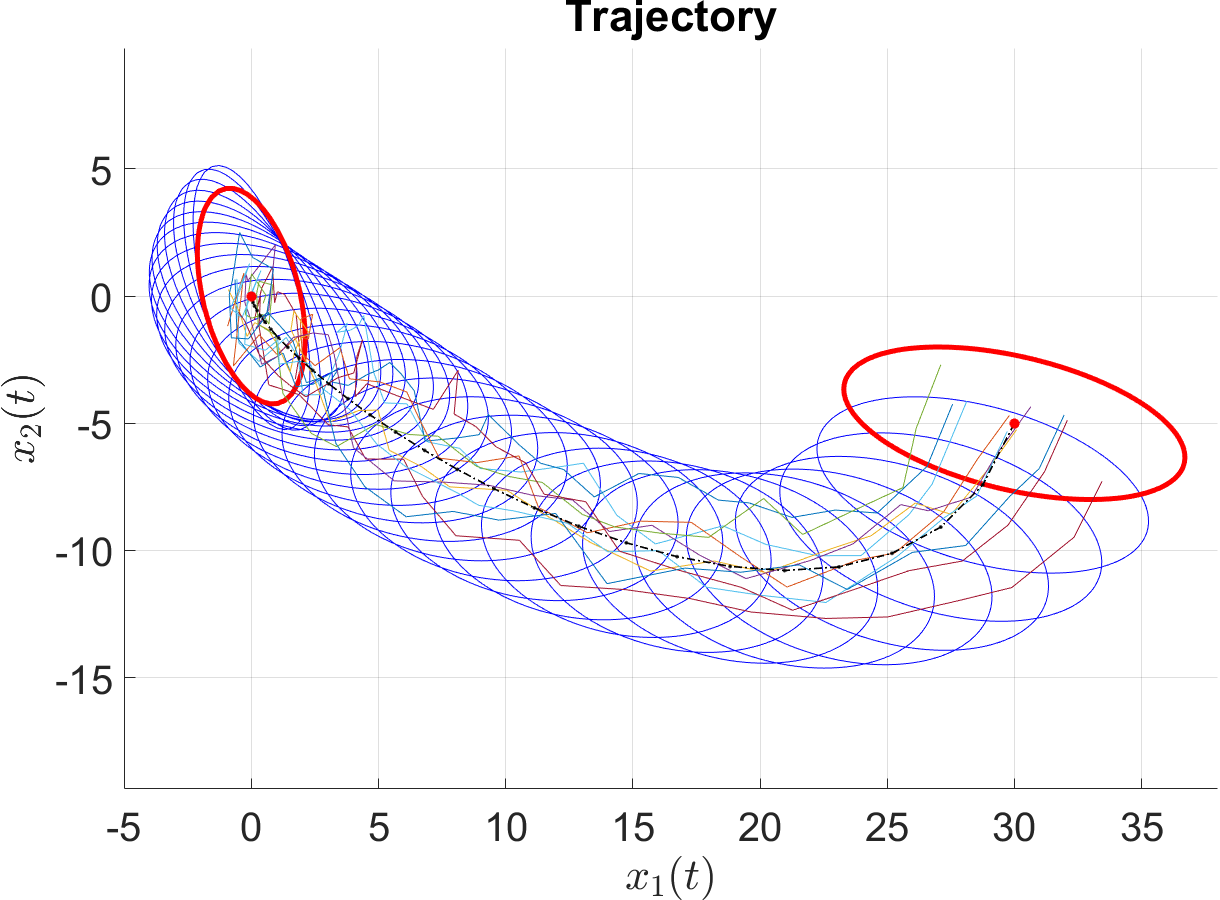}
\caption{SDP approach.}
\label{fig:SDP}
\end{subfigure}
\caption{Ten sample paths with the optimal state mean and three-standard-deviation tolerance region.}
\label{fig:2D}
\end{figure*}

In this section, we use a two-dimensional system to illustrate our theoretical results. 
Specifically, we consider a linear time-invariant system over a horizon of $N = 30$ steps, and we seek to design a feedback controller to steer all trajectories with initial conditions distributed according to a given Gaussian distribution to another given final Gaussian distribution within the time horizon. 
It is assumed that the state-space matrices are given by 
\[ 
A = \begin{bmatrix} 1 & 0.2 \\ 0 & 1 \end{bmatrix}, \qquad 
B = \begin{bmatrix} 0.02 \\ 0.2 \end{bmatrix}, \qquad 
D = \begin{bmatrix} 0.4 & 0 \\ 0.4 & 0.6 \end{bmatrix}. 
\]
The boundary conditions and cost function parameters are 
\begin{align*}
\bar{\Sigma}_0 &= \begin{bmatrix} 5 & -1 \\ -1 & 1 \end{bmatrix}, \hspace{2mm} 
\bar{\Sigma}_N = \begin{bmatrix} 0.5 & -0.4 \\ -0.4 & 2 \end{bmatrix}, \hspace{2mm} 
\bar{\mu}_0 = \begin{bmatrix} 30 \\ -5 \end{bmatrix}, 
\\ 
\bar{\mu}_N &= \begin{bmatrix} 0 \\ 0 \end{bmatrix}, \hspace{2mm} 
Q_k \equiv 0.5 I_2, \hspace{2mm} 
R_k \equiv 1. 
\end{align*}

We compute the optimal controller using both approaches in Section~\ref{sec:compute}. 
Namely, we first compute the optimal control using Newton's method, as described in Section~\ref{subsec:Newton}. 
Ten sample paths along with their mean and their covariance ellipses are shown in Figure~\ref{fig:Newton}, 
where the initial and target state means (respectively, the initial and target three-standard-deviation tolerance regions) are marked with red dots (respectively, red ellipses), 
and the optimal state mean (respectively, the optimal three-standard-deviation tolerance region) at each time step is marked with a black dash (respectively, a blue ellipse). 
Then, using the SDP method described in Section~\ref{subsec:SDP}, we also solve the corresponding optimization problem using YALMIP~\cite{lofberg2004yalmip} and MOSEK~\cite{aps2019mosek}.
The optimal solution is shown in Figure~\ref{fig:SDP}, where ten sample paths of the optimal process are plotted, along with the optimal state mean and three-standard-deviation ellipses. 
As expected, the two solutions are identical. 
However, our experience has shown that Newton's method is very sensitive to initialization and does not scale well, with higher dimensional systems and longer time horizons. 
On the contrary, the SDP method is numerically more robust and scales much better with the problem parameters.

\section{Concluding Remarks}

We have established the existence and uniqueness of the optimal control for the covariance steering of a discrete-time linear stochastic system. 
We have also shown the separation of the optimal mean and the optimal covariance steering problems, 
and have demonstrated that the exact covariance steering problem can be recast as a convex semi-definite programming problem, which can be solved efficiently using standard convex solvers. 
In the process, we have investigated various properties of a matrix Riccati difference equation that shows up in the solution for the optimal control. 
In the future, we would like to study the optimal covariance steering problem for stochastic systems subject to multiplicative noise and chance constraints, and to develop efficient data-driven algorithms for estimating the noise covariance for real-life engineering problems.


\bibliographystyle{IEEEtran}
\bibliography{final}

\begin{thebibliography}{10}
\providecommand{\url}[1]{#1}
\csname url@samestyle\endcsname
\providecommand{\newblock}{\relax}
\providecommand{\bibinfo}[2]{#2}
\providecommand{\BIBentrySTDinterwordspacing}{\spaceskip=0pt\relax}
\providecommand{\BIBentryALTinterwordstretchfactor}{4}
\providecommand{\BIBentryALTinterwordspacing}{\spaceskip=\fontdimen2\font plus
\BIBentryALTinterwordstretchfactor\fontdimen3\font minus
  \fontdimen4\font\relax}
\providecommand{\BIBforeignlanguage}[2]{{%
\expandafter\ifx\csname l@#1\endcsname\relax
\typeout{** WARNING: IEEEtran.bst: No hyphenation pattern has been}%
\typeout{** loaded for the language `#1'. Using the pattern for}%
\typeout{** the default language instead.}%
\else
\language=\csname l@#1\endcsname
\fi
#2}}
\providecommand{\BIBdecl}{\relax}
\BIBdecl

\bibitem{hotz1985covariance}
A.~F. Hotz and R.~E. Skelton, ``A covariance control theory,'' in \emph{Proc.
  {IEEE} Conf. Decision Control}, Lauderdale, FL, 1985, pp. 552--557.

\bibitem{collins1985covariance}
E.~Collins and R.~Skelton, ``Covariance control of discrete systems,'' in
  \emph{Proc. {IEEE} Conf. Decision Control}, Lauderdale, FL, 1985, pp.
  542--547.

\bibitem{collins1987theory}
E.~Collins and R.~Skelton, ``A theory of state covariance assignment for
  discrete systems,'' \emph{{IEEE} Trans. Autom. Control}, vol.~32, no.~1, pp.
  35--41, 1987.

\bibitem{hsieh1990all}
C.~Hsieh and R.~E. Skelton, ``All covariance controllers for linear
  discrete-time systems,'' \emph{{IEEE} Trans. Autom. Control}, vol.~35, no.~8,
  pp. 908--915, 1990.

\bibitem{xu1992improved}
J.-H. Xu and R.~E. Skelton, ``An improved covariance assignment theory for
  discrete systems,'' \emph{{IEEE} Trans. Autom. Control}, vol.~37, no.~10, pp.
  1588--1591, 1992.

\bibitem{zhang2016parametric}
Q.~Zhang, Z.~Wang, and H.~Wang, ``Parametric covariance assignment using a
  reduced-order closed-form covariance model,'' \emph{Syst. Sci. Control.
  Eng.}, vol.~4, no.~1, pp. 78--86, 2016.

\bibitem{grigoriadis1997minimum}
K.~M. Grigoriadis and R.~E. Skelton, ``Minimum-energy covariance controllers,''
  \emph{Automatica}, vol.~33, no.~4, pp. 569--578, 1997.

\bibitem{bakolas2018finite}
E.~Bakolas, ``Finite-horizon covariance control for discrete-time stochastic
  linear systems subject to input constraints,'' \emph{Automatica}, vol.~91,
  pp. 61--68, 2018.

\bibitem{okamoto2019input}
K.~Okamoto and P.~Tsiotras, ``Input hard constrained optimal covariance
  steering,'' in \emph{Proc. {IEEE} Conf. Decision Control}, Nice, France,
  2019, pp. 3497--3502.

\bibitem{kotsalis2021convex}
G.~Kotsalis, G.~Lan, and A.~S. Nemirovski, ``Convex optimization for
  finite-horizon robust covariance control of linear stochastic systems,''
  \emph{SIAM J. Control Optim.}, vol.~59, no.~1, pp. 296--319, 2021.

\bibitem{pilipovsky2021covariance}
J.~Pilipovsky and P.~Tsiotras, ``Covariance steering with optimal risk
  allocation,'' \emph{IEEE Trans. Aerosp. Electron. Syst.}, vol.~57, no.~6, pp.
  3719--3733, 2021.

\bibitem{balci2023covariance}
I.~M. Balci and E.~Bakolas, ``Covariance steering of discrete-time linear
  systems with mixed multiplicative and additive noise,'' in \emph{Proc. Amer.
  Control Conf.}, San Diego, CA, 2023, pp. 2586--2591.

\bibitem{balci2022exact}
I.~M. Balci and E.~Bakolas, ``Exact {SDP} formulation for discrete-time
  covariance steering with {W}asserstein terminal cost,'' 2022,
  arXiv:2205.10740.

\bibitem{okamoto2019stochastic}
K.~Okamoto and P.~Tsiotras, ``Stochastic model predictive control for
  constrained linear systems using optimal covariance steering,'' 2019,
  arXiv:1905.13296.

\bibitem{yin2022trajectory}
J.~Yin, Z.~Zhang, E.~Theodorou, and P.~Tsiotras, ``Trajectory distribution
  control for model predictive path integral control using covariance
  steering,'' in \emph{Proc. {IEEE} Int. Conf. Robot. Autom.}, Philadelphia,
  PA, 2022, pp. 1478--1484.

\bibitem{sivaramakrishnan2022distribution}
V.~Sivaramakrishnan, J.~Pilipovsky, M.~M. Oishi, and P.~Tsiotras,
  ``Distribution steering for discrete-time linear systems with general
  disturbances using characteristic functions,'' in \emph{Proc. Amer. Control
  Conf.}, Atlanta, GA, 2022, pp. 4183--4190.

\bibitem{chen2016I}
Y.~Chen, T.~T. Georgiou, and M.~Pavon, ``Optimal steering of a linear
  stochastic system to a final probability distribution, part {I},''
  \emph{{IEEE} Trans. Autom. Control}, vol.~61, no.~5, pp. 1158--1169, 2016.

\bibitem{chen2016II}
Y.~Chen, T.~T. Georgiou, and M.~Pavon, ``Optimal steering of a linear
  stochastic system to a final probability distribution, part {II},''
  \emph{{IEEE} Trans. Autom. Control}, vol.~61, no.~5, pp. 1170--1180, 2016.

\bibitem{chen2018III}
Y.~Chen, T.~T. Georgiou, and M.~Pavon, ``Optimal steering of a linear
  stochastic system to a final probability distribution, part {III},''
  \emph{{IEEE} Trans. Autom. Control}, vol.~63, no.~9, pp. 3112--3118, 2018.

\bibitem{liu2024add}
F.~Liu and P.~Tsiotras, ``Optimal covariance steering for continuous-time
  linear stochastic systems with martingale additive noise,'' \emph{{IEEE}
  Trans. Autom. Control}, vol.~69, no.~4, pp. 2591--2597, 2024.

\bibitem{liu2024mult}
F.~Liu and P.~Tsiotras, ``Optimal covariance steering for continuous-time
  linear stochastic systems with multiplicative noise,'' \emph{{IEEE} Trans.
  Autom. Control}, 2024, in press.

\bibitem{ridderhof2018uncertainty}
J.~Ridderhof and P.~Tsiotras, ``Uncertainty quantication and control during
  {M}ars powered descent and landing using covariance steering,'' in \emph{AIAA
  Guidance, Navigation, Control Conf.}, Kissimmee, FL, 2018.

\bibitem{ridderhof2020fuel}
J.~Ridderhof, J.~Pilipovsky, and P.~Tsiotras, ``Chance-constrained covariance
  control for low-thrust minimum-fuel trajectory optimization,'' in
  \emph{AAS/AIAA Astrodynamics Specialist Conf.}, South Lake Tahoe, CA, 2020.

\bibitem{okamoto2019optimal}
K.~Okamoto and P.~Tsiotras, ``Optimal stochastic vehicle path planning using
  covariance steering,'' \emph{IEEE Robot. Autom. Lett.}, vol.~4, no.~3, pp.
  2276--2281, 2019.

\bibitem{zheng2022belief}
D.~Zheng, J.~Ridderhof, P.~Tsiotras, and A.-a. Agha-mohammadi, ``Belief space
  planning: a covariance steering approach,'' in \emph{Proc. {IEEE} Int. Conf.
  Robot. Autom.}, Philadelphia, PA, 2022, pp. 11\,051--11\,057.

\bibitem{liu2024reach}
F.~Liu and P.~Tsiotras, ``Reachability and controllability analysis of the
  state covariance for linear stochastic systems,'' \emph{Automatica}, 2024,
  under review, arXiv:2406.14740.

\bibitem{horn2012matrix}
R.~A. Horn and C.~R. Johnson, \emph{Matrix Analysis}.\hskip 1em plus 0.5em
  minus 0.4em\relax Cambridge University Press, 2012.

\bibitem{ito2023maximum}
K.~Ito and K.~Kashima, ``Maximum entropy optimal density control of
  discrete-time linear systems and {S}chr{\"o}dinger bridges,'' \emph{{IEEE}
  Trans. Autom. Control}, 2023, in press.

\bibitem{henderson1981deriving}
H.~V. Henderson and S.~R. Searle, ``On deriving the inverse of a sum of
  matrices,'' \emph{SIAM Rev.}, vol.~23, no.~1, pp. 53--60, 1981.

\bibitem{krantz2002implicit}
S.~G. Krantz and H.~R. Parks, \emph{The Implicit Function Theorem: History,
  Theory, and Applications}.\hskip 1em plus 0.5em minus 0.4em\relax Springer,
  2002.

\bibitem{zhang2005schur}
F.~Zhang, \emph{The Schur Complement and Its Applications}.\hskip 1em plus
  0.5em minus 0.4em\relax Springer, 2005.

\bibitem{okamoto2018optimal}
K.~Okamoto, M.~Goldshtein, and P.~Tsiotras, ``Optimal covariance control for
  stochastic systems under chance constraints,'' \emph{IEEE Control Syst.
  Lett.}, vol.~2, no.~2, pp. 266--271, 2018.

\bibitem{boyd2004convex}
S.~Boyd and L.~Vandenberghe, \emph{Convex Optimization}.\hskip 1em plus 0.5em
  minus 0.4em\relax Cambridge University Press, 2004.

\bibitem{lofberg2004yalmip}
J.~Lofberg, ``{YALMIP}: a toolbox for modeling and optimization in {MATLAB},''
  in \emph{IEEE Int. Conf. Robot. Autom.}, New Orleans, LA, 2004, pp. 284--289.

\bibitem{aps2019mosek}
M.~ApS, ``{MOSEK} optimization toolbox for {MATLAB},'' \emph{User’s Guide and
  Reference Manual, Version 4}, 2019.

\bibitem{freiling1996generalized}
G.~Freiling, G.~Jank, and H.~Abou-Kandil, ``Generalized {R}iccati difference
  and differential equations,'' \emph{Linear Algebra Its Appl.}, vol. 241, pp.
  291--303, 1996.

\end{thebibliography}


\section*{Appendix~A}

In this appendix, we show two auxiliary results for proving Proposition~\ref{prp:suff-Q}, Proposition~\ref{prp:pi-monotone}, and Lemma~\ref{lem:phi-11-12-inv-mono}.


\begin{lemma} \label{lem:A-bar-B-contr}
Suppose Assumption~\ref{asm:A-inv}, Assumption~\ref{asm:G-inv}, and Property~\ref{pty:pi-pos} hold. 
Then, 
\begin{equation*}
\sum_{i=0}^{N-1} \Phi_{\bar{A}}(N, i+1) B_{i} \big(R_{i} + B_{i}\t \Pi_{i+1} B_{i} \big)^{-1} B_{i}\t \Phi_{\bar{A}}(N, i+1)\t \succ 0. 
\end{equation*}
\end{lemma}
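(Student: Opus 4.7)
The plan is to interpret the sum as a weighted controllability Gramian of the closed-loop pair $(\bar A_k, B_k)$ and reduce its positivity to the controllability of the open-loop pair $(A_k, B_k)$ guaranteed by Assumption~\ref{asm:G-inv}. Applying the Sherman--Morrison--Woodbury identity, one can rewrite $\bar A_k$ as $\bar A_k = A_k - B_k K_k$, where $K_k \triangleq (R_k + B_k\t \Pi_{k+1} B_k)^{-1} B_k\t \Pi_{k+1} A_k$ is well defined by Property~\ref{pty:pi-pos}. Thus $\bar A_k$ is simply the closed-loop matrix obtained from $A_k$ by the state-feedback gain $K_k$, and it is intuitively plausible that controllability of $(A_k, B_k)$ propagates to $(\bar A_k, B_k)$.

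To make this rigorous, I would argue by contradiction. Since each weight $(R_i + B_i\t \Pi_{i+1} B_i)^{-1}$ is positive definite by Property~\ref{pty:pi-pos}, if the sum were only positive semidefinite there would exist a nonzero $v \in \mathbb{R}^n$ with $v\t \Phi_{\bar{A}}(N, i+1) B_i = 0$ for all $i = 0, 1, \dots, N-1$. A backward induction on $\ell = N, N-1, \dots, 0$ then shows that $v\t \Phi_{\bar{A}}(N, \ell) = v\t \Phi_A(N, \ell)$. The base case $\ell = N$ is immediate since both transition matrices equal $I_n$, and the inductive step uses $\bar A_\ell = A_\ell - B_\ell K_\ell$ together with $v\t \Phi_A(N, \ell+1) B_\ell = v\t \Phi_{\bar{A}}(N, \ell+1) B_\ell = 0$ (valid by the inductive hypothesis combined with the contradiction assumption) to cancel the $B_\ell K_\ell$ correction inside $v\t \Phi_{\bar{A}}(N, \ell+1) \bar A_\ell$.

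Applying the resulting identity at $\ell = i+1$ yields $v\t \Phi_A(N, i+1) B_i = 0$ for all $i = 0, 1, \dots, N-1$, whence $v\t G(N, 0) v = 0$, contradicting Assumption~\ref{asm:G-inv}. The main obstacle to anticipate is sequencing the backward induction correctly: the nullspace information $v\t \Phi_{\bar{A}}(N, \ell+1) B_\ell = 0$ must already be available at each step in order to cancel the correction term and collapse $\bar A_\ell$ into $A_\ell$. Once that bookkeeping is handled, the remaining work is routine matrix algebra and does not require any further structural assumptions beyond Assumptions~\ref{asm:A-inv}--\ref{asm:G-inv} and Property~\ref{pty:pi-pos}.
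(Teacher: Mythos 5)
Your proof is correct and rests on the same key idea as the paper's: via the Sherman--Morrison--Woodbury formula, $\bar A_k$ is a state-feedback perturbation of $A_k$, so the reachable subspace of $(\bar A_k, B_k)$ coincides with that of $(A_k, B_k)$, and Assumption~2 then forces the weighted Gramian to be positive definite. The paper phrases this as a rank statement about the factored form $\Xi \Lambda \Xi\t$ and invokes the feedback-invariance of the reachable subspace without detail, whereas you supply that detail explicitly through the contradiction argument and backward induction; the substance is the same.
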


\begin{proof} 
It is clear that we can write 
$
\sum_{i=0}^{N-1} \Phi_{\bar{A}}(N, i+1) B_{i} \big(R_{i} + B_{i}\t \Pi_{i+1} B_{i} \big)^{-1} B_{i}\t \Phi_{\bar{A}}(N, i+1)\t 
= 
\Xi \Lambda \Xi\t
$, 
where 
$\Xi \triangleq 
\big[
B_{N-1} ~~ \bar{A}_{N-1} B_{N-2} ~~ \cdots ~~ \bar{A}_{N-1} \bar{A}_{N-2} \cdots \bar{A}_{1} B_{0}
\big]$ 
and 
$\Lambda \triangleq 
\blkdiag \big[ (R_{N-1} + B_{N-1}\t \Pi_{N} B_{N-1})^{-1}, (R_{N-2} + B_{N-2}\t \Pi_{N-1} B_{N-2})^{-1}, \ldots, (R_{0} + B_{0}\t \Pi_{1} B_{0})^{-1} \big]$. 
From Property~\ref{pty:pi-pos}, $\Lambda \succ 0$. 
Thus, it suffices to show that $\rank \Xi = n$. 
From Assumption~\ref{asm:G-inv}, the reachability Gramian $G(N, 0) \succ 0$. 
From the Woodbury formula~\cite{horn2012matrix}, we can write $\bar{A}_{k} = A_{k} + B_{k} F_{k}$ for some $F_{k} \in \mathbb{R}^{p \times n}$. 
It follows that 
$\range \Xi 
= \range \big[B_{N-1} ~~ A_{N-1} B_{N-2} ~~ \cdots ~~ A_{N-1} A_{N-2} \cdots A_{1} B_{0}\big] 
= \mathbb{R}^{n}$. 
Therefore, $\rank \Xi = n$. \hfill
\end{proof}


The following result is used to show the monotonicity of the solution to \eqref{eqn:pi-Q} and of the matrix $W_{k,0}$ defined by \eqref{def:W}.

\begin{lemma} \label{lem:monotone}
Let $X \succ 0$ be a $j \times j$ matrix, written as
\begin{equation*}
X \triangleq 
\begin{bmatrix}
X_{1} & X_{2} \\
X_{2}\t & X_{4}
\end{bmatrix}
\succ 0,
\end{equation*}
where $X_{1}$ is an $i \times i$ matrix for some $i < j$. 
Let $Y$ be an $i \times i$ symmetric matrix. 
Then, 
\begin{align*}
X^{-1} &= 
\begin{bmatrix}
X_{1} & X_{2} \\
X_{2}\t & X_{4}
\end{bmatrix}^{-1}
\succ ~ (\text{respectively}, \, \succeq)
\begin{bmatrix}
Y & 0 \\
0 & 0
\end{bmatrix}
\\* 
&\hspace{5mm} 
\iff \quad 
X_{1}^{-1} \succ ~ (\text{respectively}, \, \succeq) ~ Y.
\end{align*}
\end{lemma}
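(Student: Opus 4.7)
The plan is to reduce both sides to the same scalar-level Schur-complement condition using the block-inverse formula. Since $X\succ 0$, the Schur complement $S \triangleq X_4 - X_2\t X_1^{-1} X_2 \succ 0$, and the standard block-inverse identity gives
\[
X^{-1} = \begin{bmatrix} X_1^{-1} + X_1^{-1} X_2 S^{-1} X_2\t X_1^{-1} & -X_1^{-1} X_2 S^{-1} \\ -S^{-1} X_2\t X_1^{-1} & S^{-1} \end{bmatrix}.
\]
Subtracting the block matrix $\mathrm{diag}(Y,0)$, the question becomes whether
\[
Z \triangleq \begin{bmatrix} X_1^{-1} - Y + X_1^{-1} X_2 S^{-1} X_2\t X_1^{-1} & -X_1^{-1} X_2 S^{-1} \\ -S^{-1} X_2\t X_1^{-1} & S^{-1} \end{bmatrix} \succ 0 \ (\text{resp. } \succeq 0).
\]

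The key step is then to apply the Schur-complement criterion to $Z$ itself. Because the $(2,2)$-block $S^{-1}$ is strictly positive definite (hence invertible), the criterion reduces cleanly in both the strict and the non-strict case: $Z \succ 0$ (resp. $\succeq 0$) if and only if the Schur complement of $S^{-1}$ in $Z$ is $\succ 0$ (resp. $\succeq 0$). Computing that Schur complement, the cross-term contribution is
\[
\bigl(-X_1^{-1} X_2 S^{-1}\bigr) \bigl(S^{-1}\bigr)^{-1} \bigl(-S^{-1} X_2\t X_1^{-1}\bigr) = X_1^{-1} X_2 S^{-1} X_2\t X_1^{-1},
\]
which exactly cancels the $X_1^{-1} X_2 S^{-1} X_2\t X_1^{-1}$ term in the $(1,1)$-block, leaving $X_1^{-1} - Y$. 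Thus $Z \succ 0$ (resp. $\succeq 0$) iff $X_1^{-1} - Y \succ 0$ (resp. $\succeq 0$), which is the desired equivalence.

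I do not expect a real obstacle: both directions of the equivalence are captured simultaneously by the Schur-complement characterization, and the algebraic cancellation is immediate. The only point worth flagging carefully is that the standard Schur-complement test for semi-definiteness requires either invertibility of the pivot block or a matching kernel condition; here the pivot is $S^{-1} \succ 0$, so the invertible form applies uniformly and the non-strict case needs no separate treatment. A brief remark to this effect will suffice in the write-up.
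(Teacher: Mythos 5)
Your proposal is correct and follows essentially the same route as the paper: form $Z = X^{-1} - \operatorname{diag}(Y,0)$ via the block-inverse formula, take the Schur complement of the strictly positive definite $(2,2)$-block, and observe that the cross terms cancel to leave exactly $X_{1}^{-1} - Y$. The remark that the invertible-pivot form of the Schur-complement test handles the strict and non-strict cases uniformly is exactly the point the paper relies on as well.
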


\begin{proof}
The fact of $X \succ 0$ implies that $X_{1} \succ 0$ and its Schur complement $X/X_{1} \triangleq X_{4} - X_{2}\t X_{1}^{-1} X_{2} \succ 0$~\cite{horn2012matrix}. 
Using the expression for $X^{-1}$, we get 
\begin{align*}
&Z \triangleq 
X^{-1} - 
\begin{bmatrix}
Y & 0 \\
0 & 0
\end{bmatrix}
= 
\\* 
&\hspace{-1mm}
\begin{bmatrix} 
\hspace{-0.5mm}
X_{1}^{-1} \hspace{-0.8mm} + \hspace{-0.8mm} X_{1}^{-1} X_{2} ( \hspace{-0.5mm} X/X_{1} \hspace{-0.5mm} )^{-1} X_{2}\t X_{1}^{-1} \hspace{-0.8mm} - \hspace{-0.8mm} Y & - \hspace{-0.8mm} X_{1}^{-1} X_{2} (\hspace{-0.5mm} X/X_{1} \hspace{-0.5mm} )^{-1} \\
- (X/X_{1})^{-1} X_{2}\t X_{1}^{-1} & (X/X_{1})^{-1}
\end{bmatrix} \hspace{-1.5mm} .
\end{align*}
We can check that the Schur complement of $(X/X_{1})^{-1}$ in $Z$ is $X_{1}^{-1} - Y$. 
Since $(X/X_{1})^{-1} \succ 0$, it follows that $Z \succ 0$ (respectively, $Z \succeq 0$) if and only if the Schur complement $X_{1}^{-1} - Y \succ 0$ (respectively, $X_{1}^{-1} - Y \succeq 0$). \hfill
\end{proof}


\section*{Appendix~B}

In this appendix, we give the proof of Proposition~\ref{prp:pi-Q-exist} in 
Section~\ref{subsec:riccati}, along with some related results that are used to prove Theorem~\ref{thm:pi-pos-exist-cond}. 

\begin{proof}[Proof of Proposition~\ref{prp:pi-Q-exist}] 
First, we show that if, for a given initial condition $\Pi_{0}$, equation \eqref{eqn:pi-Q} has a solution from time $0$ to $N$, then, this solution is unique. 
Let $\mathcal{S}_{n}$ be the set of $n \times n$ symmetric matrices. 
In view of \eqref{eqn:pi-Q}, it suffices to show that the map $g: \{\Pi \, | \, \Pi \in \mathcal{S}_{n}, ~ I_{n} + B B\t \Pi ~\text{is invertible}\} \to \mathcal{S}_{n}$ with $B \in \mathbb{R}^{n \times p}$, defined by
\begin{equation*}
g: \Pi \mapsto \Pi \big(I_{n} + B B\t \Pi\big)^{-1}, 
\end{equation*}
is injective. 
Suppose $H = g(\Pi)$ for some $\Pi$. 
If $H$ is invertible, then $\Pi$ must be invertible. 
It follows that $H = \big(\Pi^{-1} + B B\t\big)^{-1}$. 
Then, $\Pi = \big(H^{-1} - B B\t\big)^{-1}$ is unique.

If $H$ is singular, then any $\Pi$ that satisfies $H = g(\Pi)$ must be singular. 
Moreover, for any such $\Pi$, we have $\ker H = \ker \Pi$, since, for some $z \in \mathbb{R}^{n}$, $z\t H = 0$ if and only if $z\t \Pi = 0$. 
Hence, we can quotient out the common kernel of $H$ and $\Pi$. 
Since $H, \Pi \in \mathcal{S}_{n}$, there exists an $n \times n$ orthogonal matrix $V \triangleq [\bar{V} \enspace \tilde{V}]$ such that
\begin{equation*}
H = 
\begin{bmatrix}
\bar{V} & \tilde{V}
\end{bmatrix}
\hspace{-1mm} 
\begin{bmatrix}
\bar{\Lambda} & 0 \\
0 & 0 
\end{bmatrix}
\hspace{-1mm} 
\begin{bmatrix}
\bar{V}\t \\ 
\tilde{V}\t
\end{bmatrix},
~ 
\Pi = 
\begin{bmatrix}
\bar{V} & \tilde{V}
\end{bmatrix}
\hspace{-1mm} 
\begin{bmatrix}
\bar{\Pi} & 0 \\
0 & 0 
\end{bmatrix}
\hspace{-1mm} 
\begin{bmatrix}
\bar{V}\t \\ 
\tilde{V}\t
\end{bmatrix},
\end{equation*}
where $\ker H = \ker \Pi = \range \tilde{V}$ and $\bar{\Lambda}$ is an $r \times r$ diagonal matrix. 
It can be checked that $\bar{\Lambda} = \bar{\Pi} \big(I_{r} + T \bar{\Pi}\big)^{-1}$, where $T$ is the $r$th leading principal submatrix (that is, the upper-left $r \times r$ submatrix) of $V\t B B\t V$. 
Since $\bar{\Pi}$ is invertible by construction, we have $\bar{\Lambda} = \big(\bar{\Pi}^{-1} + T\big)^{-1}$. 
Thus, $\bar{\Pi}$ is unique. 
It follows immediately that $\Pi$ is also unique.

Next, we show the necessary and sufficient condition for the existence of the solution \eqref{sol:pi-Q-0}. 
(Sufficiency) Suppose, for all $k = 0, 1, \ldots, N$, the matrix $\Phi_{11}(k, 0) + \Phi_{12}(k, 0) \Pi_{0}$ is invertible. 
We will show that the matrix $\Pi_{k}$ given by \eqref{sol:pi-Q-0} is a solution of \eqref{eqn:pi-Q}. 
Since $\Phi_{M}(k+1, 0) = M_{k} \Phi_{M}(k, 0)$, it follows that
\begin{multline} \label{eqn:phi-k-k+1}
\begin{bmatrix}
\Phi_{11}(k, 0) + \Phi_{12}(k, 0) \Pi_{0} \\
\Phi_{21}(k, 0) + \Phi_{22}(k, 0) \Pi_{0}
\end{bmatrix}
= 
\\* 
M_{k}^{-1} 
\begin{bmatrix} 
\Phi_{11}(k+1, 0) + \Phi_{12}(k+1, 0) \Pi_{0} \\
\Phi_{21}(k+1, 0) + \Phi_{22}(k+1, 0) \Pi_{0}
\end{bmatrix}.
\end{multline}
Moreover, we can compute that 
\begin{equation} \label{eqn:M-inv}
M_{k}^{-1} = 
\begin{bmatrix}
A_{k}^{-1} & A_{k}^{-1} B_{k} B_{k}\t \\
Q_{k} A_{k}^{-1} & A_{k}\t + Q_{k} A_{k}^{-1} B_{k} B_{k}\t
\end{bmatrix}.
\end{equation}
Putting \eqref{eqn:phi-k-k+1} and \eqref{eqn:M-inv} together, it can be checked that the $\Pi_{k}$ given by \eqref{sol:pi-Q-0} satisfies \eqref{eqn:pi-Q}.

(Necessity) Suppose \eqref{eqn:pi-Q} has a solution from time $0$ to $N$, denoted by $\Pi_{k}$. 
Then, this solution is unique. 
Clearly, $\Phi_{11}(0, 0) + \Phi_{12}(0, 0) \Pi_{0} = I_{n}$ is invertible. 
Assume that there exists $0 \leq k \leq N-1$ such that, for all $i = 0, 1, \ldots, k$, the matrix $\Phi_{11}(i, 0) + \Phi_{12}(i, 0) \Pi_{0}$ is invertible, but the matrix $\Phi_{11}(k+1, 0) + \Phi_{12}(k+1, 0) \Pi_{0}$ is singular. 
In view of the proof for sufficiency, the unique solution of \eqref{eqn:pi-Q} up to time $k$ is given by \eqref{sol:pi-Q-0}. 
From \eqref{eqn:pi-Q}, and since $A_{k}$ is invertible and $\Pi_{k}$ is symmetric, it follows that
\begin{align*}
\Pi_{k} 
&= \Big(A_{k}^{- \mbox{\tiny\sf T}} + \Pi_{k+1} B_{k} B_{k}\t A_{k}^{- \mbox{\tiny\sf T}}\Big)^{-1} 
\\* 
&\hspace{15mm} 
\times \Big(\Pi_{k+1} A_{k} + \big(A_{k}^{- \mbox{\tiny\sf T}} + \Pi_{k+1} B_{k} B_{k}\t A_{k}^{-\mbox{\tiny\sf T}}\big) Q_{k}\Big).
\end{align*}
From \eqref{sol:pi-Q-0} and \eqref{eqn:phi-k-k+1}, it follows that
\begin{align*}
&\Pi_{k} 
= \Big(Q_{k} A_{k}^{-1} \Big[\Phi_{11}(k+1, 0) + \Phi_{12}(k+1, 0) \Pi_{0}\Big] \\*
&\hspace{4mm} + \big(A_{k}\t \hspace{-0.5mm} + \hspace{-0.5mm} Q_{k} A_{k}^{-1} B_{k} B_{k}\t\big) \hspace{-0.6mm} \Big[\Phi_{21}(k \hspace{-0.5mm} + \hspace{-0.5mm} 1, 0) \hspace{-0.5mm} + \hspace{-0.5mm} \Phi_{22}(k \hspace{-0.5mm} + \hspace{-0.5mm} 1, 0) \Pi_{0}\Big]\Big) \\*
&\hspace{4mm} \times \Big(A_{k}^{-1} \Big[\Phi_{11}(k+1, 0) + \Phi_{12}(k+1, 0) \Pi_{0}\Big] \\*
&\hspace{4mm} + A_{k}^{-1} B_{k} B_{k}\t \Big[\Phi_{21}(k+1, 0) + \Phi_{22}(k+1, 0) \Pi_{0}\Big]\Big)^{-1}.
\end{align*}
Equating the above two equations and canceling out the common terms yields 
\begin{multline*}
\Phi_{21}(k+1, 0) + \Phi_{22}(k+1, 0) \Pi_{0} = 
\\* 
\Pi_{k+1} \Big(\Phi_{11}(k+1, 0) + \Phi_{12}(k+1, 0) \Pi_{0}\Big).
\end{multline*}
There exists $\Pi_{k+1}$ such that the above equation holds if and only if 
\begin{multline*}
0 \neq \ker \Big(\Phi_{11}(k+1, 0) + \Phi_{12}(k+1, 0) \Pi_{0}\Big) \subseteq 
\\* 
\ker \Big(\Phi_{21}(k+1, 0) + \Phi_{22}(k+1, 0) \Pi_{0}\Big).
\end{multline*}
It follows that there exists $\xi \in \mathbb{R}^{n}$ with $\xi \neq 0$ such that the right-hand side of \eqref{eqn:phi-k-k+1} times $\xi$ equals zero. 
However, the left-hand side of \eqref{eqn:phi-k-k+1} times $\xi$ is nonzero, since $\Phi_{11}(k, 0) + \Phi_{12}(k, 0) \Pi_{0}$ is invertible. 
We have thus reached a contradiction.

Lastly, we will prove \eqref{sol:pi-Q-s} by induction. 
For $\ell = 0$, we know that \eqref{sol:pi-Q-s} holds because of \eqref{sol:pi-Q-0}. 
Assume that, for $\ell = i$, where $i \leq N-1$, \eqref{sol:pi-Q-s} holds. 
It can be checked that $\Phi_{M}(k, i) = \Phi_{M}(k, i+1) M_{i}$ regardless of $i < k$ or $i \geq k$. 
Thus, for $\ell = i+1$, \eqref{sol:pi-Q-s} holds. \hfill
\end{proof}


The next result on the monotonicity of the solution to \eqref{eqn:pi-Q} is used to establish the invertibility of $\Phi_{11}$ in Lemma~\ref{lem:phi-eqn}.

\begin{proposition} \label{prp:pi-monotone}
Let $P_{k}^{1}$ and $P_{k}^{2}$ be the respective solutions to the following Riccati difference equations
\begin{align*}
P_{k}^{1} &= A_{k}\t P_{k+1}^{1} \big(I_{n} + \Upsilon_{k} P_{k+1}^{1}\big)^{-1} A_{k} + \Omega_{k}, \quad P_{0}^{1} = 0_{n \times n}, \\
P_{k}^{2} &= A_{k}\t P_{k+1}^{2} \big(I_{n} + \Gamma_{k} P_{k+1}^{2}\big)^{-1} A_{k} + Q_{k}, \quad P_{0}^{2} = 0_{n \times n}, 
\end{align*}
where $A_{k}$ is invertible and $\Upsilon_{k}, \Gamma_{k}, \Omega_{k}, Q_{k} \succeq 0$ over the maximal integer time intervals of existence of the respective solutions $P_{k}^{1}$ and $P_{k}^{2}$. 
Let $\mathcal{I} \subseteq \mathbb{Z}$ denote the common integer time intervals of existence of $P_{k}^{1}$ and $P_{k}^{2}$. 
If, for all $k \in \mathcal{I}$, $\Omega_{k} \succeq Q_{k}$ and $\Upsilon_{k} \preceq \Gamma_{k}$, then, 
\begin{subequations} \label{ineq:pi-monotone}
\begin{align}
& P_{k}^{1} \succeq P_{k}^{2} \succeq 0, \quad k \in \mathcal{I}, ~ k \leq 0, \label{ineq:pi-monotone-1} \\
& P_{k}^{1} \preceq P_{k}^{2} \preceq 0, \quad k \in \mathcal{I}, ~ k \geq 0. \label{ineq:pi-monotone-2}
\end{align}
\end{subequations}
\end{proposition}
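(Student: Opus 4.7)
The plan is to prove \eqref{ineq:pi-monotone-1} by backward induction from the base point $k = 0$ and \eqref{ineq:pi-monotone-2} by forward induction from $k = 0$. In both cases the base $P_{0}^{1} = P_{0}^{2} = 0$ is immediate, and the inductive step is driven by the monotonicity properties of the one-step Riccati map, realized in two different forms for the two directions.

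For the backward direction $k \leq 0$, I would use the direct recursion together with the auxiliary monotonicity fact that the map $g(X; \Omega, \Upsilon) \triangleq A\t X(I + \Upsilon X)^{-1} A + \Omega$ is order-preserving in $X \succeq 0$ and in $\Omega \succeq 0$, and order-reversing in $\Upsilon \succeq 0$. The key identity is $X(I + \Upsilon X)^{-1} = (X^{-1} + \Upsilon)^{-1}$ valid for $X \succ 0$, from which both monotonicities follow by two successive applications of the order-reversing nature of matrix inversion on the positive-definite cone. With this lemma in hand, the induction is immediate: assuming $P_{k+1}^{1} \succeq P_{k+1}^{2} \succeq 0$, the hypotheses $\Omega_{k} \succeq Q_{k}$ and $\Upsilon_{k} \preceq \Gamma_{k}$ give $P_{k}^{1} \succeq P_{k}^{2}$, while $P_{k}^{2} \succeq 0$ follows from $Q_{k} \succeq 0$ together with $A\t Y (I + \Gamma Y)^{-1} A \succeq 0$ for $Y, \Gamma \succeq 0$.

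For the forward direction $k \geq 0$, the iterates generally become negative semidefinite, so the direct recursion is awkward. I would instead invert it: solving for $P_{k+1}$ in terms of $P_{k}$ using Assumption~\ref{asm:A-inv} and the matrix inversion lemma yields
\begin{equation*}
\big(P_{k+1}\big)^{-1} = A_{k} \big(P_{k} - \Omega_{k}\big)^{-1} A_{k}\t - \Upsilon_{k},
\end{equation*}
whenever the indicated inverses exist. At $k = 0$ this gives $\big(P_{1}\big)^{-1} = -A_{0} \Omega_{0}^{-1} A_{0}\t - \Upsilon_{0} \preceq 0$, so $P_{1} \preceq 0$. The forward induction then exploits the order-reversing property of inversion on the negative-definite cone: assuming $P_{k}^{1} \preceq P_{k}^{2} \preceq 0$, the inequalities $\Omega_{k} \succeq Q_{k}$ and $\Upsilon_{k} \preceq \Gamma_{k}$ yield $P_{k}^{1} - \Omega_{k} \preceq P_{k}^{2} - Q_{k} \prec 0$, whence $\big(P_{k+1}^{1}\big)^{-1} \succeq \big(P_{k+1}^{2}\big)^{-1}$, and inverting back delivers $P_{k+1}^{1} \preceq P_{k+1}^{2} \preceq 0$.

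The main obstacle will be handling potential singularity. The identity $X(I + \Upsilon X)^{-1} = (X^{-1} + \Upsilon)^{-1}$ requires $X \succ 0$, and the inverse form of the forward recursion requires $P_{k} - \Omega_{k}$ to be invertible, yet the hypotheses only guarantee $\Omega_{k}, Q_{k}, \Upsilon_{k}, \Gamma_{k} \succeq 0$. I would resolve this by a standard perturbation argument: replace $\Omega_{k}$ and $Q_{k}$ by $\Omega_{k} + \varepsilon I_{n}$ and $Q_{k} + \varepsilon I_{n}$ (preserving the ordering $\Omega_{k} + \varepsilon I_{n} \succeq Q_{k} + \varepsilon I_{n}$), establish \eqref{ineq:pi-monotone} for the perturbed systems via the clean inductions above where every iterate is strictly definite, and then pass to the limit $\varepsilon \to 0^{+}$, invoking continuity of the Riccati recursion in its parameters on the open set where the iteration remains well defined.
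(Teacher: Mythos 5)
Your proposal is correct, and for the substantive half of the statement it takes a genuinely different route from the paper. For $k \leq 0$ the paper simply cites a known comparison theorem (Freiling et al.), whereas you reprove it via the monotonicity of the one-step map $X \mapsto A\t X (I+\Upsilon X)^{-1} A + \Omega$; that is standard and fine. For $k \geq 0$, which is the hard direction because the recursion must be run forward (i.e., inverted) and the iterates become negative semidefinite and possibly singular, the paper works with the exact, possibly singular iterates: it first shows $\ker P_{i+1}^{1} \subseteq \ker P_{i+1}^{2}$, then performs a two-stage orthogonal reduction to quotient out the kernels, and closes the argument with a dedicated Schur-complement lemma (Lemma~\ref{lem:monotone}) comparing $X^{-1}$ against a padded block $\mathrm{diag}(Y,0)$. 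You instead regularize: replacing $\Omega_k, Q_k$ by $\Omega_k + \varepsilon I_n$, $Q_k + \varepsilon I_n$ forces every forward iterate to be strictly negative definite (indeed $\big(P_{k+1}\big)^{-1} = A_k (P_k - \Omega_k - \varepsilon I_n)^{-1} A_k\t - \Upsilon_k \prec 0$ whenever $P_k \preceq 0$), so the inverted recursion and the order-reversal of inversion on the negative-definite cone give the comparison in two lines, and the semidefinite conclusion is recovered in the limit $\varepsilon \to 0^{+}$. The limit passage is legitimate: the forward step can be written explicitly as $P_{k+1} = (I_n - H_k \Upsilon_k)^{-1} H_k$ with $H_k = A_k^{-\mbox{\tiny\sf T}}(P_k - \Omega_k)A_k^{-1}$, a rational map whose domain of definition is open, so for each fixed $k \in \mathcal{I}$ the finitely many perturbed steps converge to the unperturbed ones, and the non-strict inequalities are closed under limits; spelling this out would make the sketch airtight. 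The trade-off is that the paper's argument is exact and its auxiliary Lemma~\ref{lem:monotone} is reused later (e.g., in Lemma~\ref{lem:phi-11-12-inv-mono}), while your perturbation argument is shorter, avoids the kernel bookkeeping entirely, and localizes all the technicality in one continuity claim.
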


\begin{proof}
For the case $k \leq 0$, \eqref{ineq:pi-monotone-1} follows directly from~\cite{freiling1996generalized}. 
We will prove \eqref{ineq:pi-monotone-2} by induction. 
To this end, let $P_{0}^{1} = P_{0}^{2} = 0$; thus, for $k = 0$, \eqref{ineq:pi-monotone-2} holds. 
Assume, for $k = i \geq 0$, that \eqref{ineq:pi-monotone-2} holds. 
We will show that, for $k = i+1 \in \mathcal{I}$, \eqref{ineq:pi-monotone-2} also holds. 
Since $\Omega_{i} \succeq Q_{i}$, we have 
\begin{equation} \label{ineq:p-pi}
P_{i+1}^{1} \big(I_{n} + \Upsilon_{i} P_{i+1}^{1}\big)^{-1} \preceq 
P_{i+1}^{2} \big(I_{n} + \Gamma_{i} P_{i+1}^{2}\big)^{-1} \preceq 0.
\end{equation}
It follows that $\ker P_{i+1}^{1} \subseteq \ker P_{i+1}^{2}$.

If $P_{i+1}^{1}$ and $P_{i+1}^{2}$ are invertible, it follows that 
$\big(\big(P_{i+1}^{1}\big)^{-1} + \Upsilon_{i}\big)^{-1} \preceq 
\big(\big(P_{i+1}^{2}\big)^{-1} + \Gamma_{i}\big)^{-1} \prec 0$. 
Hence, $\big(P_{i+1}^{2}\big)^{-1} + \Gamma_{i} \preceq \big(P_{i+1}^{1}\big)^{-1} + \Upsilon_{i} \prec 0$. 
Since $\Gamma_{k} \succeq \Upsilon_{k} \succeq 0$, it follows that $\big(P_{i+1}^{2}\big)^{-1} \preceq \big(P_{i+1}^{1}\big)^{-1} \prec 0$. 
Thus, we have $P_{i+1}^{1} \preceq P_{i+1}^{2} \prec 0$.

Without loss of generality, assume $P_{i+1}^{1}$ and $P_{i+1}^{2}$ are singular and $\ker P_{i+1}^{1}$ is a proper subset of $\ker P_{i+1}^{2}$. 
The cases when $P_{i+1}^{1}$ is invertible and $P_{i+1}^{2}$ is singular or when $\ker P_{i+1}^{1} = \ker P_{i+1}^{2}$ can be proved in a similar way. 
Since $\ker P_{i+1}^{1} \subseteq \ker P_{i+1}^{2}$, there exists an $n \times n$ orthogonal matrix $U \triangleq [\bar{U} \enspace \tilde{U}]$ such that
\begin{equation*}
P_{i+1}^{1} \hspace{-0.7mm} = \hspace{-0.7mm} 
\begin{bmatrix}
\bar{U} & \tilde{U}
\end{bmatrix}
\hspace{-1.5mm}
\begin{bmatrix}
\bar{P}^{1} & 0 \\
0 & 0 
\end{bmatrix}
\hspace{-2mm}
\begin{bmatrix}
\bar{U}\t \\ 
\tilde{U}\t
\end{bmatrix} 
\hspace{-1mm} ,
P_{i+1}^{2} \hspace{-0.7mm} = \hspace{-0.7mm} 
\begin{bmatrix}
\bar{U} & \tilde{U}
\end{bmatrix}
\hspace{-1.5mm}
\begin{bmatrix}
\bar{P}^{2} & 0 \\
0 & 0 
\end{bmatrix}
\hspace{-2mm}
\begin{bmatrix}
\bar{U}\t \\ 
\tilde{U}\t
\end{bmatrix} \hspace{-1mm} ,
\end{equation*}
where $\ker P_{i+1}^{1} = \range \tilde{U}$ and $\bar{P}^{1}$ is an $r \times r$ diagonal matrix. 
It suffices to show that $\bar{P}^{1} \preceq \bar{P}^{2} \preceq 0$.

Since $\ker P_{i+1}^{1} \neq \ker P_{i+1}^{2}$, there exists an $r \times r$ orthogonal matrix $V \triangleq [\hat{V} \enspace \tilde{V}]$ such that
\begin{equation*}
\bar{P}^{2} = 
\begin{bmatrix}
\hat{V} & \tilde{V}
\end{bmatrix}
\begin{bmatrix}
\hat{P}^{2} & 0 \\
0 & 0 
\end{bmatrix}
\begin{bmatrix}
\hat{V}\t \\ 
\tilde{V}\t
\end{bmatrix},
\end{equation*}
where $\ker \bar{P}^{2} = \range \tilde{V}$ and $\hat{P}^{2}$ is an $\rho \times \rho$ diagonal matrix. 
To this end, it suffices to show that 
\begin{equation*}
V\t \bar{P}^{1} V \preceq 
\begin{bmatrix}
\hat{P}^{2} & 0 \\
0 & 0 
\end{bmatrix} 
\preceq 0. 
\end{equation*}

Multiplying \eqref{ineq:p-pi} by $U\t$ on the left and by $U$ on the right yields the matrix inequality
\begin{equation} \label{ineq:p-pi-bar}
\Big( \hspace{-0.7mm} \big(\bar{P}^{1}\big)^{-1} + \bar{\Upsilon}\Big)^{-1} = 
\bar{P}^{1} \big(I_{r} + \bar{\Upsilon} \bar{P}^{1}\big)^{-1} \preceq 
\bar{P}^{2} \big(I_{r} + \bar{\Gamma} \bar{P}^{2}\big)^{-1} \preceq 0,
\end{equation}
where $\bar{\Upsilon}$ and $\bar{\Gamma}$ are the $r$th leading principal submatrices of $U\t \Upsilon_{i} U$ and $U\t \Gamma_{i} U$, respectively. 
Multiplying \eqref{ineq:p-pi-bar} by $V\t$ on the left and by $V$ on the right yields 
\begin{equation} \label{ineq:p-pi-hat}
\Big(V\t \big(\bar{P}^{1}\big)^{-1} V + V\t \bar{\Upsilon} V\Big)^{-1} \preceq 
\begin{bmatrix}
\Big(\big(\hat{P}^{2}\big)^{-1} + \hat{\Gamma}\Big)^{-1} & 0 \\
0 & 0 
\end{bmatrix} 
\preceq 0,
\end{equation}
where $\hat{\Gamma}$ is the $\rho$th leading principal submatrix of $V\t \bar{\Gamma} V$. 
Let 
\begin{equation*}
V\t \big(\bar{P}^{1}\big)^{-1} V \triangleq 
\begin{bmatrix}
\hat{Z} & \hat{Z}_{2} \\
\hat{Z}_{2}\t & \hat{Z}_{4}
\end{bmatrix}
, \quad
V\t \bar{\Upsilon} V \triangleq 
\begin{bmatrix}
\hat{\Upsilon} & \hat{\Upsilon}_{2} \\
\hat{\Upsilon}_{2}\t & \hat{\Upsilon}_{4}
\end{bmatrix}.
\end{equation*}
Since $\big(V\t \big(\bar{P}^{1}\big)^{-1} V + V\t \bar{\Upsilon} V\big)^{-1} \prec 0$, the matrix inequality \eqref{ineq:p-pi-hat} and Lemma~\ref{lem:monotone} imply that 
$
\big(\hat{Z} + \hat{\Upsilon}\big)^{-1} \preceq \big((\hat{P}^{2})^{-1} + \hat{\Gamma}\big)^{-1} \prec 0
$. 
It follows that $\big(\hat{P}^{2}\big)^{-1} + \hat{\Gamma} \preceq \hat{Z} + \hat{\Upsilon} \prec 0$. 
Since $\hat{\Gamma} \succeq \hat{\Upsilon} \succeq 0$, it follows that $\big(\hat{P}^{2}\big)^{-1} \preceq \hat{Z} \prec 0$. 
Thus, we have $\hat{Z}^{-1} \preceq \hat{P}^{2} \prec 0$. 
The fact that $\big(V\t (\bar{P}^{1})^{-1} V + V\t \bar{\Upsilon} V\big)^{-1} \prec 0$ implies that $V\t \bar{P}^{1} V \prec 0$. 
It then follows from Lemma~\ref{lem:monotone} that 
\begin{equation*}
V\t \bar{P}^{1} V = 
\begin{bmatrix}
\hat{Z} & \hat{Z}_{2} \\
\hat{Z}_{2}\t & \hat{Z}_{4}
\end{bmatrix}^{-1}
\preceq 
\begin{bmatrix}
\hat{P}^{2} & 0 \\
0 & 0 
\end{bmatrix} 
\preceq 0. 
\end{equation*}
This completes the proof. \hfill
\end{proof}


\section*{Appendix~C}

In this appendix, we present some useful properties of the state transition matrix $\Phi_{M}$ defined by \eqref{eqn:phi-M-blk}. 
These properties are used to prove Theorem~\ref{thm:exist-unique-Q}, Theorem~\ref{thm:pi-pos-exist-cond}, and Theorem~\ref{thm:mean-covar}.

Let
\begin{equation*}
\Phi_{M}(k, s)^{-1} = \Phi_{M}(s, k) \triangleq 
\begin{bmatrix}
\Phi_{11}(s, k) & \Phi_{12}(s, k) \\
\Phi_{21}(s, k) & \Phi_{22}(s, k)
\end{bmatrix}.
\end{equation*}


\begin{lemma} \label{lem:phi-eqn}
Assume, for all $k \in \mathbb{Z}$, $A_{k}$ is invertible. 
Then, for all $k, s \in \mathbb{Z}$, 
\begin{subequations} \label{eqn:phi-blk}
\begin{align} 
\Phi_{12}(k, s)\t \Phi_{22}(k, s) &= \Phi_{22}(k, s)\t \Phi_{12}(k, s), \label{eqn:phi-blk-a} \\
\Phi_{21}(k, s)\t \Phi_{11}(k, s) &= \Phi_{11}(k, s)\t \Phi_{21}(k, s), \label{eqn:phi-blk-b} \\
\Phi_{12}(k, s) \Phi_{11}(k, s)\t &= \Phi_{11}(k, s) \Phi_{12}(k, s)\t, \label{eqn:phi-blk-c} \\
\Phi_{21}(k, s) \Phi_{22}(k, s)\t &= \Phi_{22}(k, s) \Phi_{21}(k, s)\t, \label{eqn:phi-blk-d} \\
\Phi_{11}(k, s)\t \Phi_{22}(k, s) &- \Phi_{21}(k, s)\t \Phi_{12}(k, s) = I_{n}, \label{eqn:phi-blk-e} \\
\Phi_{11}(k, s) \Phi_{22}(k, s)\t &- \Phi_{12}(k, s) \Phi_{21}(k, s)\t = I_{n}. \label{eqn:phi-blk-f}
\end{align}
\end{subequations}
Moreover, $\Phi_{11}(k, s)$ and $\Phi_{22}(k, s)$ are invertible with 
\begin{equation} \label{eqn:phi-11-22}
\Phi_{11}(k, s) = \Phi_{22}(s, k)\t. 
\end{equation}
Furthermore, 
\begin{subequations} \label{eqn:phi-12-21}
\begin{align}
\Phi_{12}(k, s) &= - \Phi_{12}(s, k)\t, \label{eqn:phi-12-equiv} \\
\Phi_{21}(k, s) &= - \Phi_{21}(s, k)\t. \label{eqn:phi-21-equiv}
\end{align}
\end{subequations}
\end{lemma}

\begin{proof}
First, we show \eqref{eqn:phi-blk}. 
When $s = k$, we have $\Phi_{11}(s, s) = \Phi_{22}(s, s) = I_{n}$ and $\Phi_{12}(s, s) = \Phi_{21}(s, s) = 0$. 
Clearly, for $s = k$, \eqref{eqn:phi-blk} holds. 
Let 
\begin{equation*}
\bar{J} \triangleq 
\begin{bmatrix}
0 & I_{n} \\
-I_{n} & 0
\end{bmatrix}.
\end{equation*}
It can be checked that $M_{k}\t \bar{J} M_{k} = \bar{J} = M_{k} \bar{J} M_{k}\t$. 
When $s < k$, we have $\Phi_{M}(k, s) = M_{k-1} \Phi_{M}(k-1, s)$. 
It follows immediately that 
$\Phi_{M}(k, s)\t \bar{J} \Phi_{M}(k, s) = \Phi_{M}(k-1, s)\t M_{k-1}\t \bar{J} M_{k-1} \Phi_{M}(k-1, s) = \Phi_{M}(k-1, s)\t \bar{J} \Phi_{M}(k-1, s)$. 
Since, for all $s < k$, the above equation holds, we have by induction that $\Phi_{M}(k, s)\t \bar{J} \Phi_{M}(k, s) = \Phi_{M}(s, s)\t \bar{J} \Phi_{M}(s, s) = \bar{J}$. 
Similarly, 
$
\Phi_{M}(k, s) \bar{J} \Phi_{M}(k, s)\t 
= 
M_{k-1} \cdots M_{s} \Phi_{M}(s, s) \bar{J} \Phi_{M}(s, s)\t M_{s}\t \cdots M_{k-1}\t = \bar{J}
$. 
Thus, for $s < k$, \eqref{eqn:phi-blk} holds. 
Similarly, it can be shown that, for $s > k$, \eqref{eqn:phi-blk} holds.

Then, we show that $\Phi_{11}(k, s)$ is invertible. 
Let $\Pi_{k}$ be the solution to \eqref{eqn:pi-Q} with $\Pi_{0} = 0_{n \times n}$, and let $\mathcal{I}_{0} \subseteq \mathbb{Z}$ be the maximal integer time interval of existence of $\Pi_{k}$. 
Let $\bar{\Pi}_{k}$ be the solution to \eqref{eqn:pi-Q} when $Q_{k} \equiv 0_{n \times n}$ and $\bar{\Pi}_{0} = 0_{n \times n}$. 
Clearly, we have, for all $k \in \mathbb{Z}$, $\bar{\Pi}_{k} \equiv 0_{n \times n}$. 
Let $\hat{\Pi}_{k}$ be the solution to \eqref{eqn:pi-Q} when $B_{k} \equiv 0_{n \times n}$ and $\hat{\Pi}_{0} = 0_{n \times n}$. 
Since when $B_{k} \equiv 0_{n \times n}$, \eqref{eqn:pi-Q} is a linear matrix equation, 
it follows that, for all $k \in \mathbb{Z}$, the solution $\hat{\Pi}_{k}$ exists. 
Moreover, it is easy to see that, for $k < 0$, $\hat{\Pi}_{k} \succeq 0$, and, for $k > 0$, $\hat{\Pi}_{k} \preceq 0$. 
From the monotonicity of the matrix Riccati difference equation in Proposition~\ref{prp:pi-monotone}, we have
\begin{align*}
&0 \equiv \bar{\Pi}_{k} \preceq \Pi_{k} \preceq \hat{\Pi}_{k}, \quad k \in \mathcal{I}_{0}, ~ k \leq 0, \\
&\hat{\Pi}_{k} \preceq \Pi_{k} \preceq \bar{\Pi}_{k} \equiv 0, \quad k \in \mathcal{I}_{0}, ~ k \geq 0.
\end{align*}
Hence, $\Pi_{k}$ has no finite escape time and $\mathcal{I}_{0} = \mathbb{Z}$. 
From Proposition~\ref{prp:pi-Q-exist}, for all $k \in \mathbb{Z}$, $\Phi_{11}(k, 0)$ is invertible. 
A similar argument shows that, for all $k, s \in \mathbb{Z}$, $\Phi_{11}(k, s)$ is invertible.

Next, we show \eqref{eqn:phi-11-22}. 
From \eqref{eqn:phi-blk-f} and \eqref{eqn:phi-blk-c}, it follows that
\begin{align*}
\Phi_{11}(k, s)^{-1} 
&= \Phi_{22}(k, s)\t - \Phi_{12}(k, s)\t \Phi_{11}(k, s)^{- \mbox{\tiny\sf T}} \Phi_{21}(k, s)\t. 
\end{align*}
Since $\Phi_{11}(k, s)$ is invertible, the Schur complement of the block matrix $\Phi_{11}(k, s)$ in the matrix $\Phi_{M}(k, s)$ is given by 
$\Phi_{22}(k, s) - \Phi_{21}(k, s) \Phi_{11}(k, s)^{-1} \Phi_{12}(k, s) = \Phi_{22}(s, k)^{-1}$. 
Therefore, we have $\Phi_{11}(k, s) = \Phi_{22}(s, k)\t$.

Lastly, we show \eqref{eqn:phi-12-21}. 
It follows from equation \eqref{eqn:phi-blk-c} and the invertibility of $\Phi_{11}(k, s)$ that 
\begin{equation*}
\Phi_{12}(k, s) 
= \Phi_{22}(s, k)\t \Phi_{12}(k, s)\t \Phi_{11}(k, s)^{- \mbox{\tiny\sf T}}. 
\end{equation*}
Since $\Phi_{M}(k, s) \Phi_{M}(s, k) = I_{2n}$, it follows immediately that $\Phi_{11}(k, s) \Phi_{12}(s, k) + \Phi_{12}(k, s) \Phi_{22}(s, k) = 0$. 
Thus, 
\begin{equation*}
- \Phi_{12}(s, k)\t 
= \Phi_{22}(s, k)\t \Phi_{12}(k, s)\t \Phi_{11}(k, s)^{- \mbox{\tiny\sf T}} 
= \Phi_{12}(k, s). 
\end{equation*}
Similarly, it follows from equation \eqref{eqn:phi-blk-d} and the invertibility of $\Phi_{22}(k, s)$ that 
\begin{equation*}
\Phi_{21}(k, s)
= \Phi_{11}(s, k)\t \Phi_{21}(k, s)\t \Phi_{22}(k, s)^{- \mbox{\tiny\sf T}}. 
\end{equation*}
Since $\Phi_{21}(k, s) \Phi_{11}(s, k) + \Phi_{22}(k, s) \Phi_{21}(s, k) = 0$, we have 
\begin{equation*}
- \Phi_{21}(s, k)\t 
= \Phi_{11}(s, k)\t \Phi_{21}(k, s)\t \Phi_{22}(k, s)^{- \mbox{\tiny\sf T}} 
= \Phi_{21}(k, s). 
\end{equation*}
This completes the proof. \hfill
\end{proof}

The next result plays a critical role in the proof of Theorem~\ref{thm:pi-pos-exist-cond}.

\begin{lemma} \label{lem:phi-21-neg}
Suppose Assumption~\ref{asm:A-inv} holds. 
For all $k = 0, 1, \ldots, N$, 
\begin{equation} \label{eqn:phi-21-11-neg}
\Phi_{21}(k, 0) \Phi_{11}(k, 0)^{-1} = \Phi_{11}(k, 0)^{- \mbox{\tiny\sf T}} \Phi_{21}(k, 0)\t \preceq 0.
\end{equation}
For all $k = 1, 2, \ldots, N$, 
\begin{equation} \label{eqn:phi-mono}
- \Phi_{11}(k, 0)^{-1} \Phi_{12}(k, 0) 
= \Phi_{12}(0, k) \Phi_{22}(0, k)^{-1} 
= \sum_{i=0}^{k-1} X_{i},
\end{equation}
where 
\begin{align*}
X_{i} &\triangleq \Phi_{11}(i, 0)^{-1} A_{i}^{-1} B_{i} 
\Big(I_{p} + B_{i}\t A_{i}^{- \mbox{\tiny\sf T}} \big(Q_{i} - \Phi_{21}(i, 0) 
\\* 
&\hspace{10mm} \times \Phi_{11}(i, 0)^{-1}\big) A_{i}^{-1} B_{i}\Big)^{-1} 
B_{i}\t A_{i}^{- \mbox{\tiny\sf T}} \Phi_{11}(i, 0)^{- \mbox{\tiny\sf T}} \succeq 0.
\end{align*}
For all $k = 0, 1, \ldots, N-1$, 
\begin{equation} \label{eqn:phi-11-img}
\sum_{i=0}^{k} \range \hspace{-0.5mm} \Big( \hspace{-0.5mm} \Phi_{11}(i, 0)^{-1} A_{i}^{-1} B_{i} \hspace{-0.5mm} \Big) \hspace{-0.6mm} = \hspace{-0.8mm} \sum_{i=0}^{k} \range \hspace{-0.5mm} \Big( \hspace{-0.5mm} \Phi_{A}(0, i) A_{i}^{-1} B_{i} \hspace{-0.5mm} \Big).
\end{equation}
\end{lemma}

\begin{proof}
From \eqref{eqn:phi-blk-b}, we have $\Phi_{21}(k, s)\t \Phi_{11}(k, s) = \Phi_{11}(k, s)\t \Phi_{21}(k, s)$. 
From Lemma~\ref{lem:phi-eqn}, $\Phi_{11}(k, s)$ is invertible. 
It follows that 
$\Phi_{21}(k, 0) \Phi_{11}(k, 0)^{-1} = \Phi_{11}(k, 0)^{- \mbox{\tiny\sf T}} \Phi_{21}(k, 0)\t$.

Then, we show that $\Phi_{21}(k, 0) \Phi_{11}(k, 0)^{-1} \preceq 0$ by induction. 
When $k = 0$, $\Phi_{21}(0, 0) \Phi_{11}(0, 0)^{-1} = 0$. 
For $k = i$, assume $\Phi_{21}(i, 0) \Phi_{11}(i, 0)^{-1} \preceq 0$. 
It follows from $\Phi_{M}(i+1, 0) = M_{i} \Phi_{M}(i, 0)$ that
\begin{subequations} \label{eqn:phi-11-21}
\begin{align}
&\Phi_{11}(i+1, 0) = 
\label{eqn:phi-11} \\* 
&A_{i} \Phi_{11}(i, 0) \hspace{-0.5mm} + \hspace{-0.5mm} B_{i} B_{i}\t A_{i}^{- \mbox{\tiny\sf T}} \big(Q_{i} - \Phi_{21}(i, 0) \Phi_{11}(i, 0)^{-1}\big) \Phi_{11}(i, 0), \nonumber \\
&\Phi_{21}(i+1, 0) = - A_{i}^{- \mbox{\tiny\sf T}} \big(Q_{i} - \Phi_{21}(i, 0) \Phi_{11}(i, 0)^{-1}\big) \Phi_{11}(i, 0). \label{eqn:phi-21}
\end{align}
\end{subequations}
Thus, from \eqref{eqn:phi-11-21}, we have
\begin{align*}
&\Phi_{11}(i+1, 0)\t \Phi_{21}(i+1, 0) = 
- \Phi_{11}(i, 0)\t \big(Q_{i} - \Phi_{21}(i, 0) 
\\* 
&\hspace{-1mm} 
\times \hspace{-0.7mm} \Phi_{11}(i, 0)^{-1} \hspace{-0.4mm} \big) \Phi_{11}(i, 0) 
\hspace{-0.7mm} - \hspace{-0.7mm} \Phi_{11}(i, 0)\t \big( \hspace{-0.4mm} Q_{i} \hspace{-0.7mm} - \hspace{-0.7mm} \Phi_{21}(i, 0) \Phi_{11}(i, 0)^{-1} \hspace{-0.4mm} \big) 
\\*
&\hspace{-1mm} 
\times 
A_{i}^{-1} B_{i} B_{i}\t A_{i}^{- \mbox{\tiny\sf T}} \big(Q_{i} - \Phi_{21}(i, 0) \Phi_{11}(i, 0)^{-1}\big) \Phi_{11}(i, 0).
\end{align*}
Since $\Phi_{21}(i, 0) \Phi_{11}(i, 0)^{-1} \preceq 0$ and $Q_{i} \succeq 0$, it follows that $\Phi_{11}(i+1, 0)\t \Phi_{21}(i+1, 0) \preceq 0$. 
Therefore, it follows that $\Phi_{21}(i+1, 0) \Phi_{11}(i+1, 0)^{-1} \preceq 0$.

Next, we show \eqref{eqn:phi-mono}. 
From Lemma \ref{lem:phi-eqn}, $\Phi_{11}(k, 0)$ and $\Phi_{22}(0, k)$ are invertible. 
Multiplying 
$\Phi_{11}(k, 0) \Phi_{12}(0, k) + \Phi_{12}(k, 0) \Phi_{22}(0, k) = 0$ 
by $\Phi_{11}(k, 0)^{-1}$ on the left and by $\Phi_{22}(0, k)^{-1}$ on the right yields 
$- \Phi_{11}(k, 0)^{-1} \Phi_{12}(k, 0) = \Phi_{12}(0, k) \Phi_{22}(0, k)^{-1}$.

It follows from \eqref{eqn:phi-21-11-neg} that $X_{i} \succeq 0$. 
For all $k = 0, 1, \ldots, N-1$, we can compute that, 
\begin{equation*}
- \Phi_{12}(k+1, 0) \Phi_{11}(k, 0)\t 
= 
- \Phi_{11}(k+1, 0) \Phi_{12}(k, 0)\t - M_{k, 2},
\end{equation*}
where $M_{k, 1}$ and $M_{k, 2}$ are given in \eqref{def:M}. 
In view of Lemma~\ref{lem:phi-eqn}, we have
\begin{align*}
& - \Phi_{11}(k+1, 0)^{-1} \Phi_{12}(k+1, 0) = - \Phi_{11}(k, 0)^{-1} \Phi_{12}(k, 0) 
\\* 
&\hspace{22mm} 
+ \Phi_{11}(k+1, 0)^{-1} B_{k} B_{k}\t A_{k}^{- \mbox{\tiny\sf T}} \Phi_{11}(k, 0)^{- \mbox{\tiny\sf T}}.
\end{align*}

It is easy to check that
\begin{align*}
&\Big(A_{k} + B_{k} B_{k}\t A_{k}^{- \mbox{\tiny\sf T}} \big(Q_{k} - \Phi_{21}(k, 0) \Phi_{11}(k, 0)^{-1}\big)\Big) A_{k}^{-1} B_{k} \\
&= B_{k} \Big(I_{p} + B_{k}\t A_{k}^{- \mbox{\tiny\sf T}} \big(Q_{k} - \Phi_{21}(k, 0) \Phi_{11}(k, 0)^{-1}\big) A_{k}^{-1} B_{k}\Big).
\end{align*}
It follows from \eqref{eqn:phi-11} and the above equation that 
\begin{equation*}
\Phi_{11}(k+1, 0)^{-1} B_{k} B_{k}\t A_{k}^{- \mbox{\tiny\sf T}} \Phi_{11}(k, 0)^{- \mbox{\tiny\sf T}} = X_{k}.
\end{equation*}

\vspace{-2mm}
Lastly, we show by induction the following stronger statements, for all $k = 0, 1, \ldots, N-1$, 
\begin{align}
& \sum_{i=0}^{k} \range \hspace{-0.5mm} \Big( \hspace{-0.5mm} \Phi_{11}(i, 0)^{-1} A_{i}^{-1} B_{i} \hspace{-0.5mm} \Big) \hspace{-0.6mm} = \hspace{-0.8mm} \sum_{i=0}^{k} \range \hspace{-0.5mm} \Big( \hspace{-0.5mm} \Phi_{A}(0, i) A_{i}^{-1} B_{i} \hspace{-0.5mm} \Big), \nonumber \\
& \Phi_{11}(k, 0)^{-1} \hspace{-0.5mm} = \hspace{-0.5mm} \Phi_{A}(0, k) \hspace{-0.5mm} + \hspace{-0.5mm} \sum_{i=0}^{k-1} \Phi_{11}(i, 0)^{-1} A_{i}^{-1} B_{i} Z_{i, k}, \label{eqn:phi-11-A} 
\end{align}
for some $Z_{i, k} \in \mathbb{R}^{p \times n}$,
with the convention that, for $k = 0$, \eqref{eqn:phi-11-A} reduces to $\Phi_{11}(0, 0)^{-1} = \Phi_{A}(0, 0) = I_{n}$.

When $k = 0$, we know that \eqref{eqn:phi-11-A} holds and we can check that $\Phi_{11}(0, 0)^{-1} A_{0}^{-1} B_{0} = \Phi_{A}(0, 0) A_{0}^{-1} B_{0}$, so their ranges are equal, and thus \eqref{eqn:phi-11-img} holds.

Assume now that, for all $k = 0, 1, \ldots, j$, where $j \leq N-2$, \eqref{eqn:phi-11-img} and \eqref{eqn:phi-11-A} hold. 
Since $\Phi_{M}(j+1, 0) = M_{j} \Phi_{M}(j, 0)$, we have $\Phi_{M}(j, 0) = M_{j}^{-1} \Phi_{M}(j+1, 0)$, where $M_{j}^{-1}$ is given by \eqref{eqn:M-inv}. 
Hence, $\Phi_{11}(j, 0) = A_{j}^{-1} \Phi_{11}(j+1, 0) + A_{j}^{-1} B_{j} B_{j}\t \Phi_{21}(j+1, 0)$. 
It follows that 
\begin{align*}
&\Phi_{11}(j+1, 0)^{-1} 
= \Big(A_{j} \Phi_{11}(j, 0) - B_{j} B_{j}\t \Phi_{21}(j+1, 0)\Big)^{-1} \\
&\hspace{-0.5mm} = \Phi_{11}(j, 0)^{-1} A_{j}^{-1} + \Phi_{11}(j, 0)^{-1} A_{j}^{-1} B_{j} Y_{j}, ~ \text{for some}~ Y_{j}, \\
&\hspace{-0.5mm} = \hspace{-0.5mm} \Phi_{A} \hspace{-0.3mm} ( \hspace{-0.3mm} 0, j \hspace{-0.5mm} + \hspace{-0.5mm} 1 \hspace{-0.3mm} ) \hspace{-0.6mm} + \hspace{-1mm} \sum_{i=0}^{j} \hspace{-0.6mm} \Phi_{11} \hspace{-0.3mm} ( \hspace{-0.3mm} i, 0 \hspace{-0.3mm} )^{-1} \hspace{-0.8mm} A_{i}^{-1} \hspace{-0.8mm} B_{i} Z_{i, j+1}, \text{for some}~ Z_{i, j+1}.
\end{align*}
Thus, for $k = j+1$, \eqref{eqn:phi-11-A} holds. 
Next, we show \eqref{eqn:phi-11-img}. 
Since
\begin{multline*} 
\Phi_{11}(j+1, 0)^{-1} A_{j+1}^{-1} B_{j+1} 
= \Phi_{A}(0, j+1) A_{j+1}^{-1} B_{j+1} 
\\* 
+ \sum_{i=0}^{j} \Phi_{11}(i, 0)^{-1} A_{i}^{-1} B_{i} Z_{i, j+1} A_{j+1}^{-1} B_{j+1},
\end{multline*}
it follows from the above equation and the induction assumption that 
$
\range \Big(\Phi_{11}(j+1, 0)^{-1} A_{j+1}^{-1} B_{j+1}\Big) \subseteq 
\sum_{i=0}^{j+1} \range \Big(\Phi_{A}(0, i) A_{i}^{-1} B_{i}\Big)
$, 
and 
$
\range \Big(\Phi_{A}(0, j+1) A_{j+1}^{-1} B_{j+1}\Big) \subseteq 
\sum_{i=0}^{j+1} \range \Big(\Phi_{11}(i, 0)^{-1} A_{i}^{-1} B_{i}\Big)
$. 
Therefore, for $k = j+1$, \eqref{eqn:phi-11-img} holds. \hfill
\end{proof}

\vspace{-3mm}
We are now ready to show the positive definiteness of $- \Phi_{11}(N, 0)^{-1} \Phi_{12}(N, 0)$, which will be utilized as an upper bound on $\Pi_{0}$ for the subsequent results.

\begin{corollary} \label{cor:phi-12-inv}
Under Assumption~\ref{asm:A-inv} and Assumption~\ref{asm:G-inv}, 
$- \Phi_{11}(N, 0)^{-1} \Phi_{12}(N, 0) \succ 0$. 
It follows that $\Phi_{12}(N, 0)$ is invertible. 
\end{corollary}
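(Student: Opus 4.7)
The plan is to read off the identity from Lemma~\ref{lem:phi-mono} with $k = N$, which writes $-\Phi_{11}(N,0)^{-1}\Phi_{12}(N,0)$ as the sum $\sum_{i=0}^{N-1} X_i$ of positive semi-definite matrices. Positive semi-definiteness is therefore immediate, so the whole task reduces to showing that this sum has full rank $n$; positive definiteness then follows automatically, and from it the invertibility of $\Phi_{12}(N,0)$ by noting that $\Phi_{11}(N,0)^{-1}$ is invertible by Lemma~\ref{lem:phi-11-inv}.

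To get the range, I would first observe that each $X_i$ factors as $Y_i \, T_i^{-1} \, Y_i\t$, where $Y_i \triangleq \Phi_{11}(i,0)^{-1} A_i^{-1} B_i$ and $T_i \triangleq I_p + B_i\t A_i^{-\mbox{\tiny\sf T}}\bigl(Q_i - \Phi_{21}(i,0)\Phi_{11}(i,0)^{-1}\bigr) A_i^{-1} B_i$. By Lemma~\ref{lem:phi-21-neg}, $-\Phi_{21}(i,0)\Phi_{11}(i,0)^{-1} \succeq 0$, and together with $Q_i \succeq 0$ this gives $T_i \succeq I_p \succ 0$, so $T_i^{-1} \succ 0$. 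Consequently $\range(X_i) = \range(Y_i) = \range\bigl(\Phi_{11}(i,0)^{-1} A_i^{-1} B_i\bigr)$.

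Next I would apply Lemma~\ref{lem:phi-11-img} with $k = N-1$ to convert the sum of ranges into
\begin{equation*}
\sum_{i=0}^{N-1} \range\bigl(\Phi_{11}(i,0)^{-1} A_i^{-1} B_i\bigr) = \sum_{i=0}^{N-1} \range\bigl(\Phi_A(0,i) A_i^{-1} B_i\bigr).
\end{equation*}
Using $\Phi_A(0,i) A_i^{-1} = \Phi_A(0,i+1)$ and $\Phi_A(N,i+1) = \Phi_A(N,0)\, \Phi_A(0,i+1)$, and the fact that $\Phi_A(N,0)$ is invertible under Assumption~\ref{asm:A-inv}, this right-hand side equals $\Phi_A(N,0)^{-1}$ times $\sum_{i=0}^{N-1} \range\bigl(\Phi_A(N,i+1) B_i\bigr)$. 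Assumption~\ref{asm:G-inv} states $G(N,0) \succ 0$, which by standard Gramian arguments is equivalent to $\sum_{i=0}^{N-1} \range\bigl(\Phi_A(N,i+1) B_i\bigr) = \mathbb{R}^n$, so the entire span is $\mathbb{R}^n$.

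Combining these pieces, $\sum_{i=0}^{N-1} X_i$ is a sum of positive semi-definite matrices whose ranges span $\mathbb{R}^n$, hence the sum itself is positive definite. Therefore $-\Phi_{11}(N,0)^{-1}\Phi_{12}(N,0) \succ 0$, and since $\Phi_{11}(N,0)^{-1}$ is invertible, $\Phi_{12}(N,0)$ is invertible as well. The only step that takes any thought is the range translation in Lemma~\ref{lem:phi-11-img}, which is already established; the rest is bookkeeping on ranges of PSD matrices and the Gramian-controllability equivalence.
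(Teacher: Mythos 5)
Your proof is correct and follows essentially the same route as the paper: both rest on the decomposition $-\Phi_{11}(N,0)^{-1}\Phi_{12}(N,0)=\sum_{i=0}^{N-1}X_i$ from Lemma~\ref{lem:phi-mono}, the range identity of Lemma~\ref{lem:phi-11-img}, and the controllability Gramian $G(N,0)\succ 0$. The only cosmetic difference is that the paper argues by contradiction with a vector $\zeta$ in the common kernel of the $X_i$, whereas you phrase the same fact directly in terms of the sum of the ranges being all of $\mathbb{R}^n$.
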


\vspace{-5mm}
\begin{proof}
From \eqref{eqn:phi-mono}, $- \Phi_{11}(N, 0)^{-1} \Phi_{12}(N, 0) \succeq 0$. 
Assume there exists $\zeta \in \mathbb{R}^{n}$ with $\zeta \neq 0$ such that $\zeta\t \big(- \Phi_{11}(N, 0)^{-1} \Phi_{12}(N, 0)\big) \zeta = 0$. 
It follows that, for all $k = 0, 1, \ldots, N-1$, $\zeta\t X_{k} \zeta = 0$. 
The expression of $X_{k}$ implies that, for all $k = 0, 1, \ldots, N-1$, $\zeta\t \Phi_{11}(k, 0)^{-1} A_{k}^{-1} B_{k} = 0$. 
In view of \eqref{eqn:phi-11-img}, for all $k = 0, 1, \ldots, N-1$, we have $\zeta\t A_{0}^{-1} A_{1}^{-1} \ldots A_{k}^{-1} B_{k} = 0$. 
Thus, $\zeta\t G(0, N) \zeta = - \zeta\t \Phi_{A}(0, N) G(N, 0) \Phi_{A}(0, N)\t \zeta = 0$, which contradicts either 
Assumption~\ref{asm:A-inv} or Assumption~\ref{asm:G-inv}. \hfill
\end{proof}


\section*{Appendix~D}

In this appendix, we prove Theorem~\ref{thm:pi-pos-exist-cond}, which provides necessary and sufficient conditions for the solution of the Riccati difference equation \eqref{eqn:pi-Q} to satisfy Property~\ref{pty:pi-pos}. 
The first result is an alternative statement of Property~\ref{pty:pi-pos}.

\begin{lemma} \label{lem:pi-pos}
Under Assumption~\ref{asm:A-inv}, suppose the condition in Proposition~\ref{prp:pi-Q-exist} holds so that \eqref{eqn:pi-Q} admits a solution $\Pi_{k}$ from time $0$ to $N$. 
Then, for all $k = 0, 1, \ldots, N-1$, we have that 
\begin{equation} \label{eqn:pi-pos}
I_{p} + B_{k}\t \Pi_{k+1} B_{k} = \big(I_{p} - B_{k}\t A_{k}^{- \mbox{\tiny\sf T}} \big(\Pi_{k} - Q_{k}\big) A_{k}^{-1} B_{k}\big)^{-1}.
\end{equation}
Thus, $I_{p} + B_{k}\t \Pi_{k+1} B_{k} \succ 0$ if and only if $I_{p} - B_{k}\t A_{k}^{- \mbox{\tiny\sf T}} \big(\Pi_{k} - Q_{k}\big) A_{k}^{-1} B_{k} \succ 0$. 
Moreover, equation \eqref{sol:pi-Q-0} can be written, equivalently, as
\begin{multline} \label{eqn:pi-k-alt}
\Pi_{k} = \Phi_{21}(k, 0) \Phi_{11}(k, 0)^{-1} + \Phi_{11}(k, 0)^{- \mbox{\tiny\sf T}} \Pi_{0} 
\\* 
\times \Big(I_{n} + \Phi_{11}(k, 0)^{-1} \Phi_{12}(k, 0) \Pi_{0}\Big)^{-1} \Phi_{11}(k, 0)^{-1}. 
\end{multline}
\end{lemma}

\begin{proof}
In view of \eqref{eqn:phi-M-blk} and \eqref{sol:pi-Q-s}, we can compute that
\begin{equation*}
\Pi_{k+1} = 
A_{k}^{- \mbox{\tiny\sf T}} \big(\Pi_{k} - Q_{k}\big) \big(A_{k} - B_{k} B_{k}\t A_{k}^{- \mbox{\tiny\sf T}} \big(\Pi_{k} - Q_{k}\big) \big)^{-1}.
\end{equation*}
It follows that 
\begin{equation*}
I_{p} + B_{k}\t \Pi_{k+1} B_{k} = 
\Big(I_{p} - B_{k}\t A_{k}^{- \mbox{\tiny\sf T}} \big(\Pi_{k} - Q_{k}\big) A_{k}^{-1} B_{k}\Big)^{-1}.
\end{equation*}

In view of \eqref{eqn:phi-blk-b}, \eqref{eqn:phi-blk-e}, and the invertibility of $\Phi_{11}(k, 0)$, we have 
$
\Phi_{21}(k, 0) + \Phi_{22}(k, 0) \Pi_{0} = 
\Phi_{11}(k, 0)^{- \mbox{\tiny\sf T}} \Big(\Pi_{0} + \Phi_{21}(k, 0)\t \big(\Phi_{11}(k, 0) + \Phi_{12}(k, 0) \Pi_{0}\big)\Big)
$. 
From \eqref{sol:pi-Q-0}, the above equation, \eqref{eqn:phi-blk-b}, and the invertibility of $\Phi_{11}(k, 0)$, it follows that \eqref{eqn:pi-k-alt} holds. \hfill
\end{proof}

The second set of necessary and sufficient conditions for Property~\ref{pty:pi-pos} is now stated as follows.

\begin{proposition} \label{prp:U-pos}
Under Assumption~\ref{asm:A-inv}, the following statements are equivalent.
\begin{enumerate}[label=\roman*)]
\item Property~\ref{pty:pi-pos} holds. 
\item For all $k = 1, 2, \ldots, N-1$, $U_{0} \succ 0$ and $U_{k}/U_{k-1} \succ 0$. 
\item For all $k = 0, 1, \ldots, N-1$, $U_{k} \succ 0$. 
\item $U_{N-1} \succ 0$.
\item For all $k = 0, 1, \ldots, N$, the eigenvalues of $I_{n} + \Phi_{11}(k, 0)^{-1} \Phi_{12}(k, 0) \Pi_{0}$ are positive. 
\item The eigenvalues of $I_{n} + \Phi_{11}(N, 0)^{-1} \Phi_{12}(N, 0) \Pi_{0}$ are positive. 
\end{enumerate}
\end{proposition}

\begin{proof}
First, we claim that 
$U_{0} = I_{p} - B_{0}\t A_{0}^{- \mbox{\tiny\sf T}} \big(\Pi_{0} - Q_{0}\big) A_{0}^{-1} B_{0}$, 
and, for all $k = 1, 2, \ldots, N-1$, the Schur complement of the block $U_{k-1}$ in the matrix $U_{k}$ is 
$U_{k}/U_{k-1} = I_{p} - B_{k}\t A_{k}^{- \mbox{\tiny\sf T}} \big(\Pi_{k} - Q_{k}\big) A_{k}^{-1} B_{k}$. 
Then, the equivalence of $i)$ and $ii)$ follows from \eqref{eqn:pi-pos} and the above claim. 
The equivalence of $ii)$, $iii)$, and $iv)$ follows from Sylvester's criterion~\cite{horn2012matrix}.

Next, for $k = 0$, $v)$ holds. 
In light of \eqref{eqn:phi-mono}, for all $k = 1, 2, \ldots, N$, we have 
$- \Phi_{11}(k, 0)^{-1} \Phi_{12}(k, 0) 
= \sum_{i=0}^{k-1} X_{i} 
= \sum_{i=0}^{k-1} L_{i} T_{i}^{-1} L_{i}\t$. 
It follows that, 
\begin{multline*}
- \Phi_{11}(k+1, 0)^{-1} \Phi_{12}(k+1, 0) 
= 
\\* 
\begin{bmatrix}
L_{k} & \cdots & L_{0}
\end{bmatrix}
\begin{bmatrix}
T_{k}^{-1} &    0   &   0   \\
      0      & \ddots &   0   \\
      0      &    0   & T_{0}^{-1} 
\end{bmatrix}
\begin{bmatrix}
L_{k}\t \\
\vdots \\ 
L_{0}\t
\end{bmatrix}. 
\end{multline*}
Hence, for all $k = 0, 1, \ldots, N-1$, all eigenvalues of $I_{n} + \Phi_{11}(k+1, 0)^{-1} \Phi_{12}(k+1, 0) \Pi_{0}$ are positive if and only if $U_{k} \succ 0$.

Lastly, we show the claim. 
For notational simplicity, below we will use the notation $\Phi_{ij}^{k,0} \triangleq \Phi_{ij}(k, 0)$, where $i, j \in \{1, 2\}$. 
We first check that 
$U_{0} 
= T_{0} - L_{0}\t \Pi_{0} L_{0} 
= I_{p} - B_{0}\t A_{0}^{- \mbox{\tiny\sf T}} \big(\Pi_{0} - Q_{0}\big) A_{0}^{-1} B_{0}$. 
Then, we compute that 
\begin{align*}
&U_{k-1}^{-1} = 
\begin{bmatrix}
T_{k-1}^{-1} &    0   &   0   \\
      0      & \ddots &   0   \\
      0      &    0   & T_{0}^{-1} 
\end{bmatrix}
\hspace{-1mm} + \hspace{-1mm}
\begin{bmatrix}
T_{k-1}^{-1} &    0   &   0   \\
      0      & \ddots &   0   \\
      0      &    0   & T_{0}^{-1} 
\end{bmatrix}
\hspace{-2mm}
\begin{bmatrix}
L_{k-1}\t \\
\vdots \\ 
L_{0}\t
\end{bmatrix}
\hspace{-1mm}
\Pi_{0} 
\\* 
&\hspace{-0.5mm} \times \hspace{-1mm}
\Big( \hspace{-0.8mm} I_{n} \hspace{-1mm} + \hspace{-1mm} \big(\Phi_{11}^{k,0}\big)^{-1} \hspace{-0.6mm} \Phi_{12}^{k,0} \Pi_{0} \hspace{-0.8mm} \Big)^{-1} 
\hspace{-0.6mm}
\Big[
\hspace{-0.2mm}
L_{k-1} ~ \cdots ~ L_{0}
\hspace{-0.2mm}
\Big]
\hspace{-1.6mm}
\begin{bmatrix}
\hspace{-0.2mm}
T_{k-1}^{-1} &    0   &   0   \\
      0      & \ddots &   0   \\
      0      &    0   & T_{0}^{-1} 
\hspace{-0.5mm}
\end{bmatrix} \hspace{-1.5mm} .
\end{align*}
It follows that the desired Schur complement is 
\begin{equation*}
U_{k}/U_{k-1} = 
I_{p} - B_{k}\t A_{k}^{- \mbox{\tiny\sf T}} \big(\Pi_{k} - Q_{k}\big) A_{k}^{-1} B_{k},
\end{equation*}
where we have used \eqref{eqn:pi-k-alt}. \hfill
\end{proof}

With the addition of Assumption~\ref{asm:G-inv}, we can derive the following necessary and sufficient condition for Property~\ref{pty:pi-pos}.

\begin{corollary} \label{cor:pi-0}
Under Assumptions~\ref{asm:A-inv} and \ref{asm:G-inv}, Property~\ref{pty:pi-pos} holds if and only if 
$
\Pi_{0} \prec - \Phi_{12}(N, 0)^{-1} \Phi_{11}(N, 0)
$. 
\end{corollary}

\begin{proof}
From Corollary~\ref{cor:phi-12-inv}, $- \Phi_{11}(N, 0)^{-1} \Phi_{12}(N, 0) \succ 0$. 
Statement $vi)$ in Proposition~\ref{prp:U-pos} is equivalent to 
$I_{n} - \big(- \Phi_{11}(N, 0)^{-1} \Phi_{12}(N, 0)\big)^{\frac{1}{2}} \Pi_{0} \big(- \Phi_{11}(N, 0)^{-1} \Phi_{12}(N, 0)\big)^{\frac{1}{2}} \succ 0$, 
which is equivalent to 
$\big(- \Phi_{11}(N, 0)^{-1} \Phi_{12}(N, 0)\big)^{-1} - \Pi_{0} \succ 0$. 
In view of Proposition~\ref{prp:U-pos}, Property~\ref{pty:pi-pos} holds if and only if $\Pi_{0} \prec - \Phi_{12}(N, 0)^{-1} \Phi_{11}(N, 0)$. \hfill
\end{proof}


\section*{Appendix~E}

In this appendix, we provide two auxiliary results for proving Proposition~\ref{prp:map-inv-Q} and Theorem~\ref{thm:mean-covar}, respectively. 
The following matrix inequality is used in the proof of Proposition~\ref{prp:map-inv-Q}.

\begin{lemma} \label{lem:phi-11-12-inv-mono}
Under Assumptions~\ref{asm:A-inv} and \ref{asm:G-inv}, suppose $\Pi_{0} \prec - \Phi_{12}(N, 0)^{-1} \Phi_{11}(N, 0)$. 
Then, for all $N \geq k \geq i \geq 0$, 
\begin{equation} \label{ineq:phi-11-12-inv-mono}
- W_{k,0} \succeq - W_{i,0} \succeq 0.
\end{equation}
\end{lemma}

\begin{proof}
For notational simplicity, we define the matrix 
$\Theta_{k} \triangleq - \Phi_{11}(k, 0)^{-1} \Phi_{12}(k, 0) \succeq 0$. 
For $i = 0$, we can check that $\big(I_{n} - \Theta_{0} \Pi_{0}\big)^{-1} \Theta_{0} = 0$. 
Next, assume $i \geq 1$. 
It follows from \eqref{eqn:phi-mono} that $\Theta_{k} \succeq \Theta_{i} \succeq 0$. 
This implies that $\ker \Theta_{k} \subseteq \ker \Theta_{i}$. 
If $\Theta_{k} \succeq \Theta_{i} \succ 0$, we have $\big(\Theta_{k}^{-1} - \Pi_{0}\big)^{-1} \succeq \big(\Theta_{i} - \Pi_{0}\big)^{-1} \succ 0$. 
Hence, \eqref{ineq:phi-11-12-inv-mono} holds.

Without loss of generality, assume that $\Theta_{k}$ and $\Theta_{i}$ are singular and $\ker \Theta_{k} \subsetneq \ker \Theta_{i}$. 
The case when $\Theta_{k} \succ 0$ or when $\ker \Theta_{k} = \ker \Theta_{i}$ can be proved in a similar way. 
Since $\ker \Theta_{k} \subseteq \ker \Theta_{i}$, there exists an $n \times n$ orthogonal matrix $U$ such that\footnote{The entries denoted by $\times$ do not affect the logic of the proof and thus are hidden.}
\begin{align*}
\Theta_{k} &= 
U 
\begin{bmatrix}
\bar{\Theta}_{k} & 0 \\
0 & 0 
\end{bmatrix}
U\t,
~~~~~ 
\Theta_{i} = 
U 
\begin{bmatrix}
\bar{\Theta}_{i} & 0 \\
0 & 0 
\end{bmatrix}
U\t,
\\ 
\Theta_{N}^{-1} &= 
U 
\begin{bmatrix}
\bar{\Theta}_{N}^{-1} & \times \\
\times & \times 
\end{bmatrix}
U\t,
~~~ 
\Pi_{0} = 
U 
\begin{bmatrix}
\bar{\Pi}_{0} & \times \\
\times & \times 
\end{bmatrix}
U\t,
\end{align*}
where $\bar{\Theta}_{k} \succ 0$, $\bar{\Theta}_{i} \succeq 0$, $\bar{\Theta}_{N} \succ 0$, and $\bar{\Pi}_{0}$ are $r \times r$ matrices. 
Since $\Theta_{N}^{-1} \succ \Pi_{0}$, we have $\bar{\Theta}_{N}^{-1} \succ \bar{\Pi}_{0}$. 
Since $\Theta_{N} \succeq \Theta_{k} \succeq 0$, Lemma~\ref{lem:monotone} implies that $\bar{\Theta}_{N} \succeq \bar{\Theta}_{k} \succ 0$. 
It follows that $\bar{\Theta}_{k}^{-1} \succeq \bar{\Theta}_{N}^{-1} \succ \bar{\Pi}_{0}$. 
In view of \eqref{ineq:phi-11-12-inv-mono}, it suffices to show that 
\begin{equation} \label{ineq:phi-11-12-inv-mono-r}
\Big(\bar{\Theta}_{k}^{-1} - \bar{\Pi}_{0}\Big)^{-1} 
\succeq 
\Big(I_{r} - \bar{\Theta}_{i} \bar{\Pi}_{0}\Big)^{-1} \bar{\Theta}_{i} 
\succeq 0. 
\end{equation}

Since $\ker \Theta_{k} \neq \ker \Theta_{i}$, there exists an $r \times r$ orthogonal matrix $V$ such that
\begin{align*}
\bar{\Theta}_{k}^{-1} &= 
V 
\begin{bmatrix}
\hat{\Theta}_{k}^{-1} & \times \\
\times & \times 
\end{bmatrix}
V\t, 
\quad
\bar{\Theta}_{i} = 
V 
\begin{bmatrix}
\hat{\Theta}_{i} & 0 \\
0 & 0 
\end{bmatrix}
V\t, 
\\ 
\bar{\Pi}_{0} &= 
V
\begin{bmatrix}
\hat{\Pi}_{0} & \times \\
\times & \times 
\end{bmatrix}
V\t,
\end{align*}
where $\hat{\Theta}_{k} \succ 0$, $\hat{\Theta}_{i} \succ 0$, and $\hat{\Pi}_{0}$ are $\rho \times \rho$ matrices. 
Since $\bar{\Theta}_{k} \succeq \bar{\Theta}_{i} \succeq 0$, it follows from Lemma~\ref{lem:monotone} that $\hat{\Theta}_{k} \succeq \hat{\Theta}_{i} \succ 0$. 
It follows that $\hat{\Theta}_{i}^{-1} \succeq \hat{\Theta}_{k}^{-1} \succ \hat{\Pi}_{0}$. 
To this end, in view of \eqref{ineq:phi-11-12-inv-mono-r}, it suffices to show that 
\begin{equation*}
\begin{bmatrix}
\hat{\Theta}_{k}^{-1} - \hat{\Pi}_{0} & \times \\
\times & \times 
\end{bmatrix}^{-1}
\succeq 
\begin{bmatrix}
\Big(\hat{\Theta}_{i}^{-1} - \hat{\Pi}_{0}\Big)^{-1} & 0 \\
0 & 0 
\end{bmatrix} 
\succeq 0. 
\end{equation*}
Since $\bar{\Theta}_{k}^{-1} - \bar{\Pi}_{0} \succ 0$, in light of Lemma~\ref{lem:monotone}, it suffices to show that 
$
(\hat{\Theta}_{k}^{-1} - \hat{\Pi}_{0})^{-1} \succeq (\hat{\Theta}_{i}^{-1} - \hat{\Pi}_{0})^{-1} \succ 0
$. 
Hence, it is sufficient to show that 
$\hat{\Theta}_{i}^{-1} - \hat{\Pi}_{0} \succeq 
\hat{\Theta}_{k}^{-1} - \hat{\Pi}_{0} \succ 0$, 
which is a direct result of the fact that $\hat{\Theta}_{i}^{-1} \succeq \hat{\Theta}_{k}^{-1} \succ \hat{\Pi}_{0}$. \hfill
\end{proof}


The fact below is used to prove Theorem~\ref{thm:mean-covar}.

\begin{lemma} \label{lem:gramian-bar}
If, for all $k \in \mathbb{Z}$, $A_{k}$ is invertible, then, for all $k, s \in \mathbb{Z}$, 
\begin{equation} \label{eqn:gramian-bar}
\bar{G}(k, s) = - \Phi_{12}(k, s) \Phi_{\bar{A}}(k, s)\t. 
\end{equation}
\end{lemma}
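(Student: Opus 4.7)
The plan is to prove the identity by induction on $|k-s|$, separating the forward ($k>s$) and backward ($k<s$) regimes. The base case $k=s$ is immediate since $\bar{G}(s,s)=0$ and $\Phi_{12}(s,s)=0$. For the forward induction, I would use the recursion $\bar{G}(k+1,s) = \bar{A}_k\bar{G}(k,s)\bar{A}_k\t + \bar{B}_k\bar{B}_k\t$ together with $\Phi_{\bar{A}}(k+1,s) = \bar{A}_k\Phi_{\bar{A}}(k,s)$. Substituting the inductive hypothesis reduces the inductive step to the single identity
\begin{equation*}
\bar{B}_k\bar{B}_k\t = \bigl(\bar{A}_k\Phi_{12}(k,s) - \Phi_{12}(k+1,s)\bigr)\Phi_{\bar{A}}(k,s)\t\bar{A}_k\t.
\end{equation*}

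Next I would expand $\Phi_{12}(k+1,s)$ via $\Phi_M(k+1,s) = M_k\Phi_M(k,s)$ using the blocks in \eqref{def:M}, and apply the Riccati equation \eqref{eqn:pi-Q} in the rearranged form $\bar{A}_k = A_k + B_kB_k\t A_k^{-\mbox{\tiny\sf T}}(Q_k - \Pi_k)$. This yields
\begin{equation*}
\bar{A}_k\Phi_{12}(k,s) - \Phi_{12}(k+1,s) = B_kB_k\t A_k^{-\mbox{\tiny\sf T}}\bigl(\Phi_{22}(k,s) - \Pi_k\Phi_{12}(k,s)\bigr).
\end{equation*}
The heart of the argument is then the auxiliary identity
\begin{equation*}
\bigl(\Phi_{22}(k,s) - \Pi_k\Phi_{12}(k,s)\bigr)\Phi_{\bar{A}}(k,s)\t = I_n.
\end{equation*}
I would establish this by first commuting $\Phi_{12}(k,s)\Phi_{\bar{A}}(k,s)\t = \Phi_{\bar{A}}(k,s)\Phi_{12}(k,s)\t$ (a consequence of \eqref{eqn:phi-blk-c} and the symmetry of $\Pi_s$), then substituting $\Pi_k\Phi_{\bar{A}}(k,s) = \Phi_{21}(k,s) + \Phi_{22}(k,s)\Pi_s$ via Corollary~\ref{cor:pi-Q-s}, and finally invoking the symplectic identity $\Phi_{22}(k,s)\Phi_{11}(k,s)\t - \Phi_{21}(k,s)\Phi_{12}(k,s)\t = I_n$ from \eqref{eqn:phi-blk-f}; all $\Pi_s$-dependent terms cancel.

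With this identity in hand, the right-hand side of the reduction collapses to $B_kB_k\t A_k^{-\mbox{\tiny\sf T}}\bar{A}_k\t = B_kB_k\t(I_n + \Pi_{k+1}B_kB_k\t)^{-1}$. Applying the push-through formula $(I_n + B_kB_k\t\Pi_{k+1})^{-1}B_kB_k\t = B_k(I_p + B_k\t\Pi_{k+1}B_k)^{-1}B_k\t$ then matches this with $\bar{B}_k\bar{B}_k\t$, completing the forward induction.

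For the backward case $k<s$, noting that each $\bar{A}_\ell$ is invertible under Assumption~\ref{asm:A-inv} and Property~\ref{pty:pi-pos}, a direct manipulation of the definition of $\bar{G}$ gives $\bar{G}(k,s) = -\Phi_{\bar{A}}(s,k)^{-1}\bar{G}(s,k)\Phi_{\bar{A}}(s,k)^{-\mbox{\tiny\sf T}}$. Substituting the already-proved forward identity for $\bar{G}(s,k)$, and then using $\Phi_{12}(s,k) = -\Phi_{12}(k,s)\t$ from Lemma~\ref{lem:phi-12-21} together with the symmetry of $\Phi_{12}(k,s)\Phi_{\bar{A}}(k,s)\t$ noted above, delivers the claim in this regime. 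The principal obstacle is the key algebraic identity in the second paragraph, which is precisely the point at which the Riccati structure (via Corollary~\ref{cor:pi-Q-s}) must be reconciled with the symplectic symmetries of $\Phi_M$ (via Lemma~\ref{lem:phi-eqn}).
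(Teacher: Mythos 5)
Your proof is correct and follows essentially the same route as the paper's: induction on the Gramian recursion $\bar{G}(k+1,s)=\bar{A}_k\bar{G}(k,s)\bar{A}_k\t+\bar{B}_k\bar{B}_k\t$, reduction to an algebraic identity among the blocks of $\Phi_M$ via the representation $\Phi_{\bar{A}}(k,s)=\Phi_{11}(k,s)+\Phi_{12}(k,s)\Pi_{s}$ and the symplectic relations of Lemma~\ref{lem:phi-eqn}, closed off by the push-through/Woodbury identity. The only cosmetic differences are that you expand $\Phi_{12}(k+1,s)$ forward via $M_k$ and isolate the pivot identity $\big(\Phi_{22}(k,s)-\Pi_k\Phi_{12}(k,s)\big)\Phi_{\bar{A}}(k,s)\t=I_n$, where the paper expands $\Phi_{12}(k,s)$ backward via $M_k^{-1}$, and that you treat $k<s$ by reflecting $\bar{G}(k,s)$ onto $\bar{G}(s,k)$, where the paper runs the same recursion in both directions.
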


\begin{proof}
The proof is by induction on $k$. 
For $k = s$, it is not difficult to see that \eqref{eqn:gramian-bar} holds.

Now assume that for $k = j$ \eqref{eqn:gramian-bar} holds. 
It can be verified that $\bar{G}(j+1, s) = \bar{B}_{j} \bar{B}_{j}\t + \bar{A}_{j} \bar{G}(j, s) \bar{A}_{j}\t$, regardless of $s \leq j$ or $s > j$. 
Thus, it suffices to show that 
\begin{equation*}
\bar{B}_{j} \bar{B}_{j}\t - \bar{A}_{j} \Phi_{12}(j, s) \Phi_{\bar{A}}(j, s)\t \bar{A}_{j}\t 
\hspace{-0.5mm} = \hspace{-0.5mm} 
- \Phi_{12}(j+1, s) \Phi_{\bar{A}}(j+1, s)\t \hspace{-0.1mm} . 
\end{equation*}
Since $\Phi_{M}(j, s) = M_{j}^{-1} \Phi_{M}(j+1, s)$, it follows from \eqref{eqn:M-inv}, by replacing $B_{k} B_{k}\t$ with $B_{k} R_{k}^{-1} B_{k}\t$, that $\Phi_{12}(j, s) = A_{j}^{-1} \Phi_{12}(j+1, s) + A_{j}^{-1} B_{j} R_{j}^{-1} B_{j}\t \Phi_{22}(j+1, s)$. 
Since $\Phi_{\bar{A}}(j, s)\t \bar{A}_{j}\t = \Phi_{\bar{A}}(j+1, s)\t$, in light of \eqref{def:A-bar} and \eqref{def:B-bar}, it suffices to show that 
$
B_{j} \big(R_{j} + B_{j}\t \Pi_{j+1} B_{j} \big)^{-1} B_{j}\t \Phi_{\bar{A}}(s, j+1)\t
= 
\big(I_{n} + B_{j} R_{j}^{-1} B_{j}\t \Pi_{j+1}\big)^{-1} \big(\Phi_{12}(j+1, s) +  B_{j} R_{j}^{-1} B_{j}\t \Phi_{22}(j+1, s)\big) - \Phi_{12}(j+1, s)
$. 
From Proposition~\ref{prp:phi-AB-pi-Q}, we have $\Phi_{\bar{A}}(s, j+1)\t = \Phi_{11}(s, j+1)\t + \Pi_{j+1} \Phi_{12}(s, j+1)\t$. 
From the Woodbury formula~\cite{horn2012matrix}, it follows that $\big(I_{n} + B_{j} R_{j}^{-1} B_{j}\t \Pi_{j+1}\big)^{-1} = I_{n} - B_{j} \big(R_{j} + B_{j}\t \Pi_{j+1} B_{j} \big)^{-1} B_{j}\t \Pi_{j+1}$. 
In view of \eqref{eqn:phi-12-equiv} and \eqref{eqn:phi-11-22}, it suffices to show that 
$
B_{j} \big(R_{j} + B_{j}\t \Pi_{j+1} B_{j} \big)^{-1} B_{j}\t \Phi_{11}(s, j+1)\t 
= 
\big(I_{n} + B_{j} R_{j}^{-1} B_{j}\t \Pi_{j+1}\big)^{-1} B_{j} R_{j}^{-1} B_{j}\t \Phi_{11}(s, j+1)\t
$. 
Since 
$
B_{j} \big(R_{j} + B_{j}\t \Pi_{j+1} B_{j} \big)^{-1} B_{j}\t - \big(I_{n} + B_{j} R_{j}^{-1} B_{j}\t \Pi_{j+1}\big)^{-1} B_{j} R_{j}^{-1} B_{j}\t 
= B_{j} R_{j}^{-1} B_{j}\t - B_{j} R_{j}^{-1} B_{j}\t 
= 0
$, 
the previous equation holds. 
Thus, for $k = j+1$, \eqref{eqn:gramian-bar} holds. \hfill
\end{proof}

\end{document}